\def\anon{0}
\def\lncs{0} 
\def\iacrcc{0}
\def\lncsORiacrcc{0} 
\title{Adaptive Robustness of Hypergrid Johnson-Lindenstrauss}
\author{
Andrej Bogdanov\thanks{University of Ottawa. \href{mailto:abogdano@uottawa.ca}{abogdano@uottawa.ca}. Work supported by an NSERC Discovery Grant.} \and
Alon Rosen\thanks{Bocconi University. \href{mailto:alon.rosen@unibocconi.it}{alon.rosen@unibocconi.it}. Work supported by European Research Council (ERC) under the EU’s Horizon 2020 research and innovation programme (Grant agreement No. 101019547) and Cariplo CRYPTONOMEX grant.} \and
Neekon Vafa\thanks{MIT. \href{mailto:nvafa@mit.edu}{nvafa@mit.edu}. Work supported in part by NSF DGE-2141064 and the grants of the fourth author. Part of this work was done while visiting Bocconi, supported by European Research Council (ERC) under the EU’s Horizon 2020 research and innovation programme (Grant agreement No. 101019547).}\and 
Vinod Vaikuntanathan\thanks{MIT. \href{mailto:vinodv@mit.edu}{vinodv@mit.edu}. Research supported in part by DARPA under Agreement Number HR00112020023, NSF CNS-2154149, a Simons Investigator award and a Ford Foundation Chair at MIT.}}
\author{Anonymous submission}
\institute{}
\date{}
\begin{document}

\maketitle

\begin{abstract}
Johnson and Lindenstrauss (Contemporary Mathematics, 1984) showed that for $n > m$, a scaled random projection $\mathbf{A}$ from $\R^n$ to $\R^m$ is an approximate isometry on any set $S$ of size at most exponential in $m$.  If $S$ is larger, however, its points can contract arbitrarily under $\mathbf{A}$. In particular, the hypergrid $([-B, B] \cap \mathbb{Z})^n$ is expected to contain a point that is contracted by a factor of $\kappa_{\mathsf{stat}} = \Theta(B)^{-1/\alpha}$, where $\alpha = m/n$.
 
We give evidence that finding such a point exhibits a statistical-computational gap precisely up to $\kappa_{\mathsf{comp}} = \widetilde{\Theta}(\sqrt{\alpha}/B)$.
On the algorithmic side, we design an online algorithm achieving $\kappa_{\mathsf{comp}}$, inspired by a discrepancy minimization algorithm of Bansal and Spencer (Random Structures \& Algorithms, 2020). On the hardness side, we show evidence via a multiple overlap gap property (mOGP), which in particular captures online algorithms; and a reduction-based lower bound, which shows hardness under standard worst-case lattice assumptions. 

As a cryptographic application, we show that the rounded Johnson-Lindenstrauss embedding is a robust property-preserving hash function (Boyle, Lavigne and Vaikuntanathan, TCC 2019) on the hypergrid for the Euclidean metric in the computationally hard regime.
Such hash functions compress data while preserving $\ell_2$ distances between inputs up to some distortion factor, with the guarantee that even knowing the hash function, no computationally bounded adversary can find any pair of points that violates the distortion bound.
\end{abstract}

\ifnum\iacrcc=0
\thispagestyle{empty}
\thispagestyle{empty}
\fi
\ifnum\lncs=0
 \newpage 
 \tableofcontents
 \thispagestyle{empty}
\fi
\newpage
\ifnum\iacrcc=0
\setcounter{page}{1}
\fi

\def\bbR{\mathbb{R}}

\section{Introduction}

The celebrated Johnson-Lindenstrauss (henceforth JL) lemma \cite{johnson1984extensions,DBLP:conf/stoc/IndykM98} gives us a powerful dimension-reduction mechanism for data.\footnote{The original JL work used a matrix whose rows are unit norm and orthogonal to each other. Indyk and Motwani~\cite{DBLP:conf/stoc/IndykM98} observed that a Gaussian matrix suffices.} The JL lemma states that for all fixed (small) finite sets $S \subseteq \R^n$, for a random i.i.d. Gaussian matrix $\matA \sim \Norm(0,1)^{m \times n}$, the linear map $\frac{1}{\sqrt{m}} \cdot \matA$ embeds $S$ into $\R^m$ in a way that approximately preserves all $\ell_2$ norms. More concretely, the guarantee that 
\[ \Pr_{\matA \sim \Norm(0, 1)^{m \times n}}\left[ \forall \vecx \in S,\;\norm{\matA \vecx}_2 \in (1\pm \epsilon)\cdot \sqrt{m} \cdot \norm{\vecx}_2 \right] \geq 2/3, \]
can be achieved when $m = \Omega(\log(|S|)/\epsilon^2)$. The JL lemma has seen a great deal of work in the mathematics and TCS literature, and has been extended in several directions including faster  and more space-efficient versions~\cite{DBLP:journals/siamcomp/AilonC09,DBLP:conf/soda/KaneN12,DBLP:conf/soda/CohenJN18,jain2022fast}, stronger guarantees~\cite{DBLP:conf/stoc/NarayananN19}, and a proof of optimality~\cite{DBLP:conf/icalp/LarsenN16,DBLP:conf/focs/LarsenN17}.

The statement of the JL lemma crucially relies on the fact the set $S$ is defined independently of the matrix $\matA$. For example, even considering only singleton sets $S = \{\vecx\}$, one can ask whether the order of quantifiers can be switched so that $\vecx$ can be chosen adaptively based on $\matA$, namely:
\[ \Pr_{\matA \sim \Norm(0, 1)^{m \times n}}\left[ \forall \vecx \in \R^n,\;\norm{\matA \vecx}_2 \approx \sqrt{m} \cdot \norm{\vecx}_2 \right] \geq^? 2/3. \]
However, one immediately observes that this is impossible in a very strong sense. For $n > m$, for any matrix $\matA$, one can always find a non-zero vector $\vecx \in \ker(\matA)$, making $\norm{\matA \vecx}_2 = 0$ while $\norm{\vecx}_2$ can be arbitrarily large.
Thus, an adaptive choice of the set of points, one that depends on the dimensionality reduction matrix, kills all nice guarantees that JL gave us. 

One does not have to look too far to see the plausibility of such a scenario. If the matrix $\matA$ is chosen once and for all  (such as in derandomized versions of JL~\cite{DBLP:conf/soda/EngebretsenIO02}), and made public, adversarial entities can choose a set of points that violates the correctness of the JL lemma. Less obviously, even if $\matA$ is not public and only very limited access to it is available, an adversary can reconstruct it and use it to mount the above attack, as shown first by Hardt and Woodruff~\cite{DBLP:conf/stoc/HardtW13,gribelyuk2025lifting}.
More generally, such {\em adaptive} attacks have been extensively in the context of data structures, streaming, property testing, especially in the last decade~\cite{DBLP:conf/stoc/MironovNS08,DBLP:journals/corr/abs-2502-05723,DBLP:conf/focs/GribelyukLWYZ24,DBLP:journals/algorithmica/AttiasCSS24,DBLP:journals/corr/abs-2411-06370,DBLP:conf/nips/NissimST23,DBLP:conf/icml/Cohen0NSSS22,DBLP:conf/stoc/BeimelKMNSS22,DBLP:conf/nips/HassidimKMMS20,beneliezer2023propertytestingonlineadversaries,DBLP:conf/stoc/AlonBDMNY21,DBLP:conf/pods/Ben-EliezerY20}. 
\subsection{The Contracting Hypergrid Vector Problem}

Faced with this pessimistic scenario, we ask whether we can recover the guarantees of the JL lemma if we constrain the set of points $S$, and restrict to {\em resource-bounded} adaptive adversaries. 

One way in which one could constrain $S$, is to zero-out the ``most significant bits'' of the coordinate vectors $\vecx$, i.e., to limit them to some $\ell_{\infty}$ ball. However, the above kernel strategy still works by rescaling $\vecx$ down to live in the ball.

Another option would be to zero-out the ``least significant bits'' of $\vecx$ by requiring $\vecx$ to live in a discrete set, e.g. $\vecx \in \Z^n$. But then, if we have no upper-bound on $\vecx$, we can scale up kernel vectors arbitrarily high so that they in fact live on the integer grid. 

Computationally interesting phenomena occur when combining both these constraints. More precisely, we will require that for all $i \in [n]$, $|x_i| \leq B$ {\em and} $x_i \in \Z$, or more concisely,
\[ \vecx \in \left( [-B, B] \cap \Z \right)^n\]
for some bound $B \in \N$, which could be as small as $1$ or polynomially large in $n$. Phrased another way, this puts a fixed bound on the precision of $\vecx$. Like the kernel examples above, one can ask whether contraction occurs for this {\em hypergrid variant of JL}. We can phrase the problem as follows.
\begin{definition}[Contracting Hypergrid Vector]\label{main-def-intro}
    For $n, m, B \in \N$ and $\kappa \in \R_{>0}$, we define the contracting hypergrid vector ($\problem$) problem with parameters $n, m, B$, and $\kappa$ as follows. Given as input $\matA \sim \Norm(0,1)^{m \times n}$, a valid solution is some $\vecx \in ([-B, B] \cap \Z)^n$ such that
\begin{equation*}\norm{\matA \vecx}_2 < \kappa \cdot \norm{\vecx}_2\cdot  \sqrt{m}.
    \end{equation*}
\end{definition}

Here, the parameter $\kappa$ quantifies the quality of the solution. A direct computation shows that any non-zero choice of $\vecx$ achieves $\kappa = \Theta(1)$ in expectation.
Statistically, the threshold is
\[ \kappa_{\mathsf{stat}} = \Theta\left( (2B+1)^{-n/m} \right).\]
That is, for $\kappa \gg \kappa_{\mathsf{stat}}$, solutions exist in expectation, and for $\kappa \ll \kappa_{\mathsf{stat}}$, they do not.

The story changes if we \emph{computationally bound} the task of finding $\vecx \in \left( [-B, B] \cap \Z \right)^n$ that violate JL. That is, we can consider $\problem$ as a computational problem to be solved by polynomial time algorithms. It is then only natural to ask what the computational complexity of $\problem$ is, and whether it exhibits a statistical-computational gap.

\subsection{Our Results}

If we let $\kappa_{\mathsf{comp}}$ denote the best-possible {\em efficiently achievable} $\kappa$, we show, in a sense to be made precise below, that
\[ \kappa_{\mathsf{comp}} = \widetilde{\Theta}\left(\frac{1}{B} \sqrt{\frac{m}{n}} \right), \]
where $\widetilde{\Theta}$ hides logarithmic terms in all parameters. It is useful to look at these values in terms of the \emph{aspect ratio} $\alpha$, defined as $\alpha := m/n < 1$. In this language, we have
\[ \kappa_{\mathsf{stat}} = \Theta\left( (2B+1)^{-1/\alpha} \right),\;\;\;\;\kappa_{\mathsf{comp}} = \widetilde{\Theta}\left(\frac{\sqrt{\alpha}}{B} \right). \]
To demonstrate how large the statistical-computational gap is, considering only $B=1$ gives a statistical bound that decays exponentially in $1/\alpha = n/m$ as opposed to computationally, where it decays polynomially. We illustrate with a phase diagram in \Cref{fig:phase-diagram}. 

To establish the above value of $\kappa_{\mathsf{comp}}$,
we give two algorithms for $\problem$. One algorithm is a simple variant of the kernel attack described above, while the other is an online algorithm inspired by a discrepancy minimization algorithm of Bansal and Spencer~\cite{BS20}.

We then give matching lower bounds. One is via the multiple overlap gap property, which in particular captures online algorithms. The other is a reduction-based lower bound which shows hardness under standard worst-case lattice assumptions. 

Finally, we show a positive use of the statistical-computational gap by giving a cryptographic application: a construction of \emph{robust property preserving hash functions} for the Euclidean metric. These hash functions compress data while preserving $\ell_2$ distances between input points up to some distortion factor, with the guarantee that no computationally bounded adversary can find any points that violate the distortion bound (even though they must exist).

Robust property preserving hash functions imply collision resistance, demonstrating that the  statistical-computational gap for $\problem$ yields cryptography beyond the existence of one-way functions.

We elaborate on our algorithms, hardness, and applications results in the upcoming sections.

\begin{figure}[t]
\centering
\begin{tikzpicture}[rotate around x=-60]

\begin{axis}[
  clip = false,
  axis lines*=left,
  width=12cm, height=8cm,
  xtick={1, 2, 3, 4}, 
  xticklabels={,,,},
  ytick={0, 1},
  yticklabels={,},
  ztick={1,2,3,4,5,6,7,8,9,10,11,12,13,14,15,16,17,18},
  zticklabels={,,,,,,,,,,,,,,,,,},
  xlabel={$\ln B$},ylabel={$\ln \alpha^{-1}$}, zlabel={$\ln \kappa^{-1}$},
]
\addplot3 [
domain = 1:4,
domain y = 0:1.5,
samples = 10,
samples y = 8,
surf,
fill=blue!50,
faceted color = lightgray] {.5 * y + x};
\addplot3 [
domain = 1:4,
domain y = 0:1.5,
samples = 10,
samples y = 8,
surf,
fill=red!50,
faceted color = red!20] {exp(y) * x};

\node [rotate=60] at (axis cs:1, .6, 10) {no solution};

\node [rotate=35] at (axis cs:4, 1.5, 10) {hard};

\node [rotate=27.5] at (axis cs:4.25, 1.5, 3.5) {easy};
\end{axis}
\end{tikzpicture}
\caption{Phase diagram of $\problem$
in the asymptotic regime (up to lower-order additive terms).  The blue and red boundaries are
$\ln \kappa^{-1} = \tfrac12 \ln \alpha^{-1} + \ln B$ and $\ln \kappa^{-1} = \alpha^{-1} \ln B$, respectively.}
\label{fig:phase-diagram}
\end{figure}
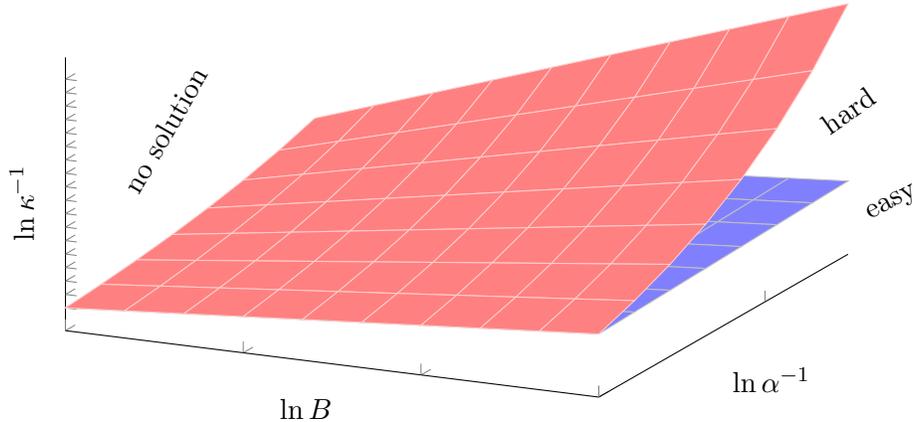

\subsection{Algorithms} 
We show two algorithms for $\problem$. The first uses the same kernel strategy as in the attack against the non-discretized version but is analyzed carefully; and the second is a novel variant of the Bansal-Spencer online discrepancy algorithm~\cite{BS20}. Here, ``online'' means that the algorithm receives every column of $\matA$ one at a time, and after seeing column $j \in [n]$, the algorithm must commit to some choice $x_j \in [-B, B] \cap \Z$.

\begin{theorem}[Informal Version of Theorem~\ref{thm:roundkernel}]\label{thm:informal-roundkernel}
For all $n > m$ and $\matA \sim \Norm(0,1)^{m \times n}$, scaling and rounding a random vector in $\ker(\matA)$ yields $\vecx \in ([-B, B] \cap \Z)^n$ such that $$\frac{\norm{\matA\vecx}_2}{\norm{\vecx}_2} = O\left( \frac{1}{B} \cdot \sqrt{m \log B}\right)$$ with probability $1-o(1)$. This directly solves $\problem$ where
\[ \kappa = O\left( \frac{\sqrt{\log B}}{B} \right) \]
with probability $1 - o(1)$.
\end{theorem}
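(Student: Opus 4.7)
The plan is to analyze the natural round-kernel procedure step by step.

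\textbf{Sampling and scaling.} First, I would sample a uniformly random unit vector $\vecv$ from $\ker(\matA)$, for instance by drawing $\vecg \sim \Norm(0,1)^n$ and normalizing its projection onto $\ker(\matA)$. Because the distribution of $\matA$ is rotationally invariant, $\ker(\matA)$ is a uniformly random $(n-m)$-dimensional subspace of $\R^n$, so marginally $\vecv$ is distributed as a uniform unit vector in $\R^n$. Standard sphere concentration then gives $\norm{\vecv}_\infty = O(\sqrt{(\log n)/n})$ with probability $1 - o(1)$. Set $\vecy = s\vecv$ for the largest scalar $s > 0$ with $\norm{\vecy}_\infty \leq B - \tfrac{1}{2}$, so that $\norm{\vecy}_2 = s = \Omega(B\sqrt{n/\log n})$.

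\textbf{Randomized rounding.} For the rounding step I would use an independent random shift $\vecu \sim \mathrm{Unif}([0,1))^n$ and set $\vecx = \lfloor \vecy + \vecu \rfloor$ coordinatewise. Then $\vecx \in ([-B,B] \cap \Z)^n$, and a short per-coordinate computation shows that, conditional on $(\matA, \vecy)$, the entries of $\vece := \vecx - \vecy$ are mutually independent, have mean zero, and satisfy $\mathrm{Var}(e_j \mid \matA, \vecy) \leq 1/4$.

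\textbf{Norm bound and ratio.} Since $\vecy \in \ker(\matA)$ we have $\matA\vecx = \matA\vece$, and by conditional independence
\[
\mathbb{E}_{\vecu}\!\left[\norm{\matA\vece}_2^2 \,\big|\, \matA, \vecy\right] \;=\; \sum_{i=1}^m \sum_{j=1}^n A_{ij}^2 \,\mathrm{Var}(e_j \mid \matA, \vecy) \;\leq\; \tfrac{1}{4}\norm{\matA}_F^2 \;=\; O(mn)
\]
with high probability over $\matA$; Markov plus a standard concentration step upgrades this to $\norm{\matA\vecx}_2 = O(\sqrt{mn})$ with probability $1-o(1)$. Combining with $\norm{\vecx}_2 \geq \norm{\vecy}_2 - \norm{\vece}_2 = \Omega(B\sqrt{n/\log n})$ (the rounding error $\norm{\vece}_2 = O(\sqrt{n})$ is negligible once $B$ is not too small) gives
\[
\frac{\norm{\matA\vecx}_2}{\norm{\vecx}_2} \;=\; O\!\left(\frac{\sqrt{m\log n}}{B}\right),
\]
and hence $\kappa = O(\sqrt{\log n}/B)$ in the sense of \Cref{main-def-intro}. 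The stated $\sqrt{\log B}$ matches this in the common regime $\log B \asymp \log n$, and can be obtained unconditionally by the same argument applied to the restriction of $\matA$ to $\Theta(B)$ columns, so that the sphere-concentration $\sqrt{\log(\cdot)}$ factor is driven by $B$ rather than $n$.

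\textbf{Main obstacle.} The delicate step is controlling the dependence between $\vece$ and $\matA$ that arises because $\vecy \in \ker(\matA)$. With deterministic rounding $\vece$ becomes a function of $\matA$, and the only obviously available estimate $\norm{\matA\vece}_2 \leq \norm{\matA}_{\mathrm{op}}\norm{\vece}_2 = O(\sqrt{n})\cdot O(\sqrt{n}) = O(n)$ loses a factor of $\sqrt{n/m}$ relative to the target. Randomized rounding is exactly what decouples $\vece$ from the rows of $\matA$, effectively replacing the operator-norm bound by a Frobenius-norm bound and recovering the correct $\sqrt{m}$ scaling; an alternative route would be to argue that the fractional parts of $\vecy$ are equidistributed enough that $\vece$ projects onto $\mathrm{row}(\matA)$ with only a $\sqrt{m/n}$ fraction of its norm, but the randomized-rounding analysis is substantially cleaner.
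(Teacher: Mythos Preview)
Your randomized-rounding decoupling is a valid and genuinely different route from the paper's. The paper uses \emph{deterministic} rounding (with clipping at $\pm B$) and handles the dependence between $\vece$ and $\matA$ in the opposite direction: it conditions on the scaled kernel vector $\vecx$ and exploits that, given $\vecx$ and $\langle \veca, \vecx\rangle = 0$, each row $\veca$ is still Gaussian on the hyperplane $\vecx^\perp$, so $\langle \veca, \{\vecx\}_B\rangle$ is a centered normal of variance at most $\|\{\vecx\}_B\|^2$ (\Cref{claim:kernel}). Both mechanisms convert the operator-norm loss you identify into a Frobenius-type bound; yours introduces auxiliary randomness, theirs spends the residual randomness of $\matA$.

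Where your argument falls short of the stated bound is the scaling, not the decoupling. By insisting on $\|\vecy\|_\infty \le B - \tfrac12$, you are forced to set $s \asymp B\sqrt{n/\log n}$, which produces the $\sqrt{\log n}$ factor. The paper instead scales so that each coordinate is marginally $\Norm(0, B^2/(4K\ln^+ B))$; a few coordinates overshoot $B$ and are clipped, but \Cref{claim:roundstats} shows their contribution to $\|\{\vecx\}_B\|^2$ is negligible, while the norm of $\lceil\vecx\rfloor_B$ is now $\Omega(B\sqrt{n/\log B})$. This is what buys $\sqrt{\log B}$ unconditionally. Your proposed patch of restricting to $\Theta(B)$ columns does not work in general: it requires the restricted matrix to still have a kernel, i.e., $B \gtrsim m$, and the theorem is claimed for all $B$ and all $n > m$. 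A correct fix is simply to transplant the paper's scaling-with-clipping into your randomized-rounding analysis; the mean-zero/variance-$\le 1/4$ computation survives on the unclipped coordinates, and the clipped coordinates are rare enough to be absorbed exactly as in \Cref{claim:roundstats}.
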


\begin{theorem}[Informal Version of Theorem~\ref{thm:cool}]\label{informal-cool}
There is an online algorithm for $\problem$ and a universal constant $K$ such that as long as $n \geq K m \log B$, the algorithm achieves $$\kappa = O\left( \frac{1}{B} \cdot \sqrt{\frac{m}{n}} \right)$$ except with probability at most $O(\log B) \cdot  2^{-\Omega(m)}$.
\end{theorem}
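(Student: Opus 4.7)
My plan is to design an online algorithm inspired by the Bansal--Spencer self-balancing walk~\cite{BS20}, generalized from the $\{\pm 1\}$ alphabet to the hypergrid $[-B, B] \cap \Z$. The algorithm maintains the running partial sum $\vecs_j := \matA \vecx_{1:j} \in \R^m$ and, after observing column $\veca_j$, commits to some $x_j \in [-B, B] \cap \Z$. The choice is guided by an exponential potential $\Phi_j := \sum_{i=1}^m \cosh(\lambda s_{j,i})$ for a tuned parameter $\lambda$; its starting value is $\Phi_0 = m$, and a bound on $\Phi_n$ translates directly into a bound on $\|\vecs_n\|_2$ via $\max_i |s_{n,i}| \le \lambda^{-1}\log(2\Phi_n)$ together with the quadratic estimate $\|\vecs_n\|_2^2 \le 2(\Phi_n-m)/\lambda^2$ coming from $\cosh(u) \ge 1+u^2/2$.

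For the per-step rule, I would sample a magnitude $y_j$ from a distribution on $\{1, \ldots, B\}$ that is biased towards large values (so that $\|\vecx\|_2^2 = \sum_j y_j^2 = \Omega(nB^2)$ with high probability by Bernstein concentration), and then pick the sign $\epsilon_j \in \{\pm 1\}$ of $x_j = \epsilon_j y_j$ to minimize the updated potential. The identity $\cosh(a+b) = \cosh a\cosh b + \sinh a\sinh b$ decomposes the per-step change in $\Phi_j$ into a sign-symmetric ``growth'' term $T_+(y_j,\veca_j)$ and a sign-antisymmetric ``cancellation'' term $T_-(y_j,\veca_j)$, so the sign choice achieves net change at most $T_+ - |T_-|$. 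Taking conditional expectations over the Gaussian column, one has $\E T_+ = \Phi_{j-1}(e^{\lambda^2 y_j^2/2}-1)$ by the Gaussian MGF, and a Khintchine-type anti-concentration estimate gives $\E|T_-| \gtrsim \lambda y_j \sqrt{\sum_i \sinh^2(\lambda s_{j-1,i})}$. A Cauchy--Schwarz bound $\sum_i \sinh^2(\lambda s_{j-1,i}) \ge \Phi_{j-1}^2/m - m$ shows that, whenever the potential is a constant multiple above $m$, cancellation dominates growth provided $\lambda y_j = O(1/\sqrt m)$, yielding a conditional supermartingale for $(\Phi_j)$ above this threshold.

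To convert the supermartingale property into a high-probability statement I would apply Freedman's inequality to $(\Phi_j)$, with variance proxies supplied by the same Gaussian moment computations, concluding $\Phi_n = \mathrm{poly}(m)$ except with probability $2^{-\Omega(m)}$. Plugging this back through the potential-to-norm translation above, and combining with the independent concentration $\|\vecx\|_2 = \Omega(B\sqrt n)$, yields the claimed $\kappa = \widetilde O(\sqrt{m/n}/B)$. The main obstacle I anticipate is the joint calibration of $\lambda$ and the distribution of $y_j$: $y_j$ must typically be of order $B$ (so that the alphabet is exploited in $\|\vecx\|_2$) while keeping $\lambda y_j$ small enough that cancellation beats growth above the threshold. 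Threading this needle is what I expect to force the hypothesis $n \ge K m\log B$, with the algorithm effectively working across $O(\log B)$ geometric scales of $y_j$; each scale consumes $\Omega(m)$ fresh columns to concentrate, and each can fail with probability $2^{-\Omega(m)}$, which by a union bound gives the $O(\log B)\cdot 2^{-\Omega(m)}$ failure probability appearing in the statement.
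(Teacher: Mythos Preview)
Your overall architecture---an online Bansal--Spencer-style walk with a greedy sign choice, a potential to track the running sum, and a union bound over $O(\log B)$ phases---matches the paper's. But as written, the proposal has a real gap in how the $1/B$ factor is obtained.

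With i.i.d.\ magnitudes $y_j$ biased toward $\Theta(B)$ and a \emph{single fixed} $\lambda$, your own constraints force $\lambda y_j = O(1/\sqrt m)$, hence $\lambda = O(1/(B\sqrt m))$. Plugging this into your translation $\|\vecs_n\|_2^2 \le 2(\Phi_n - m)/\lambda^2$ with $\Phi_n = O(m)$ gives $\|\vecs_n\|_2 = O(Bm)$, not $O(m)$. Combined with $\|\vecx\|_2 = \Omega(B\sqrt n)$, this yields only $\kappa = O(\sqrt{m/n})$, the bound you would get from the pure $\{\pm 1\}$ algorithm, with no gain from the larger alphabet. The issue is structural: once you sample large magnitudes throughout, the stationary level of $\|\vecs_j\|_2$ is $\Theta(Bm)$, and no amount of potential bookkeeping changes that.

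The paper resolves exactly this by replacing random magnitudes with a deterministic \emph{cooling schedule}: use $x_t \in \{\pm B\}$ for the first $n - O(m\log B)$ steps (so $\|\vecx\|_2 \ge (B/2)\sqrt n$ is guaranteed), then geometrically halve the temperature $b$ over $\log B$ phases of $Km$ steps each. At temperature $b$ the running norm contracts to $O(bm)$ within $Km$ steps, so after the final $b=1$ phase one has $\|\matA\vecx\|_2 = O(m)$. Your last paragraph gestures at ``$O(\log B)$ geometric scales,'' which is the right instinct, but the scales must be \emph{scheduled in time}, not sampled i.i.d.; and correspondingly your potential parameter $\lambda$ would have to change with the phase (or, as the paper does, drop the $\cosh$ potential entirely and track $\|\vecs_j\|_2$ directly via the recurrence $L' = \sqrt{(L - b|N|)^2 + b^2\chi^2_{m-1}}$, which is both simpler and sharper for an $\ell_2$ objective).
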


Taken together, we have the following corollary, which applies for all ranges of $n > m$.

\begin{corollary}\label{alg-cor-in-intro}
    There is an algorithm for $\problem$ such that for all $n > m$ and all $B$, the algorithm achieves
    \[ \kappa = O\left(\frac{\log B}{B} \cdot \sqrt{\frac{m}{n}} \right) \]
    except with probability at most $O(\log B) \cdot  2^{-\Omega(m)} + o(1)$.
\end{corollary}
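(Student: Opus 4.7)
The plan is to combine the two theorems via a simple case analysis on whether $n$ exceeds the threshold $Km \log B$ where $K$ is the universal constant from Theorem~\ref{informal-cool}. The combined algorithm will first inspect $n$, $m$, and $B$, and then dispatch to either the online algorithm or the rounded-kernel algorithm. Both subroutines are already black boxes by assumption, so essentially no new analysis is needed beyond checking that the bounds on $\kappa$ and on the failure probabilities align in each regime.

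In the \emph{large-$n$} case where $n \geq K m \log B$, I would invoke Theorem~\ref{informal-cool} to obtain $\kappa = O\bigl(\sqrt{m/n}/B\bigr)$ with failure probability at most $O(\log B)\cdot 2^{-\Omega(m)}$. Since $\log B \geq 1$, this $\kappa$ is trivially bounded by $O\bigl(\log B / B \cdot \sqrt{m/n}\bigr)$, and the failure probability is already of the required form.

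In the \emph{small-$n$} case where $n < K m \log B$, I would instead invoke Theorem~\ref{thm:informal-roundkernel} to obtain $\kappa = O\bigl(\sqrt{\log B}/B\bigr)$ with failure probability $o(1)$. The key observation is that $n < K m \log B$ rearranges to $\sqrt{m/n} > 1/\sqrt{K \log B}$, and hence
\[ \frac{\log B}{B} \cdot \sqrt{\frac{m}{n}} \;>\; \frac{\sqrt{\log B}}{B \sqrt{K}}, \]
so the bound of Theorem~\ref{thm:informal-roundkernel} is absorbed into the target bound (up to the constant $\sqrt{K}$).

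The main obstacle, such as it is, is simply to identify the right case-split threshold; I suspect $n \geq K m \log B$ works precisely because this is the assumption under which Theorem~\ref{informal-cool} applies, and the matching lower bound $\sqrt{m/n} \gtrsim 1/\sqrt{\log B}$ in the complementary regime is exactly what allows the weaker kernel bound $O(\sqrt{\log B}/B)$ to be re-expressed in the unified form $O\bigl(\log B / B \cdot \sqrt{m/n}\bigr)$. The failure probability in the combined algorithm is the maximum of the two, namely $O(\log B)\cdot 2^{-\Omega(m)} + o(1)$, as claimed.
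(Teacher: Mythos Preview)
Your proposal is correct and follows essentially the same approach as the paper's proof, which is simply the case split on whether $n \geq Km\log B$ and dispatching to Theorem~\ref{informal-cool} or Theorem~\ref{thm:informal-roundkernel} accordingly. Your write-up is actually more detailed than the paper's two-line proof, since you spell out why the $O(\sqrt{\log B}/B)$ bound is absorbed into $O\bigl(\log B/B\cdot\sqrt{m/n}\bigr)$ in the small-$n$ regime.
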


\begin{proof}[Proof of \Cref{alg-cor-in-intro} given \Cref{informal-cool,thm:informal-roundkernel}]
    If $n \geq Km \log B$ for the constant $K$ given in \Cref{informal-cool}, apply \Cref{informal-cool}. Otherwise, apply \Cref{thm:informal-roundkernel}.
\end{proof}

\subsection{Hardness} We show two flavors of hardness. The first shows an overlap gap property~\cite{Gamarnik_2021} for the problem, giving evidence that a family of algorithms including local algorithms will fail to solve the problem,  and the second shows computational hardness under the worst-case hardness of lattice problems, using ideas from \cite{DBLP:vv25}.

The multiple overlap gap property ($r$OGP) forbids the existence of certain configurations of multiple solutions $\vecx_1, \dots, \vecx_r$ to a search problem.  It is viewed as an impediment for ``stable'' average-case algorithms in which changing a single input cannot have a large effect on the output.  In our setting, it posits that not all relative angles $\angle \vecx_i, \vecx_j$ can fall into some proscribed range $(\theta_-, \theta_+)$ that does not vanish with $n$.

\begin{theorem}[Informal Version of \Cref{thm:ogp}]\label{thm:informal-ogp}
Assuming $n \geq m$ and
\[\kappa \ll \frac{1}{B} \cdot  \frac{1}{\sqrt{\log(n/m)}} \cdot \sqrt{\frac{m}{n}},\] most instances of $\problem$ satisfy $r$OGP for sufficiently large $n$.
\end{theorem}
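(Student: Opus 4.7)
The plan is a first-moment (union-bound) argument. Fix a constant-width forbidden window $(\rho_-, \rho_+) \subseteq (0,1)$ of normalized inner products (equivalently, of angles $(\theta_-,\theta_+)$) and upper-bound
\[
\mathbb{E}_{\matA}\bigl[\#\{(\vecx_1,\dots,\vecx_r) \in (([-B,B]\cap\Z)^n)^r : \text{each }\vecx_i \text{ solves }\problem,\ \text{all } \rho_{ij} \in (\rho_-,\rho_+)\}\bigr],
\]
where $\rho_{ij} := \langle \vecx_i,\vecx_j\rangle/(\|\vecx_i\|_2\|\vecx_j\|_2)$. If this expectation is $o(1)$, then Markov's inequality implies that most instances $\matA$ admit no forbidden $r$-configuration of solutions, which is exactly the $r$OGP.

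\textbf{Step 1 (joint solution probability given a Gram matrix).} For an $r$-tuple with unit-normalized Gram matrix $G = (\rho_{ij})$, the matrix $\matA \widehat{V} \in \R^{m\times r}$ whose columns are $\matA \vecx_i/\|\vecx_i\|_2$ has $m$ i.i.d.\ rows distributed as $\Norm(0,G)$. Diagonalizing $G$ and running a Chernoff bound on the resulting quadratic form yields
\[
\Pr\bigl[\|\matA \vecx_i\|_2 < \kappa\sqrt{m}\,\|\vecx_i\|_2 \text{ for all } i\bigr] \;\leq\; (C_\rho\,\kappa)^{rm}
\]
for a constant $C_\rho$ depending only on the window $(\rho_-, \rho_+)$, provided the spectrum of $G$ stays bounded away from $0$ --- a condition ensured by requiring $\rho_+ < 1$ strictly.

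\textbf{Step 2 (counting hypergrid tuples with prescribed Gram matrix).} I count $r$-tuples in $([-B,B]\cap\Z)^n$ whose pairwise normalized inner products all lie in $(\rho_-,\rho_+)$ by building the tuple incrementally. Given $\vecx_1,\dots,\vecx_{i-1}$, the next vector $\vecx_i$ is confined to the hypergrid intersected with a slab of codimension $i-1$, of thickness $\delta := \rho_+-\rho_-$ per constraint. A Fourier / $\theta$-series estimate --- equivalently, a large-deviation bound for the inner-product statistics of uniform hypergrid samples --- gives a per-step count of $(2B+1)^n \cdot e^{-\Omega(\rho^2 n)}$ up to $\mathrm{poly}(1/\delta)$ factors. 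Iterating over $i$ yields a total tuple count on the order of $(2B+1)^{rn}\cdot e^{-\Theta(\binom{r}{2}n\rho^2)}\cdot \mathrm{poly}(r/\delta)$.

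\textbf{Step 3 (combining and optimizing; main obstacle).} Multiplying the bounds of Steps 1 and 2 and demanding that the product be $o(1)$, with $r \asymp \log(n/m)$ and $\delta, \rho_\pm$ chosen to be small positive constants, reduces to the condition
\[
\kappa \ll \frac{1}{B\sqrt{\log(n/m)}}\sqrt{\frac{m}{n}},
\]
matching the hypothesis; the $\sqrt{\log(n/m)}$ loss relative to $\kappa_{\mathsf{comp}}$ traces back to the $\mathrm{poly}(r)$ slack in the incremental counting, extracted through the final $r$-th root. The principal obstacle will be the hypergrid slab count in Step 2: one must sharply estimate the lattice points of $([-B,B]\cap\Z)^n$ inside a thin, high-codimension slab, with the correct joint dependence on the slab width $\delta$ and the codimension $i-1$, which requires a careful Fourier or Banaszczyk-type estimate tailored to the hypergrid rather than a generic lattice. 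Step 1's correlated-Gaussian calculation is more standard, but also requires keeping $G$ well-conditioned --- which is precisely what restricting to $\rho_+ < 1$ buys.
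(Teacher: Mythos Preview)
Your first-moment skeleton is right, but the execution has a genuine error in Step~2 that breaks the argument, and the parameter regime you target is not the one that works.

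\textbf{The counting error.} Your bound $(2B+1)^{rn}e^{-\Theta(\binom{r}{2}\rho^2 n)}$ treats the $i-1$ overlap constraints on $\vecx_i$ as independent. They are not: since $\vecx_1,\dots,\vecx_{i-1}$ already have pairwise overlap $\rho$, their normalized Gram matrix is $(1-\rho)I+\rho J$, and a short computation shows that for $\vecx$ uniform on the hypergrid, the large-deviation rate for the event $\{\langle\vecx,\vecx_j\rangle/\|\vecx\|\|\vecx_j\|\approx\rho\text{ for all }j\le k\}$ is
\[
\frac{\rho^2 n}{2}\cdot\mathbf 1^\top G_k^{-1}\mathbf 1 \;=\; \frac{\rho^2 n}{2}\cdot\frac{k}{1+(k-1)\rho},
\]
which saturates at $\rho n/2$ once $k\gtrsim 1/\rho$, rather than growing like $k\rho^2 n/2$. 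Hence the correct tuple count is only $(2B+1)^{rn}e^{-O(r\rho n)}$. (Indeed, explicit block constructions give $r$-tuples with all pairwise overlaps $\approx\rho$ for any $r\le n$, so your bound of $<1$ for $r\gtrsim\log(2B{+}1)/\rho^2$ is simply false.) Plugging the corrected count into Step~3 collapses the conclusion back to $\kappa\lesssim(2B+1)^{-1/\alpha}$, the statistical threshold.

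\textbf{What the paper does differently.} The paper places the forbidden window at overlap $1-\beta$ with $\beta=C\kappa^2$ and width $\beta/2r$ --- i.e., near~$1$, and \emph{coupled to $\kappa$}. This changes both steps. For counting, one bounds hypergrid points within a tiny angle of a fixed $\vecx_1$ (Claims~\ref{claim:angle}--\ref{claim:cover}), giving $\exp O(n\beta B^2\ln(1/\beta B^2))$ per replica after the first; under the hypothesis $(B\kappa)^2\log(1/B\kappa)\ll\alpha$ this is $e^{o(\alpha n)}$. For the probability, the Gram matrix is \emph{nearly degenerate} with determinant $\ge(\beta/2)^{r-1}$ (Claim~\ref{claim:determinant}); combining volume $(O(\kappa))^{r}$ with density $\det(G)^{-1/2}$ yields a per-row bound $O(\kappa/\sqrt\beta)^{r}=O(1)^{-r}$ once $\beta=C\kappa^2$. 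This is precisely the opposite of your Step~1 requirement that ``the spectrum of $G$ stays bounded away from $0$'': the argument \emph{exploits} the near-singularity. The entropy $(2B+1)^n$ from the first replica is then amortized away by taking $r$ large. Your constant-$\rho$ window cannot reproduce this trade-off.
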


In \Cref{thm:onlinehard}, we derive hardness of online algorithms in the same parameter regime from OGP hardness.  Our result is not fully general, in the sense that we assume the algorithm is committed to the approximate norm of the solution $\vecx$ (before seeing the input $\matA$).  Our online algorithm, as well as the ones of Bansal and Spencer~\cite{BS20}, satisfy this assumption.  We believe that some assumption of this type is necessary for an OGP-based argument to ensure stability. In general, we conjecture that no online algorithm can succeed when $\kappa$ is at most $(\sqrt{\pi/8} - o(1))\sqrt{\alpha}/B$ (\Cref{online threshold conjecture}).

The lattice-based lower bound applies to the parameter regime in which $\kappa$ vanishes as $n$ grows.

\begin{theorem}[Informal Version of \Cref{lattice-based-lower-bound}]\label{thm:informal-lattice-lb}
Assuming the polynomial hardness of standard worst-case lattice problems, for all $B, n \leq \poly(m)$, there does not exist any polynomial time algorithm for $\problem$ satisfying
\[ \kappa \ll \frac{1}{B} \cdot \frac{1}{\sqrt{n}}. \]
\end{theorem}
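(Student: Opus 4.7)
The plan is to give a reduction from a standard worst-case lattice problem (say $\mathrm{GapSVP}_\gamma$ at polynomial approximation factor $\gamma$) to $\problem$ in the regime $\kappa \ll 1/(B\sqrt{n})$, chaining together two stages. The first stage is the classical Ajtai/Micciancio--Regev reduction from worst-case $\mathrm{GapSVP}_\gamma$ to the average-case Short Integer Solution problem $\mathrm{SIS}_{q,\beta}$: given uniform $\matA' \in \Z_q^{m \times n}$, find short nonzero $\vecx \in \Z^n$ with $\matA'\vecx \equiv \vec 0 \pmod q$ and $\|\vecx\|_2 \le \beta$. The second stage, which is the substantive one, transports this average-case hardness to $\problem$ with an i.i.d.\ Gaussian instance via the distribution-matching machinery of~\cite{DBLP:vv25}.

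For the second stage, the approach is to sample a Gaussian matrix $\matA \sim \Norm(0,1)^{m \times n}$ jointly with $\matA'$ so that the two are suitably coupled (for example, $\matA' \equiv \lfloor q\matA \rceil \pmod q$, with a smoothing convolution ensuring that the marginal of $\matA$ is the true Gaussian and the marginal of $\matA'$ is truly uniform over $\Z_q^{m \times n}$). An adversary for $\problem$ on input $\matA$ returns $\vecx \in ([-B,B] \cap \Z)^n$ with $\|\matA\vecx\|_2 < \kappa \|\vecx\|_2 \sqrt{m} \le \kappa B\sqrt{mn}$. Under $\kappa \ll 1/(B\sqrt{n})$ this bound is $\ll \sqrt{m}$, so $q\matA\vecx$ has $\ell_2$ norm $\ll q\sqrt{m}$. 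The rounding error contributes $\|(\matA' - q\matA)\vecx\|_\infty \le Bn/2$, and with $q$ chosen appropriately, $\matA'\vecx \bmod q$ reduces either to $\vec 0$ or to an approximate-SIS residue whose norm can be controlled. Since $\|\vecx\|_2 \le B\sqrt{n}$, the vector $\vecx$ is a valid $\mathrm{SIS}_{q,\beta}$ solution for $\beta = O(B\sqrt{n})$.

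The main obstacle is threading the parameters $q$, $\beta$, and the smoothing level simultaneously so that: (i) the coupling produces a distribution on $\matA$ that is statistically indistinguishable from a fresh Gaussian, so the $\problem$ adversary succeeds on the embedded input with essentially the same probability; (ii) the bound $\|\matA\vecx\|_2 \ll \sqrt{m}$ is tight enough to force $\matA'\vecx \bmod q$ into the valid SIS range after rounding, ruling out the trivial solution; and (iii) the parameters $(q,\beta)$ lie in the polynomial-approximation regime where Ajtai's reduction applies. The threshold $\kappa \ll 1/(B\sqrt{n})$ is exactly where step (ii) becomes feasible: any larger $\kappa$ would leave coordinates of $q\matA\vecx$ of size $\Theta(q)$, destroying the SIS structure. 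A minor technical point is that the adversary succeeds only with noticeable probability, so the coupling's statistical distance must be negligible, which is handled by a standard smoothing-lemma calculation; the assumption $B,n \le \poly(m)$ ensures that all intermediate SIS parameters remain polynomial in the lattice security parameter $m$.
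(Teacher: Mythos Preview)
Your approach via SIS has a genuine quantitative gap that prevents it from reaching the claimed threshold $\kappa \ll 1/(B\sqrt{n})$. The bound you obtain on the CHV solution is $\|\matA\vecx\|_2 \ll \sqrt{m}$, which only gives $\|q\matA\vecx\|_\infty \le \|q\matA\vecx\|_2 \ll q\sqrt{m}$. This is not $\ll q$: each coordinate of $q\matA\vecx$ can wrap around mod $q$ on the order of $\sqrt{m}$ times, so after reduction $\matA'\vecx \bmod q$ is essentially uncontrolled. Your fallback to an ``approximate-SIS residue whose norm can be controlled'' does not help either, since an $\ell_2$ residue bound of $o(q\sqrt{m})$ is trivially met by \emph{every} $\vecx$ (any vector in $[-q/2,q/2)^m$ has $\ell_2$ norm at most $q\sqrt{m}/2$). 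To make the SIS route work you would need $\|\matA\vecx\|_\infty \ll 1$, which forces $\kappa \ll 1/(B\sqrt{mn})$ --- a $\sqrt{m}$ factor worse than the stated theorem.

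The paper's proof avoids this loss by reducing instead from Continuous LWE. The distinguisher receives $(\matA,\vecb^\top)$ and tests whether $|\vecb^\top\vecx \bmod 1|$ is small, where $\vecx$ is the CHV solution. In the planted case $\vecb^\top\vecx \equiv \vecs^\top(\matA\vecx) + \vece^\top\vecx \pmod 1$; the key point is that $\vecs^\top(\matA\vecx)$ is a \emph{scalar} whose magnitude is controlled directly by $\|\matA\vecx\|_2$ (via Cauchy--Schwarz, since $\|\vecs\|_2 = \gamma$), so the full $\ell_2$ bound is exploited without any $\ell_\infty$-to-$\ell_2$ loss. This collapse to a single inner product mod $1$ is precisely what your coordinate-wise mod-$q$ argument lacks.
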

We emphasize that \Cref{thm:informal-lattice-lb} shows hardness of an \emph{average-case} problem (namely, $\problem$) assuming only the \emph{worst-case} hardness of lattice problems. We note however that this lower bound is quantitatively weaker than the OGP analysis, and is meaningful for a somewhat more restricted range of parameters. For example, when $B = \Theta(1)$, we need $n \gg m \log m$ for a solution to exist for $\kappa \approx 1/(B \sqrt{n})$. Nonetheless, this lower bound is still much higher than $\kappa_{\mathsf{stat}}$ for a wide range of parameters. We leave it as a fascinating open question as to whether this lower bound can be improved to match that of \Cref{thm:informal-ogp} or similar.

\subsection{Robust Locality Sensitive Hashing}
Semantic embeddings, which compress data points into vectors whose (Euclidean, say) distance approximates some semantic distance between data points, are a recent and very popular paradigm in machine learning~\cite{DBLP:journals/corr/abs-1301-3781,DBLP:journals/corr/abs-1908-10084,radford2021learningtransferablevisualmodels,muennighoff2023mtebmassivetextembedding,hugging}. 

Several recent works~\cite{DBLP:conf/emnlp/SongRS20,DBLP:conf/nips/TongJS23,DBLP:conf/uss/ZhangJBS24} point out the issue of adversarial robustness, under the term ``adversarial semantic collisions'', namely, inputs (text, images, and so on) that are semantically different yet hash to the same (or close) outputs. In light of this real-world phenomenon, the question of whether one can design semantically collision-resistant hash functions, ones for which semantic collisions may exist but are computationally hard to find, is practically important. 

While we do not solve this problem, we suggest a possible approach. First, we observe that the computational-statistical gap for $\problem$ allows us to design robust locality-sensitive hash functions~\cite{DBLP:conf/innovations/BoyleLV19} for the {\em Euclidean distance}. These compressing hash functions preserve the $\ell_2$ distance between input points up to some distortion factor $\xi$, assuming the points are chosen by a computationally bounded adversary. (See \Cref{def-of-our-hash-primitive} for a formal definition.) Secondly, if one could design a {\em collision-free, possibly dimension-expanding} embedding from ``semantic distance'' to Euclidean distance, we can compose the two to get a compressing semantically collision-resistant hash function. 

We do not pursue the second of these problems in this paper; rather, we restrict our attention to designing a robust locality-sensitive hash function for the Euclidean distance.

\begin{theorem}[Informal Version of~\Cref{main-construction}]
    Suppose that $\problem$ is hard for parameters $n,m,B$, and $\kappa$. Then, there is a robust locality sensitive hash function for the Euclidean norm mapping the domain $([0, B] \cap \Z)^n$ into $\R^m$ with distortion
    \[ \xi =  O\left(\frac{1}{\kappa} \sqrt{\frac{n}{m}} \right) \]
    that is compressing as long as
       \[ (B+1)^n \gg \left( \frac{Bn}{\kappa \sqrt{m}} \right)^m. \]
\end{theorem}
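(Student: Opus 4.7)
The plan is to instantiate $h$ as a \emph{rounded Johnson--Lindenstrauss embedding}. Let $\matA \sim \Norm(0,1)^{m\times n}$ be the (public) hash key, and define
\[ h(\vecx) \;:=\; \tau \cdot \bigl\lfloor \matA\vecx / \tau \bigr\rceil \]
componentwise, where the grid resolution $\tau$ is of order $\Theta(\kappa\sqrt{m})$. The central observation is that for inputs $\vecx, \vecy \in ([0,B]\cap\Z)^n$, their difference $\vecz := \vecx - \vecy$ lies in $([-B,B]\cap\Z)^n$. So any adversary that produces a pair $(\vecx,\vecy)$ whose hash outputs are closer than the distortion guarantee allows yields a short vector $\vecz$ that directly contradicts the assumed hardness of $\problem$.

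For the \emph{distortion} guarantee I would argue both sides. Rounding gives, deterministically,
\[ \bigl|\, \|h(\vecx)-h(\vecy)\|_2 - \|\matA\vecz\|_2 \,\bigr| \;\leq\; \tau\sqrt{m}. \]
The upper side follows from operator-norm concentration: with overwhelming probability over $\matA$, $\|\matA\|_{\mathrm{op}} = O(\sqrt{n})$, hence $\|h(\vecx)-h(\vecy)\|_2 \leq U \cdot \|\vecz\|_2$ with $U = O(\sqrt{n})$ (the rounding contribution $\tau\sqrt{m}$ is absorbed into $O(\sqrt{n})$ in the hard regime). For the lower side, suppose an efficient adversary outputs a pair violating $\|h(\vecx) - h(\vecy)\|_2 \geq L \cdot \|\vecz\|_2$ with $L = \tfrac{1}{2}\kappa\sqrt{m}$. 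Combining with the rounding inequality,
\[ \|\matA\vecz\|_2 \;<\; L \|\vecz\|_2 + \tau\sqrt{m} \;\leq\; \kappa\sqrt{m}\cdot\|\vecz\|_2, \]
provided $\|\vecz\|_2$ is not too small. Hence $\vecz$ is a valid $\problem$ solution, contradicting hardness. The resulting distortion ratio is $\xi = U/L = O\bigl((1/\kappa)\sqrt{n/m}\bigr)$, exactly as claimed.

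For \emph{compression}, I would bound the image size of $h$. Using $\|\matA\|_{\mathrm{op}} = O(\sqrt{n})$ and $\|\vecx\|_2 \leq B\sqrt{n}$ for $\vecx \in ([0,B]\cap\Z)^n$, every coordinate of $\matA\vecx$ lies in an interval of length $O(Bn)$, so the rounded output assumes at most $(O(Bn)/\tau)^m = (O(Bn/(\kappa\sqrt{m})))^m$ values. Thus $h$ is compressing precisely when $(B+1)^n$ dominates this quantity, matching the hypothesis. The main obstacle I anticipate is cleanly reconciling the formal LSH definition of \Cref{def-of-our-hash-primitive} with the degenerate regime in which $\|\vecz\|_2$ is too small for the reduction to close: for such $\vecz$ the columnwise Gaussian structure of $\matA$ already rules out a hash collision with overwhelming probability, but one must thread this edge case through the security game so that the adversary's advantage is preserved. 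Pinning down the constant in $\tau = \Theta(\kappa\sqrt{m})$ to simultaneously satisfy the compression, upper-bound, and lower-bound constraints (in particular, to ensure $\tau\sqrt{m} \lesssim \sqrt{n}$ in the assumed hardness regime) is the bookkeeping I expect to consume the bulk of the formal proof.
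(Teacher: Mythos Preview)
Your construction and high-level argument match the paper's: a rounded Johnson--Lindenstrauss map, non-contraction by reduction to $\problem$ via $\vecz = \vecx - \vecy$, non-expansion from the spectral norm of $\matA$, and compression by counting grid points in the image.

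The gap is in the grid resolution. With $\tau = \Theta(\kappa\sqrt{m})$, your reduction inequality $\norm{\matA\vecz}_2 < L\norm{\vecz}_2 + \tau\sqrt{m}$ only closes to $\norm{\matA\vecz}_2 < \kappa\sqrt{m}\,\norm{\vecz}_2$ when $\norm{\vecz}_2$ is of order $\sqrt{m}$, not merely $\norm{\vecz}_2 \geq 1$. The statistical patch you propose for the small-$\norm{\vecz}_2$ regime does not obviously close: the number of integer points in $([-B,B]\cap\Z)^n$ of norm $O(\sqrt{m})$ is $\exp(\Theta(m\log(n/m)))$, and for each such $\vecz$ the probability that $\matA\vecz$ lands in a single rounding cell is roughly $(\Theta(\kappa\sqrt{m}))^m$, so the union bound need not be negligible across the hardness regime. (Note also that the bad event is not only a collision but any contraction below $L\norm{\vecz}_2$.)

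The paper makes the opposite trade. It takes the finer grid $\tau = \Theta(\kappa)$ (in your normalization; the paper writes $\gamma = \kappa/(2\sqrt{m})$ after scaling $\matA$ by $1/\sqrt{m}$). Then the rounding slack is $\tau\sqrt{m} = \Theta(\kappa\sqrt{m})$, which is absorbed by $\tfrac12\kappa\sqrt{m}\,\norm{\vecz}_2$ already for $\norm{\vecz}_2 \geq 1$, so the edge case disappears entirely and the reduction is a single line. The apparent loss of a factor $\sqrt{m}$ in the compression bound is recovered not by coarsening the grid but by sharpening the count: the image lies in an $\ell_2$ \emph{ball} of radius $O(Bn)$, and the number of $\tau\Z^m$ points in that ball is at most $\bigl(O(Bn)/(\tau\sqrt{m})\bigr)^m = \bigl(O(Bn)/(\kappa\sqrt{m})\bigr)^m$. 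Your coordinate-wise box count loses exactly this $\sqrt{m}$ per coordinate. Switching to ball counting (and setting $\tau = \Theta(\kappa)$) fixes both the compression bound and the edge case simultaneously, and removes the bookkeeping you anticipated.
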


The construction is a variant of the Johnson-Lindenstrauss embedding where the output gets rounded to some grid $\gamma \Z^m$. Phrased another way, we show that the rounded Johnson-Lindenstrauss embedding is itself a robust locality sensitive hash function for the Euclidean norm where the domain is a hypergrid. We give the explicit construction in \Cref{fig:construction}.

\begin{corollary}[Informal]
    Suppose that the online algorithm in \Cref{informal-cool} is optimal for $\problem$. Then, there is a compressing robust locality sensitive hash function for the Euclidean norm mapping the domain $([0,B] \cap \Z)^n$ into $\R^m$ with the following parameters: 
    \begin{itemize}
        \item For $B = \Theta(1)$, as long as $n \gg m \log m$, the distortion $\xi$ can be as low as $\approx n/m$.
        \item For $B = n^{\Theta(1)}$, as long as $n \gg m$, the distortion $\xi$ can be as low as $\approx Bn/m$.
    \end{itemize}
\end{corollary}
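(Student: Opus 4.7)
The plan is to invoke the preceding theorem with the smallest $\kappa$ for which the assumed optimality of the online algorithm yields hardness, namely $\kappa = \widetilde{\Theta}(\tfrac{1}{B}\sqrt{m/n})$ from \Cref{informal-cool} (hiding polylogarithmic factors). Both the distortion and the compression threshold are then a direct substitution into the bounds from the preceding theorem, followed by asymptotic bookkeeping in the two regimes $B = \Theta(1)$ and $B = n^{\Theta(1)}$.

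For the distortion, substituting into $\xi = O(\tfrac{1}{\kappa}\sqrt{n/m})$ gives $\xi = \widetilde{O}(B\sqrt{n/m}\cdot\sqrt{n/m}) = \widetilde{O}(Bn/m)$, which specialises to $\xi \approx n/m$ when $B = \Theta(1)$ and to $\xi \approx Bn/m$ when $B = n^{\Theta(1)}$, matching both bullets of the corollary.

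The key step is verifying the compression condition $(B+1)^n \gg (Bn/(\kappa\sqrt{m}))^m$. With $\kappa \approx \tfrac{1}{B}\sqrt{m/n}$, the right-hand side is roughly $(B^2 n^{3/2}/m)^m$, so taking logarithms the requirement becomes
\[ n\log(B+1) \gg m\left(2\log B + \tfrac{3}{2}\log n - \log m\right). \]
When $B = \Theta(1)$ the left side is $\Theta(n)$ and the right side is $\Theta(m\log(n/m))$, so the condition reduces to $n \gg m\log(n/m)$, which is implied by $n \gg m\log m$ (since then $n/m \gg \log m$, making $\log(n/m) = O(\log m)$). When $B = n^c$ for a constant $c > 0$, both sides scale as a constant times $n\log n$ and $m\log n$ respectively, and the condition collapses to $n \gg m$ with a constant depending on $c$.

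The only real obstacle is carefully tracking the polylogarithmic factors hidden in $\widetilde{\Theta}(\kappa_{\mathsf{comp}})$ and in the ``just below threshold'' slack needed for hardness, as well as the dimension requirement $n \geq Km\log B$ of \Cref{informal-cool} itself. Because the corollary is stated informally with the qualifiers ``$\approx$'' and ``$\gg$'', these factors can be absorbed without changing the asymptotic picture, and no new calculation is required beyond the preceding theorem and \Cref{informal-cool}.
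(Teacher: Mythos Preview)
Your proposal is correct and is exactly the intended derivation: the paper states this corollary informally without proof, and it follows precisely by substituting $\kappa \approx \tfrac{1}{B}\sqrt{m/n}$ into the distortion and compression bounds of the preceding theorem, as you do. The one small imprecision is the implication ``$n/m \gg \log m$ makes $\log(n/m) = O(\log m)$,'' which is not valid in general but holds under the implicit assumption $n \leq \poly(m)$ used throughout; given the informal statement this is harmless.
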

We emphasize these distortion bounds are only meaningful up to $B \sqrt{n}$, as any distinct elements of $([0, B] \cap \Z)^n$ have $\ell_2$ distance between $1$ and $B \sqrt{n}$.

For a concrete application of these hash functions, consider (e.g., grayscale) images with pixel values in $\{0, 1, \cdots, 255\}$, where $\ell_2$ distance encodes some semantic information. Our hash function generically compresses such images (with $B = 255$) while preserving approximate $\ell_2$ norms of images and approximate $\ell_2$ distance between images (to a computationally bounded adversary). 

As another simple corollary of the construction, we note that our definition of robust locality sensitive hash functions implies collision resistance, whose existence is known to be stronger than the existence of one-way functions~\cite{DBLP:conf/eurocrypt/Simon98}. Therefore, in particular, the statistical-computational gap of $\problem$ yields cryptographic utility beyond the existence of one-way functions.

\subsection{Related Work}

\paragraph{$\problem$, $\mathsf{SBP}$ and $\mathsf{NBV}$.} Two problems closely related to $\problem$ are the Symmetric Binary Perceptron (SBP) problem, introduced by Aubin, Perkins, and Zdeborov\'{a}~\cite{Aubin_2019} and the Nearest Boolean Vector (NBV) problem, introduced by Mohanty, Raghavendra, and Xu~\cite{mohanty-raghavendra-xu}.  The differences from $\problem$ are as follows:
\begin{itemize}
    \item In SBP and NBV, the domain of $\vecx$ is fixed to $\{-1, 1\}^n$, as opposed to $([-B, B] \cap \Z)^n$.  In particular, the norm of $\vecx$ is fixed and $\vecx$ cannot have zero entries.
    \item SBP asks for a bound on $\norm{\matA \vecx}_{\infty}$, as opposed to $\norm{\matA \vecx}_2$.  (Therefore there is no normalizing $\sqrt{m}$ term.)
\end{itemize}

The search variant of SBP has been studied extensively for its statistical and computational thresholds~\cite{Aubin_2019,BDVLZ20,DBLP:conf/stoc/PerkinsX21,abbe2021proofcontiguityconjecturelognormal,DBLP:conf/stoc/AbbeLS22,DBLP:conf/focs/GamarnikK0X22,DBLP:conf/colt/GamarnikK0X23,Barbier_2024}.  Both the algorithm of Bansal and Spencer and the $r$OGP bound of Gamarnik, K{\i}z{\i}lda\u{g}, Perkins and Xu~\cite{DBLP:conf/focs/GamarnikK0X22} match  ours in the special case $B = 1$.  

The study of NBV has focused on refuting proximity to a random subspace in the unsatisfiable regime~\cite{ghosh-etal, potechin-turner-venkat-wein, bogdanov-rosen}.

\paragraph{Adaptively Robust $X$.} Adaptive robustness has been studied in many related contexts, perhaps stemming from the work of Mironov, Naor and Segev~\cite{DBLP:conf/stoc/MironovNS08} in the context of sketching algorithms. Many recent works have further explored this question in the context of sketching and streaming algorithms~\cite{DBLP:journals/corr/abs-2502-05723,DBLP:conf/focs/GribelyukLWYZ24,DBLP:journals/algorithmica/AttiasCSS24,DBLP:journals/corr/abs-2411-06370,DBLP:conf/icml/Cohen0NSSS22,DBLP:conf/stoc/BeimelKMNSS22,DBLP:conf/nips/HassidimKMMS20}, randomized data structures~\cite{naor2019bloomfiltersadversarialenvironments}, property testing~\cite{beneliezer2023propertytestingonlineadversaries}, online algorithms~\cite{DBLP:conf/stoc/AlonBDMNY21}, and sampling algorithms~\cite{DBLP:conf/pods/Ben-EliezerY20}.

Gribelyuk, Lin, Woodruff, Yu, and Zhou~\cite{gribelyuk2025lifting} give an efficient algorithm that, for any sufficiently compressing unknown linear embedding $\matA$, finds a hypergrid vector $\vecx$ that fails to embed almost-isometrically  under $\matA$ given only query access to $\matA$.  In particular, their result implies an easy regime for $\problem$ when $\kappa$ is close to one.  In contrast, our algorithms apply to much smaller values of $\kappa$ (and are thus stronger) but are specific for the Johnson-Lindenstrauss embedding and assume unrestricted access to $\matA$.

\section{Technical Overview}

\subsection{Algorithms}

We focus on the online algorithm. The algorithm iterates over $t \in \{1, 2, \dots, n\}$ in order, receives each column $\veca_t \sim \Norm(0,1)^m$ one at at time, and commits to $x_t \in [-B, B] \cap \Z$ before incrementing $t$.

At first, we describe a simpler variant of the algorithm that produces $x_t \in \{-1, 1\}$ (which is a stronger constraint, but will yield a weaker bound). This in particular ensures that $\norm{\vecx}_2$ is fixed at $\sqrt{n}$, so the goal of the algorithm is simply to minimize $\norm{\matA \vecx}_2$. At any given point in time $t \in [n]$, there exists a current state
\[\vecy_t =\sum_{i = 1}^{t-1} x_i \veca_i \in \R^m,\]
corresponding to the result of the choices it has made so far.
The algorithm must choose $\vecy_{t+1}$ as $\vecy_{t+1} = \vecy_t + b \veca_t$ for some $b \in \{\pm 1\}$. By rotational symmetry of the Gaussian, it is clear that the optimal online choice is to choose
\[ x_t = \argmin_{b \in \{-1 , 1\}} \norm{\vecy_t + b \veca_t}_2, \]
as the only thing that matters about $\vecy_{t+1}$ is its $\ell_2$ norm.

We proceed to analyze this algorithm. We can decompose $\veca_t$ into its perpendicular and parallel components with respect to $\vecy_t$. Explicitly, by spherical symmetry of the Gaussian, we have
\[\veca_t = a^{\parallel} \cdot \frac{\vecy_t}{\norm{\vecy_t}_2} + \veca^{\perp}, \]
where $a^{\parallel} \sim \Norm(0,1)$ and $\veca^{\perp}$ is a spherical multivariate Gaussian on the $(m-1)$-dimensional subspace perpendicular to the line spanned by $\vecy_t$. By the Pythagorean theorem, it follows that
\[ \norm{\vecy_t + b \veca_t}_2^2 = \norm*{\vecy_t + b a^{\parallel} \cdot \frac{\vecy_t}{\norm{\vecy_t}_2} + b\veca^{\perp}}_2^2 = \norm*{\vecy_t + \frac{b a^{\parallel}}{\norm{\vecy_t}_2} \cdot \vecy_t}_2^2 + \norm{\veca^{\perp}}_2^2 . \]
Choosing $b = -\sign(a^{\parallel})$ minimizes this quantity, which yields
\[ \min_{b \in \{-1, 1\}} \norm{\vecy_t + b \veca_t}_2^2 = \left(\norm{\vecy_t}_2 - \left|a^{\parallel}\right|\right)^2 + \norm{\veca^\perp}_2^2. \]
Letting $R_t = \norm{\vecy_t}_2$ and expanding, we get the stochastic recurrence
\[ R_{t+1}^2 = R_t^2 - 2 R_t |z_1| + \norm{\vecz}_2^2,\]
where $\vecz = (z_1, \dots, z_m) \sim \Norm(0,1)^m$. As the typical value of $|z_1|$ is $\Theta(1)$ and the typical value of $\norm{\vecz}_2^2$ is $\Theta(m)$, we observe that we get negative drift in $R_t$ whenever $R_t \gg m$, in the sense that $R_{t+1} \ll R_t$ with good probability. One can show that the fixed point of this recurrence is $R_t = \Theta(m)$, independent of $t$. This results in $\norm{\matA \vecx}_2 \leq O(m)$ and $\norm{\vecx}_2 = \sqrt{n}$, giving
\[ \kappa = \frac{\norm{\matA \vecx}_2}{\sqrt{m} \cdot \norm{\vecx}_2} \leq O\left( \sqrt{\frac{m}{n}}\right), \]
as desired.

We briefly describe how to reduce $\kappa$ by a factor of $B$ by allowing $x_t \in [-B, B] \cap \Z$. For the first half of the steps (i.e., $t \leq n/2$), we set the ``temperature'' all the way up to $B$, to enforce $x_t \in \{-B, B\}$. This ensures that $\norm{\vecx}_2 \geq \Omega(B \sqrt{n})$, regardless of what happens in the second half of the steps. However, the fixed point of the recurrence becomes $R_{n/2} = \Theta(Bm)$, which defeats the purpose of increasing $\norm{\vecx}_2$. For the second half of the steps, we carefully choose a ``cooling'' schedule (see \Cref{fig:cool}) to get back to temperature $1$, quickly converging to $R_n = \Theta(m)$. This results in $\norm{\matA \vecx}_2 \leq O(m)$ and $\norm{\vecx}_2 \geq \Omega(B \sqrt{n})$, giving 
\[ \kappa = \frac{\norm{\matA \vecx}_2}{\sqrt{m} \cdot \norm{\vecx}_2} \leq O\left(\frac{1}{B}  \sqrt{\frac{m}{n}}\right), \]
as desired, at the cost of requiring $n \geq K m \log B$ for some universal constant $K$. For more details, we defer to \Cref{sec:online-alg}.

\subsection{Hardness}

The multi-OGP hardness (\Cref{thm:ogp}) is derived from a first moment (annealed) estimate of the expected number of forbidden configurations in a random instance.  Our analysis extends that of Gamarnik et al. for the SBP~\cite{DBLP:conf/focs/GamarnikK0X22}, where the solution space is the Boolean cube.  

The main conceptual difference is in the choice of distance metric.  Unlike for the Boolean cube, the Euclidean and Hamming metrics are not equivalent over the hypergrid.  We prove OGP with respect to the normalized inner product.  The utility of this metric is demonstrated in our online lower bound (\Cref{thm:onlinehard}), which essentially shows that it captures the distance between the outputs of executions over
correlated inputs that share the same prefix and independent suffixes.

For the lattice-based lower bound (\Cref{lattice-based-lower-bound}), one approach would be to adapt the recent work of Vafa and Vaikuntanathan~\cite{DBLP:vv25} that shows a reduction from worst-case lattice problems to SBP. Instead of opening up their proof in a white-box way, we choose to reduce from an intermediate problem called ``Continuous Learning With Errors'' (CLWE)~\cite{DBLP:conf/stoc/Bruna0ST21}. This average-case problem, which is known to be as hard as worst-case approximate lattice problems~\cite{DBLP:conf/stoc/Bruna0ST21,DBLP:conf/focs/GupteVV22}, asks to distinguish
\[(\matA, \vecs^\top \matA + \vece^\top \pmod{1}), \text{\ \ \ \  and \ \ \ \ } (\matA, \vecb^\top),\] where $\matA \sim \Norm(0,1)^{m \times n}$, $\vecs \sim n^{\varepsilon} \cdot \mathbb{S}^{m-1}$, $\vece \sim \Norm(0, 1/\poly(n))^n$, and $\vecb$ is a uniformly random vector mod $1$, where $\mathbb{S}^{m-1}$ denotes the unit sphere in $\R^m$. By using integrality of $\vecx$ and simultaneous smallness of $\vecx$ and $\matA \vecx$, we can multiply the second element in the pair on the right by $\vecx$ and check if the result is small modulo $1$. We defer the details to \Cref{sec:lattice-lb}.

\subsection{Robust Locality Sensitive Hashing}
To design robust locality sensitive hash functions in the Euclidean norm, our starting-point is the JL lemma and the syntax of $\problem$. We set the hash function to be
\begin{align*} \hash_{\matA} &: ([0, B] \cap \Z)^n \to \R^m,
\\\vecx &\mapsto \frac{1}{\sqrt{m}} \cdot \matA \vecx,
\end{align*}
where the key of the hash function is the matrix $\matA \sim \Norm(0,1)^{m \times n}$. By linearity and a direct reduction from (the hardness of) $\problem$, it is hard to find $\vecx_1, \vecx_2 \in ([0, B] \cap \Z)^n$ such that
\begin{equation} \label{eqn:contraction-tech-overview} \norm{\hash_{\matA}(\vecx_1) - \hash_{\matA}(\vecx_2)} < \kappa \norm{\vecx_1 - \vecx_2},
\end{equation}
as otherwise, $\vecx_1 - \vecx_2 \in ([-B, B] \cap \Z)^n$ would be a solution to $\problem$. Therefore, we have the guarantee for this function, it is computationally hard to find two points whose distance between the hashes is multiplicatively smaller by a factor of $1/\kappa$.

As is, this construction has two issues:
\begin{enumerate}
    \item \label{item:one-sided} Eq. \eqref{eqn:contraction-tech-overview} is only a one-sided guarantee. How do we ensure that it is hard to find $\vecx_1, \vecx_2 \in ([-B, B] \cap \Z)^n$ such that
    \[ \norm{\hash_{\matA}(\vecx_1) - \hash_{\matA}(\vecx_2)} > \eta \norm{\vecx_1 - \vecx_2}, \]
    for some parameter $\eta \gg 1$?
    \item \label{item:compression-overview} Even if $m < n$, what does it mean for this function to be compressing in a bit-complexity sense if the codomain is $\R^m$?
\end{enumerate}
Thankfully, the solutions to both \Cref{item:one-sided,item:compression-overview} are relatively simple. For \Cref{item:one-sided}, we can use the spectral norm bound that $\norm{\matA}_2 \leq O(\sqrt{n})$ with high probability. This implies that we can set $\eta = O(\sqrt{n/m})$ (since $\matA$ is scaled down by $\sqrt{m}$). This results in overall distortion
\[ \frac{\eta}{\kappa} = O\left( \frac{1}{\kappa} \sqrt{\frac{n}{m}} \right).\]
For \Cref{item:compression-overview}, we can both discretize $\R^m$ into a grid $\gamma \Z^m$ and upper bound its $\ell_2$ norm given that $\norm{\matA}_2 \leq O(\sqrt{n})$ is bounded anyways. Then, we can set the codomain to $\gamma \Z^m \cap \ball_m(r)$ for sufficiently large $r$. Showing compression amounts to counting the number of points in the discretized ball as compared to the the number of points in the domain, $(B+1)^n$. We defer to \Cref{sec:hashing} for deatils.

\section{Preliminaries}
For a natural number $n \in \N$, we let $[n]$ denote the set $\{1, 2, \cdots, n\}$. For real numbers $a, b \in \R$ with $a \leq b$, we let $[a, b]$ denote the continuous interval $\{x \in \R : a \leq x \leq b\}$. We say a function $f : \N \to \R_{>0}$ is negligible if for all $c > 0$, $\lim_{n \to \infty} f(n) \cdot n^c = 0$. We use the notation $\negl(n)$ to denote a function that is negligible (in its input $n$). We similarly use the notation $\poly(n)$ to denote a function that is at most $n^{O(1)}$. As shorthand, we say an algorithm is p.p.t. if it runs in probabilistic polynomial time.

We let $\Norm(\mu, \sigma^2)$ denote the univariate Gaussian (or normal) distribution with mean $\mu \in \R$ and standard deviation $\sigma \in \R_{>0}$. For any distribution $\mathcal{D}$, we let $\mathcal{D}^m$ denote the product distribution of $m$ i.i.d. copies of $\mathcal{D}$. We let $\chi^2_m$ denote the chi-squared distribution with $m$ degrees of freedom, which is defined as the distribution of a random variable $Z$ such that
\[ Z = \sum_{i \in [m]} X_i^2 \]
for i.i.d. $X_i \sim \Norm(0, 1)$. Equivalently, for $\vecv \sim \Norm(0,1)^m$, the distribution of $\norm{\vecv}_2^2$ is identically $\chi^2_m$.

\begin{definition}[Contracting Hypergrid Vector Problem ($\problem$)]\label{main-def}
    For $n, m, B \in \N$ and $\kappa \in \R_{>0}$ with $m < n$, we define the $\problem$ problem with parameters $n, m, B$, and $\kappa$ as follows. Given as input $\matA \sim \Norm(0,1)^{m \times n}$, a valid solution is some $\vecx \in ([-B, B] \cap \Z)^n$ such that
\begin{equation}\label{eq:sbp} \norm{\matA \vecx}_2 < \kappa \norm{\vecx}_2 \sqrt{m}.
    \end{equation}
To match notation in the literature, we use $\alpha$ to denote the aspect ratio $m/n < 1$.
\end{definition}

We say that $\problem$ is (computationally) \emph{hard} for parameters $n, m, B, \kappa$ if for all p.p.t. algorithms $\alg$, the probability that $\alg$ outputs a valid solution to $\problem$ for parameters $n, m, B, \kappa$ is $\negl(n)$.

\section{Algorithms}

We analyze two efficient algorithms for solving $\problem$. Taken together, they give algorithms for $\problem$ whenever 
$\kappa \gg \sqrt{\alpha} \cdot \log B / B$, provided $m < n$.

\subsection{Online Norm Minimization}\label{sec:online-alg}

Our algorithm processes the columns $\veca$ of $\matA$ in sequence producing the corresponding entries of $\vecx$ one by one.  It is inspired by one of the algorithms of Bansal and Spencer~\cite{BS20} where both $\matA$ and $\vecx$ are restricted to $\pm 1$ values and, less importantly, their objective is to minimize the infinity norm of $\matA \vecx$.  

We assume $B$ is a power of two. If not, use the largest available. $K$ is a sufficiently large absolute constant.

    \begin{figure}[H]
    \centering
    \fbox{
    \begin{minipage}{0.95\textwidth}
    \begin{center}
    Online Algorithm $\cool$
    \end{center}
    \begin{pseudocode}
    {\bf State:} $\vecy \in \R^m$, initialized with $\zero$. \\
    {\bf Parameter:} The temperature $b \in [-B, B] \cap \Z$. \\ \\
    {\bf Step:} On sample $\veca \in \R^m$, \\
    \> Update $\vecy$ to $\vecy - b\veca$ or $\vecy + b\veca$, whichever is smaller in 2-norm. \\
    \> Output the minimizer $-b$ or $b$. \\ \\
    {\bf Algorithm} $\cool$:  Run $n$ steps with the temperature $b$ set to: \\
    \> \> \> \> \> $B$ in the first $n - K m(\log(B) - 1) $ steps, \\
    \> \> \> \> \> $B/2$ in the next $Km$ steps, \\
    \> \> \> \> \> $B/4$ in the next $Km$ steps, \\
    \> \> \> \> \> \> \>  $\vdots$ \\
    \>  \> \> \> \>  $1$ in the last $Km$ steps.
    \end{pseudocode}
    \end{minipage}
    }
    \caption{Online Norm Minimization Algorithm $\cool$, as analyzed in \Cref{thm:cool}.}
    \label{fig:cool}
\end{figure}

The only difference between Bansal and Spencer's transition rule and ours is the choice of norm to be minimized.\footnote{Specifically, their ``majority rule'' strategy (Strategy 2) can be interpreted as minimizing the $\ell_1$ norm in the online step~\cite{BS20}.} The 2-norm is more natural in the Gaussian setting.  
Our innovation is the cooling schedule which is responsible for the factor $B$ scaling of the discrepancy.

\begin{theorem}
\label{thm:cool}
Assuming the samples are independent normals and $n \geq 2 K m \log B$, the output $\vecx$ of $\cool$ satisfies $\norm{\matA \vecx}/\norm{\vecx} = O(m/B\sqrt{n})$ except with probability at most $O(\log B) \cdot  2^{-\Omega(m)}$.
\end{theorem}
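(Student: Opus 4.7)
The plan is to track the Lyapunov function $R_t := \norm{\vecy_t}_2$ at the end of step $t$ and show that, at the end of every constant-temperature phase of $\cool$, $R_t$ concentrates near its deterministic fixed point. By the rotational-symmetry argument sketched in the technical overview, the greedy update at temperature $b$ produces the stochastic recurrence
$$R_{t+1}^2 \;=\; R_t^2 \;-\; 2 b R_t\, |z_1| \;+\; b^2 \norm{\vecz}_2^2, \qquad \vecz \sim \Norm(0,1)^m,$$
so the conditional one-step drift is $\mathbb{E}[R_{t+1}^2 - R_t^2 \mid R_t] = -2b\sqrt{2/\pi}\, R_t + b^2 m$, which vanishes at $R_t^\star = bm\sqrt{\pi/8}$. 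Thus the recurrence has strongly negative drift when $R_t \gg bm$ and is pushed only $O(b^2 m)$ upward when $R_t \ll bm$, pinning $R_t$ near the equilibrium band $R_t = \Theta(bm)$.

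Second, I would prove the following phase lemma: there exist absolute constants $K, C, L > 0$ such that for any $R_0 \leq L b m$, after $Km$ greedy steps at temperature $b$ we have $R_{Km} \leq C b m$ with probability at least $1 - 2^{-\Omega(m)}$. The argument is a supermartingale analysis on the potential $R_t^2$. In the high regime $R_t \geq 10 bm$ the conditional drift is at most $-b R_t$, and a Freedman-type inequality applied to the increment of $R_t^2$, whose conditional variance is $O(b^2 R_t^2 + b^4 m^2)$, shows that $R_t$ crosses below $Cbm$ within $O(m)$ steps except with probability $2^{-\Omega(m)}$. Once inside the $O(bm)$ band, a second Freedman application shows $R_t$ never leaves it during the remaining steps. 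The exponent $2^{-\Omega(m)}$ comes from the concentration of $|z_1|$ (half-normal) and $\norm{\vecz}_2^2$ ($\chi^2_m$) over $\Omega(m)$ independent draws, so that cumulative deviations of the random increments from their means are exponentially small in $m$.

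Third, I would chain the phase lemma along the cooling schedule. The initial long phase at temperature $B$ has length $\geq Km$ (using $n \geq 2Km\log B$) and starts from $R_0 = 0$, so it ends with $R \leq CBm$. Each subsequent halving phase at temperature $b = B/2^i$ begins from $R \leq C\cdot 2b \cdot m = 2Cb m \leq L b m$ (with $L$ chosen at least $2C$), and by the lemma ends at $R \leq Cb m$. After the final phase (temperature $1$), $R_n \leq Cm$, so $\norm{\matA\vecx}_2 = R_n = O(m)$. A union bound over the $O(\log B)$ phases gives overall failure probability $O(\log B) \cdot 2^{-\Omega(m)}$, matching the theorem. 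For the denominator, during the initial phase every coordinate satisfies $|x_t| = B$, contributing at least $B^2 \cdot (n - Km(\log B - 1)) \geq B^2 n / 2$ to $\norm{\vecx}_2^2$, so $\norm{\vecx}_2 \geq B\sqrt{n/2}$. Dividing, $\norm{\matA\vecx}_2 / \norm{\vecx}_2 = O(m/(B\sqrt{n}))$, as claimed.

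The main obstacle is the phase lemma, specifically controlling the state-dependent variance of the $R_t^2$ increments. A naive Azuma bound using worst-case step sizes $O(b^2 m + b R_t)$ is too weak because $R_t$ itself can be as large as $\Omega(Bm)$ at the beginning of a phase; one must either use a Freedman-type bracket inequality that pays only for the realized trajectory's cumulative variance, or pass to the multiplicatively-rescaled variable $\log R_t^2$ to convert the recurrence into an additive random walk with sub-exponential increments amenable to Bernstein bounds. A secondary care point is the seam between phases: one must verify that starting a phase at $R_0 = \Theta(bm)$ (rather than exactly at $R^\star$) still admits convergence within $Km$ steps, which is why $K$ and $L$ must be chosen together, large enough to absorb the constant-factor slack introduced by each halving.
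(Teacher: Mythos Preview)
Your high-level decomposition --- a per-phase lemma saying ``if $R_0 \le L b m$ then $R_{Km} \le C b m$ except with probability $2^{-\Omega(m)}$'', chained across the $O(\log B)$ halving phases via a union bound, together with the deterministic lower bound $\norm{\vecx}_2 \ge B\sqrt{n/2}$ from the initial long phase --- is exactly the paper's structure, and your chaining constants ($L \ge 2C$) match the paper's choice ($8$ and $4$).

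The difference is in how the phase lemma is proved.  You track $R_t^2$, whose drift $-2b\sqrt{2/\pi}\,R_t + b^2 m$ and increment variance $O(b^2 R_t^2 + b^4 m)$ are state-dependent, and you correctly identify that this forces a Freedman-type inequality (or a passage to $\log R_t$) to handle the $R_t$-scale variance.  The paper sidesteps this entirely by working with $L_t = R_t$ rather than $R_t^2$: applying $\sqrt{1+u} \le 1 + u/2$ to the recurrence yields
\[
L_{t+1} \;\le\; \bigl|L_t - b|z_1|\bigr| \;+\; \frac{b^2 \chi^2_{m-1}}{2\,\bigl|L_t - b|z_1|\bigr|},
\]
and in the regime $L_t \ge 2bm$ the second term is at most $b\,\chi^2_{m-1}/(2m)$, so $L_{t+1}$ is stochastically dominated by $L_t - b\, B$ for a \emph{fixed} subexponential random variable $B$ of constant positive mean, independent of $L_t$.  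Combined with the crude bound $L_{t+1} \le L_t + b\sqrt{\chi^2_m}$ valid always, the paper then runs a last-hitting-time argument (union over the last step $T$ at which $L_T < 2bm$) and reduces the phase lemma to tail bounds on i.i.d.\ sums --- no martingale machinery at all.

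Both routes are valid.  The paper's linearization buys an elementary i.i.d.\ analysis and cleanly avoids the state-dependent-variance obstacle you flagged; your Freedman route is more black-box but carries the bookkeeping burden you anticipated.
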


The norm of $\vecx$ is dominated by the temperature $B$ part of the schedule so it is at least $(B/2)\sqrt{n}$.  It remains to show that the final state has norm at most $O(m)$.

The state update $\vecy' = \vecy + x\veca$ can be decomposed as $\vecy + x \veca^\parallel + x\veca^\perp$, where $\veca^\parallel$ is the component of $\veca$ in the direction of $\vecy$, and $\veca^\perp$ is its orthogonal complement.  By Pythagoras' theorem,
\[ \norm{\vecy'}^2 = \norm{\vecy + x \veca^\parallel}^2 + \norm{x\veca^\perp}^2 = \norm{\vecy + x \veca^\parallel}^2 + b^2\norm{\veca^\perp}^2. \]
As $\vecy$ and $\veca^\parallel$ are aligned, $\norm{\vecy + x \veca^\parallel}$ is either $\norm{\vecy} - b\norm{\veca}$ or $\norm{\vecy} + b\norm{\veca}$.  The first choice is clearly the minimizing one and
\[ \norm{\vecy'}^2 = \bigl(\norm{\vecy} - b\norm{\veca^\parallel}\bigr)^2 + b^2 \norm{\veca^\perp}^2. \]
As the entries of $\veca$ are independent normals, by spherical symmetry, $\norm{\veca^\parallel}$ and $\norm{\veca^\perp}^2$ are the absolute value $\lvert N\rvert$ of a normal random variable and an independent chi-squared random variable $\chi_{m-1}^2$ with $m - 1$ degrees of freedom, respectively.  The length $L = \norm{\vecy}_2$ satisfies the stochastic recurrence
\begin{equation}
\label{eq:process}
L' = \sqrt{(L - b\lvert N\rvert)^2 + b^2 \chi^2_{m-1}},
\end{equation}
where $L'$ denotes the updated length after one step.

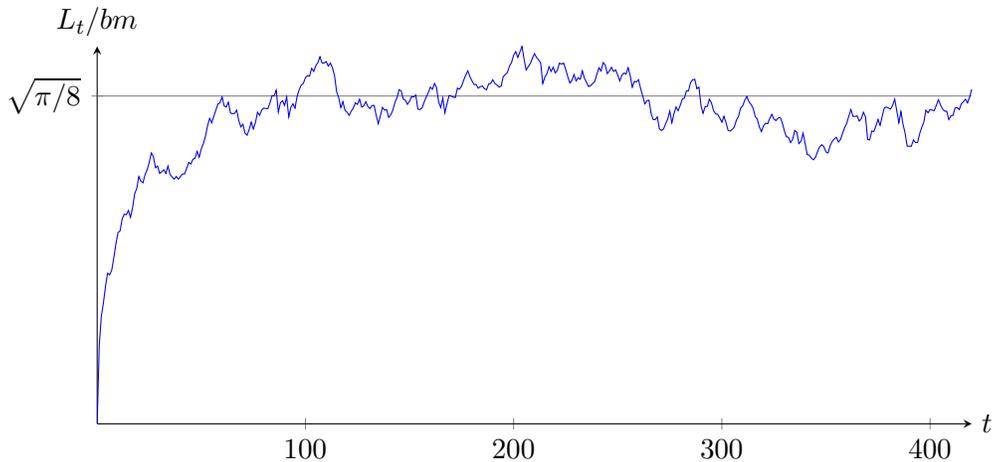
\begin{figure}[ht]
\centering
\begin{tikzpicture}
\begin{axis}[%
width=0.8\textwidth, height=0.4\textwidth,  
axis lines=center,
xtick={0, 100, 200, 300, 400}, ytick={0, .627, .8}, 
xmax=420,
yticklabel={$\sqrt{\pi/8}$}, 
x label style={at={(axis description cs:1,0)}, anchor=west}, 
y label style={at={(axis description cs:0,1)}, anchor=south}, 
xlabel={$t$}, ylabel={$L_t/bm$}]
\addplot[mark=none, gray] coordinates {(0,.627) (420,.627)};
\addplot[mark=none, blue] table [x=t, y=U] {algorithm-run.csv};
\end{axis}
\end{tikzpicture}
\caption{A sample realization of the stochastic process~\eqref{eq:process} with $m = 50$ with fixed temperature $b$. Since $b$ is fixed, $L_t$ is homogeneously linear in $b$, so the stochastic process $L_t/(bm)$ is independent of $b$.}
\end{figure}

Applying the inequality $\sqrt{1 + x} \leq 1 + x/2$ valid for $x \geq 0$ yields
\begin{equation}
\label{eq:drift}
L' \leq \lvert L - b\lvert N\rvert\rvert + \frac{b^2 \chi^2_{m-1}}{2\lvert L - b\lvert N\rvert\rvert}.
\end{equation}
The typical magnitude of $\lvert N\rvert$ is constant and the typical magnitude of $\chi^2_{m-1}$ is on the order of $m$.   If $L$ is larger than about $bm$, the drift is typically negative.

\begin{claim}
\label{claim:drift}
Assume $m \geq 4$. Conditioned on $L \geq 2b m$, $L'$ is stochastically dominated by $L - b\typeB$ where $\typeB$ is independent of $L$, has mean at least $0.25$, and has constant subexponential norm, in the sense that $\Pr[|\typeB| \geq t ] \leq O \left( \exp\left( -\Omega(t) \right) \right)$ for all $t > 0$.
\end{claim}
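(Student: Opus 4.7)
My plan is to apply the deterministic inequality~\eqref{eq:drift} and split the analysis according to whether $\lvert N\rvert \leq m$ (the typical case) or $\lvert N\rvert > m$ (an exponentially rare case). Under the conditioning $L \geq 2bm$, whenever $\lvert N\rvert \leq m$ we have $L - b\lvert N\rvert \geq bm > 0$, so the denominator in the second term of~\eqref{eq:drift} is at least $bm$ and that term is at most $b\chi^2_{m-1}/(2m)$. Rearranging yields $L - L' \geq b \cdot \typeB_0$ where $\typeB_0 := \lvert N\rvert - \chi^2_{m-1}/(2m)$ is a function of the fresh randomness only, hence independent of $L$.

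To obtain a single random variable $\typeB$ that is honestly independent of $L$ (and not merely after conditioning on a rare event), I would set
\[ \typeB = \typeB_0 \cdot \mathbf{1}[\lvert N\rvert \leq m] \,-\, \bigl(\lvert N\rvert + \sqrt{\chi^2_{m-1}}\bigr) \cdot \mathbf{1}[\lvert N\rvert > m], \]
using on the rare event the crude bound $L' \leq L + b\lvert N\rvert + b\sqrt{\chi^2_{m-1}}$ that follows from $\sqrt{x^2 + y^2} \leq \lvert x\rvert + \lvert y\rvert$ and the triangle inequality on $\lvert L - b\lvert N\rvert\rvert$. By construction $\typeB$ depends only on $(N, \chi^2_{m-1})$, and $L' \leq L - b\typeB$ holds pointwise for every $L \geq 2bm$.

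For the mean, a direct computation gives $\E[\typeB_0] = \sqrt{2/\pi} - (m-1)/(2m) \geq 0.298$ for every $m \geq 2$, while the correction from the rare event is at most $O(\sqrt{m}) \cdot \Pr[\lvert N\rvert > m] \leq O(\sqrt{m}) \cdot e^{-m^2/2}$, well below $0.05$ once $m \geq 4$; hence $\E[\typeB] \geq 0.25$. For the subexponential tail, I would combine the sub-Gaussian tail of $\lvert N\rvert$ with Bernstein's inequality for $\chi^2_{m-1}$ (which gives that $\chi^2_{m-1}/(2m)$ concentrates near $1/2$ with subexponential deviations of scale $O(1/\sqrt{m})$); the truncation term on the rare event contributes only through an event of probability at most $e^{-m^2/2}$, so it is easily absorbed, yielding $\Pr[\lvert \typeB\rvert \geq t] \leq O(e^{-\Omega(t)})$ uniformly.

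The step I expect to be most delicate is the rare-event handling: the denominator $\lvert L - b\lvert N\rvert\rvert$ in~\eqref{eq:drift} could in principle become arbitrarily small if $b\lvert N\rvert$ approaches $L$, so the gap $L \geq 2bm$ combined with the cutoff $\lvert N\rvert \leq m$ is what converts the raw noise contribution $b^2 \chi^2_{m-1}$ into the manageable $b\chi^2_{m-1}/(2m)$, which is dominated in expectation by $b\lvert N\rvert$. Once this is in place, the remainder is a careful two-case expectation and tail computation.
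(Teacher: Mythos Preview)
Your proposal is correct and follows essentially the same route as the paper: define $\typeB$ casewise according to whether $\lvert N\rvert \leq m$, using~\eqref{eq:drift} on the typical event (where the conditioning $L \geq 2bm$ forces the denominator to be at least $bm$) and a crude triangle-inequality bound on the rare event, then check the mean via $\sqrt{2/\pi} - (m-1)/(2m)$ minus a tiny correction, and the subexponential tail by combining Gaussian and chi-squared tail bounds. The only cosmetic difference is your rare-event penalty $-(\lvert N\rvert + \sqrt{\chi^2_{m-1}})$ versus the paper's $-\sqrt{m^2 + \chi^2_{m-1}}$; both lead to the same conclusion.
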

For now, we defer the proof of \Cref{claim:drift}.

To handle the cases when $L$ is small or $\lvert N\rvert$ is atypically large, we can use the simple bound that $\norm{\vecy'}$ is still at most $\norm{\vecy} + b\norm{\veca}$ by the triangle inequality so
\begin{equation}
\label{eq:worstdrift}
L' \leq L + b\typeA, \qquad\text{where $\typeA$ is of type $\sqrt{\chi^2_m}$}.
\end{equation}

We will refer to the corresponding random variables as in \eqref{eq:worstdrift} and \Cref{claim:drift} as being of type $\typeA$ and type $\typeB$, respectively. By standard properties of the $\chi^2_m$ distribution and Jensen's inequality, type $\typeA$ is of mean at most $\sqrt{m}$ and has subgaussian norm $\Omega(\sqrt{m})$.

Let $L_0, L_1, \dots$ be a stochastic process that evolves according to~\eqref{eq:process} (with fixed $b$).

\begin{claim}
\label{claim:evolution}
If $L_0$ is at most $8bm$ then for every $t^\star \geq Km$, $L_{t^\star}$ is at most $4bm$ except with probability $O(2^{-\Omega(m)})$.
\end{claim}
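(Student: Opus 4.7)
The plan is to show that for any fixed $t^\star \geq Km$, $\Pr[L_{t^\star} > 4bm] \leq 2^{-\Omega(m)}$. The ingredients are already in hand: \Cref{claim:drift} provides a per-step downward drift of at least $b/4$ with subexponential fluctuations whenever $L \geq 2bm$, while the worst-case bound~\eqref{eq:worstdrift} controls the one-step increase by $b\typeA$ with $\typeA$ subgaussian of norm $O(\sqrt{m})$. My intended proof has a ``return to the safe region'' step followed by an ``excursion above $2bm$ is short and benign'' step, glued together by a union bound over when the last return occurred.

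First I would show that with probability $1 - 2^{-\Omega(m)}$ there exists some $s \leq t^\star$ with $L_s \leq 2bm$. Indeed, if no such $s$ existed then $L_r \geq 2bm$ for every $r < t^\star$, so iterating \Cref{claim:drift} gives
\[ L_{t^\star} \;\leq\; L_0 - b \sum_{i=1}^{t^\star} \typeB_i \;\leq\; 8bm - b \sum_{i=1}^{t^\star} \typeB_i. \]
The sum has at least $Km$ independent copies of $\typeB$, each of mean $\geq 1/4$ and constant subexponential norm, so Bernstein's inequality gives $\sum \typeB_i \geq Km/8$ except with probability $2^{-\Omega(Km)} = 2^{-\Omega(m)}$; for $K$ large this forces $L_{t^\star} < 0$, a contradiction.

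Next, let $\sigma$ denote the largest $s \leq t^\star$ with $L_s \leq 2bm$. If $\sigma = t^\star$ the conclusion is immediate. Otherwise $L_s > 2bm$ for all $s \in (\sigma, t^\star]$, so iterating \Cref{claim:drift} over the transitions $s \to s+1$ for $s \in \{\sigma+1, \dots, t^\star - 1\}$ and applying \eqref{eq:worstdrift} on the single transition from $\sigma$ to $\sigma+1$ yields
\[ L_{t^\star} - 2bm \;\leq\; b\,\typeA_{\sigma+1} - b \sum_{i=\sigma+2}^{t^\star} \typeB_i. \]
For each deterministic value of $\sigma$, the subgaussian tail $\Pr[\typeA_{\sigma+1} > 2m] \leq \exp(-\Omega(m))$ already suffices for the right-hand side to exceed $2bm$, and is only sharpened by the positive drift contribution $\sum \typeB_i \gtrsim (t^\star - \sigma - 1)/4$. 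A union bound over the at most $t^\star + 1$ values of $\sigma$ loses a polynomial factor, absorbed into the $\Omega(m)$ by tuning the implicit constant (and by the standing assumption that $m$ dominates $\log t^\star$ in the parameter regime of interest).

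The main subtlety is that $\sigma$ is \emph{not} a stopping time for the natural filtration, since it is defined by looking forward through the trajectory; one therefore cannot directly apply Freedman-type inequalities to the stopped partial sum $\sum_{i=\sigma+2}^{t^\star} \typeB_i$. The union bound over deterministic values of $\sigma$ sketched above bypasses this cleanly, because for each fixed $\sigma$ the random variables $\typeA_{\sigma+1}, \typeB_{\sigma+2}, \dots, \typeB_{t^\star}$ are honest independent copies of their respective types. A secondary care point is that \Cref{claim:drift} is phrased as stochastic domination conditional on $L \geq 2bm$, so when chaining across multiple steps I would need to couple the dominating variables to be jointly independent of the history; this is standard but is the one place I expect the formal write-up to require explicit bookkeeping.
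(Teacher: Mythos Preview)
Your decomposition is the same one the paper uses: union-bound over the last time $\sigma \leq t^\star$ at which the process sat below $2bm$, control the single transition out of the safe region by~\eqref{eq:worstdrift}, and control the remaining steps by \Cref{claim:drift}. The coupling and non-stopping-time issues you flag are exactly what the paper handles by fixing independent sequences $A_1, A_2, \dots$ and $B_1, B_2, \dots$ of the two types upfront and working with stochastic domination throughout.

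There is, however, one genuine gap. Your union bound over the $t^\star + 1$ values of $\sigma$ multiplies a uniform $\exp(-\Omega(m))$ per-term bound by $t^\star$, and you absorb the loss by invoking ``the standing assumption that $m$ dominates $\log t^\star$.'' No such assumption is in force: the claim is stated for every $t^\star \geq Km$, and in the proof of \Cref{thm:cool} the first phase runs for $t^\star = n - Km(\log B - 1)$ steps with no upper bound on $n$ relative to $m$. A uniform bound times $t^\star$ is therefore not $O(2^{-\Omega(m)})$ in general. The paper closes this by making the per-term bound decay in $t^\star - \sigma$: for each fixed $\sigma$ it splits
\[
\{A_{\sigma+1} - B_{\sigma+2} - \cdots - B_{t^\star} > 2m\} \subseteq \{A_{\sigma+1} > m + (t^\star-\sigma)/8\} \cup \{B_{\sigma+2} + \cdots + B_{t^\star} < -m + (t^\star-\sigma)/8\},
\]
so that each term is at most $2\exp(-\Omega(m + (t^\star - \sigma)))$ and the sum over $\sigma$ is a geometric series converging to $O(\exp(-\Omega(m)))$ regardless of $t^\star$. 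You already observe that the type-$\typeB$ drift ``sharpens'' the bound in $t^\star - \sigma$; that sharpening is not an optional refinement but the mechanism that makes the union bound go through.
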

\begin{proof}
Let $A_1, A_2, \dots$ and $B_1, B_2, \dots$ be sequences of independent type-$\typeA$ and type-$\typeB$ random variables, respectively.  The drift $L_t - L_{t-1}$ is stochastically dominated by $bA_t$ if $L_{t-1} < 2bm$ and by $-bB_t$ otherwise.

Let $T$ be the last time $T \leq Km$ at which $L_T < 2bm$, if such a time exists. By~\Cref{claim:drift} and~\eqref{eq:worstdrift}, $L_{t^\star}$ is then stochastically dominated by $2bm + b(A_T - B_{T+1} - \cdots - B_{t^\star})$.  Otherwise, $L_t$ is stochastically dominated by $L_0 - bB_1 - \cdots - bB_{t^\star}$.  By a union bound over the possible choices of $T$,
\begin{align*}
\Pr[L_{t^\star} > 4bm] &\leq \Pr[L_0 - bB_1 - \cdots - bB_{t^\star} > 4bm] + \sum_{t=1}^{t^\star} \Pr\left[ bA_t - bB_{t+1} - \cdots - bB_{t^\star} > 2bm \right] \\
&\leq \Pr\left[B_1 + \cdots + B_{t^\star} \leq 4m \right] + \sum_{t=0}^{\infty} \Pr\left[A_1 - B_1 - \cdots - B_t > 2m\right].
\end{align*}
As the type-$\typeB$ random variables have mean at least $0.25$ and are subexponential, the first probability is at most $2^{-m}$ provided $t^* > Km$. As for the $t$th term in the sum, by union and tail bounds,
\begin{align*}
\Pr\left[ A_1 - B_1 - \cdots - B_t > 2m \right]
&\leq \Pr\left[ A_1 > m + t/8 \right] + \Pr\left[B_1 + \cdots + B_t < -m + t/8\right] \\
&\leq \exp \left( -\Omega \left(\sqrt{m} + t/\sqrt{m} \right)^2 \right) + \exp \left( -\Omega(m + t) \right) \\
&\leq 2 \exp\left( -\Omega(m + t) \right). 
\end{align*}
By convergence of the geometric series, the sum over $t$ is also bounded by $O\left(\exp \left( -\Omega(m) \right) \right)$.
\end{proof}

\begin{proof}[Proof of \Cref{thm:cool}]
Apply Claim~\ref{claim:evolution} to $L = \norm{\vecy}$. At the end of the first stage $L$ is at most $4Bm = 8(B/2)m$ except with probability $O(2^{-\Omega(m)})$.  Assuming it is, at the end of the second stage $L$ is at most $4(B/2)m = 8(B/4)m$ except with probability $O(2^{-\Omega(m)})$, and so on.  At the very end $L$ is at most $4m$ as desired. The cumulative failure probability is at most $O(\log B)2^{-\Omega(m)}$.
\end{proof}

We now proceed to prove~\Cref{claim:drift}.
\begin{proof}[Proof of Claim~\ref{claim:drift}]
Set
\[ \typeB = \begin{cases}
\lvert N\rvert - \chi^2_{m-1}/(2m), &\text{if $\lvert N\rvert \leq m$} \\
-\sqrt{m^2 + \chi^2_{m-1}}, &\text{otherwise}.
\end{cases}
\]
Stochastic domination follows from~\eqref{eq:drift} and~\eqref{eq:worstdrift}, as we can decompose the type $\typeA$ random variable into its component in the direction of $\vecy$ (i.e., $N$) and its $m-1$ independent other components. By standard facts and Cauchy-Schwarz, the mean of $\typeB$ is at least
\begin{align*}
\E[\typeB] &= \E[\lvert N \rvert] - \frac{\E\left[\chi_{m-1}^2\right]}{2m}  - \E \left[\left(
\sqrt{m^2 + \chi_{m-1}^2} + \lvert N \rvert - \frac{\chi^2_{m-1}}{2m} \right) \one\left[ \lvert N\rvert > m \right] \right]  \\
&\geq \sqrt\frac{2}{\pi} - \frac{m-1}{2m}
- \left(\sqrt{m^2 + \E\left[\chi_{m-1}^2 \right]} + \sqrt{\E\left[ N^2 \right]}  + \sqrt{ \frac{\E \left[ \chi_{m-1}^4 \right]}{4m^2}}\right) \sqrt{\Pr \left[ \lvert N \rvert > m \right]} \\
&\geq 0.29 - (2m + 2) \cdot 2\exp\left(-m^2/4 \right) \\
&\geq 0.25.
\end{align*}
For the subexponential norm, by union and large deviation bounds,
\begin{align*}
\Pr\left[ \lvert \typeB\rvert \geq t \right] &\leq \Pr\left[ \lvert N\rvert \geq t\right] + \Pr\left[\frac{\chi_{m-1}^2}{2m} \geq t\right] + \Pr\left[\text{$\sqrt{m^2 + \chi_{m-1}^2} \geq t$ and $\lvert N\rvert > m$}\right] \\
&\leq 2\exp\left(-t^2/2 \right) + \exp\left(-t/2 \right) +  \Pr\left[\text{$\sqrt{m^2 + \chi_{m-1}^2} \geq t$ and $\lvert N\rvert > m$}\right] \\
&= O(\exp(-t/2)) +  \Pr\left[\text{$\sqrt{m^2 + \chi_{m-1}^2} \geq t$ and $\lvert N\rvert > m$}\right].
\end{align*}
To bound the right-hand term, we consider two cases. If $t < m^2$, then
\[ \Pr\left[ \text{$\sqrt{m^2 + \chi_{m-1}^2} \geq t$ and $\lvert N\rvert > m$} \right] \leq \Pr[|N| > m] \leq O \left( \exp\left(-m^2/4 \right) \right) \leq O(\exp(-\Omega(t))).\]
If $t \geq m^2$,
\begin{align*} \Pr\left[\text{$\sqrt{m^2 + \chi_{m-1}^2} \geq t$ and $\lvert N\rvert > m$} \right] &\leq \Pr\left[\chi_{m-1}^2 \geq t^2 - m^2 \right]
\\&\leq \exp\left(-\Omega\left(t^2/m - m\right)\right)
\\&\leq \exp\left(-\Omega\left(tm - m\right)\right)
\\&\leq \exp\left(-\Omega\left(t\right) \right).
\end{align*}
Therefore, in all cases, $\Pr[|\typeB| \geq t] \leq O(\exp(-\Omega(t)))$.
 \hfill\qedhere
\end{proof}

\paragraph{Limiting behavior} In the limit $m \to \infty$ under the scaling $U = L/bm$, $dt = 1/m$,~\eqref{eq:drift} is approximated by the stochastic differential equation
\[ dU = \Bigl(-\mu + \frac{1}{2U}\Bigr)dt + \sigma dW, \]
where $\mu = \sqrt{2/\pi}$ and $\sigma = \sqrt{1 - 2/\pi}$ are the statistics of $\lvert N\rvert$, and $W$ is the Wiener process.  The drift pushes $U$ towards the fixed point
\[ U = \frac{1}{2\mu} = \sqrt\frac{\pi}{8} \approx 0.627. \]

\begin{conjecture}[Online threshold conjecture]
\label{online threshold conjecture}
For every $\delta$ and $B$ there exists a sufficiently small $\alpha$ and $\epsilon$ so that every online algorithm fails to find $\vecx \in ([-B, B] \cap \Z)^{n}$ such that $\norm{\matA\vecx}/(\sqrt{m} \norm{\vecx}) < (\sqrt{\pi/8} - \delta) \sqrt{\alpha}/B$ for at least an $\epsilon$ fraction of $\alpha n$ by $n$ matrices $A$ for all sufficiently large $n$.
\end{conjecture}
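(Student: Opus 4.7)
I propose to establish the conjecture by passing to a continuous (ODE) limit in which the sharp constant $\sqrt{\pi/8}$ emerges from the integer constraint on the algorithm's choices, together with a case analysis in the limit.

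\emph{Step 1 (ODE reduction).} By rotational invariance of the Gaussian, any online algorithm's state is captured (up to isometry) by the pair $(L_t, V_t) = (\norm{\vecy_t}, \sum_{i < t} x_i^2)$. Under the rescaling $\tau = t/m$, $p(\tau) = L_t^2/m^2$, $v(\tau) = V_t/m$, and in the regime $m \to \infty$ at fixed $\alpha = m/n = 1/T$, martingale concentration for the sub-exponential per-step increments (of the type analyzed in \Cref{claim:drift}) forces $(p, v)$ to converge to a deterministic trajectory governed by
\begin{equation*}
\dot p = a - 2\sqrt{p}\,b\,\sqrt{2/\pi}, \qquad \dot v = a,
\end{equation*}
where $a(\tau) = \E[y^2]$ and $b(\tau) = \E[y]$ are the conditional moments of the integer magnitude $y = |x_t| \in \{0, 1, \dots, B\}$ used near time $\tau$, averaged over the fresh Gaussian column.

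\emph{Step 2 (Integrality barrier and case split).} Because $y$ is a nonnegative integer, $y^2 \geq y$ pointwise, so $b \leq a \leq Bb$ along any admissible trajectory. Evaluating the drift at $p = \pi/8$ gives $\dot p \big|_{p=\pi/8} = a - b \geq 0$; more generally, for $p \leq \pi/8$ we have $2\sqrt p\sqrt{2/\pi} \leq 1$, hence $\dot p \geq a - b \geq 0$. Thus $p$ is non-decreasing as long as $p \leq \pi/8$, and once above $\pi/8$ it stays there. Let $z(T) = p(T)/v(T)$. \emph{Case A:} if $p(\tau^\star) \geq \pi/8$ for some $\tau^\star \leq T$, then $p(T) \geq \pi/8$, and combined with $v(T) \leq B^2 T$ (from $a \leq B^2$) we obtain $z(T) \geq \pi/(8 B^2 T)$. \emph{Case B:} otherwise $p < \pi/8$ throughout, and monotonicity forces $\max_\tau p(\tau) = p(T)$. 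Integrating $\dot p$ and applying Cauchy-Schwarz with $b^2 \leq a$ yields
\begin{equation*}
v(T) - p(T) = 2\sqrt{2/\pi}\int_0^T \sqrt p\, b\, d\tau \leq 2\sqrt{2\, T\, p(T)\, v(T)/\pi},
\end{equation*}
which solves asymptotically to $v(T) \leq (8T/\pi)\, p(T)(1+o(1))$, so $z(T) \geq \pi/(8T)(1-o(1)) \geq \pi/(8 B^2 T)$. Either case gives $z(T) \geq \pi/(8 B^2 T)(1-o(1))$; since $z(T) = \norm{\matA\vecx}^2/(m\,\norm{\vecx}^2)$, taking square roots yields $\norm{\matA\vecx}/(\sqrt m\,\norm{\vecx}) \geq (\sqrt{\pi/8}-\delta)\sqrt\alpha/B$ for $\alpha$ sufficiently small, with the exceptional matrix fraction $\epsilon$ coming from the $\exp(-\Omega(m))$ concentration in Step 1.

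\emph{Main obstacle.} The principal technical difficulty lies in making Step 1 uniform and quantitative over \emph{all} adaptive online strategies. The algorithm's decisions can depend on the full observed history, so the effective control $(a(\tau), b(\tau))$ must be extracted from the conditional distribution of $|x_t|$ given $(L_t, V_t, t)$ (which is sufficient by rotational invariance), and the stochastic process must concentrate around the ODE at a scale small enough to preserve the leading constant. This likely requires a discretization-plus-net argument in the spirit of stochastic approximation theory, or a direct Lyapunov analysis simultaneously tracking $L_t^2$ and $V_t$ with quantitative drift bounds. A secondary subtlety is handling the stochastic straddle of the $p = \pi/8$ barrier: the bound $\dot p \geq a - b$ is only weakly nonnegative (indeed zero) when the algorithm uses $y \in \{0, 1\}$, so $O(1/\sqrt m)$ fluctuations must be absorbed without degrading the constant---this is plausible because the second-order drift strictly pushes into the barrier whenever $y \geq 2$, but the argument requires care near the boundary.
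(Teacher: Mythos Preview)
This statement is a \emph{conjecture} in the paper, not a theorem; the paper does not prove it and offers no proof to compare against. So the question is simply whether your argument is correct. It is not, and the gap is not one of the obstacles you flag.

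Your ODE in Step~1 is wrong for general online algorithms. You write the drift of $p = L^2/m^2$ as $a - 2\sqrt{p}\,b\,\sqrt{2/\pi}$ with $a = \E[|x_t|^2]$ and $b = \E[|x_t|]$, but this factorization $\E[x_t\, a^\parallel] = -\E[|x_t|]\cdot \E[|a^\parallel|]$ tacitly assumes the algorithm fixes the magnitude $|x_t|$ \emph{before} seeing the fresh column and only chooses the sign afterward. A general online algorithm sees $\veca_t$ in full and may correlate $|x_t|$ with $|a^\parallel|$. The correct drift term is $2\sqrt{p}\,\E[x_t a^\parallel]$, and this can be strictly more negative than $-2\sqrt{p}\,b\sqrt{2/\pi}$.

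Concretely, take $x_t = -\operatorname{sign}(a^\parallel)\cdot \mathbf{1}[|a^\parallel| > c]$. Then $a = \Pr[|a^\parallel| > c]$ while $-\E[x_t a^\parallel] = \E\bigl[|a^\parallel|\,;\,|a^\parallel| > c\bigr]$, and for any $c > 0$ the ratio $\E[|a^\parallel|\mid |a^\parallel|>c]$ exceeds $\sqrt{2/\pi} = \E[|a^\parallel|]$. Thus at $p = \pi/8$ the true drift $a - 2\sqrt{\pi/8}\,\E[|a^\parallel|\,;\,|a^\parallel|>c]$ is strictly negative, so your ``integrality barrier'' $\dot p|_{p=\pi/8} \geq 0$ fails. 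The inequality $y^2 \geq y$ controls $b\leq a$ but says nothing about the correlation of $|x_t|$ with $|a^\parallel|$, which is where the algorithm gains. Once the barrier falls, both your Case~A (which needs $p$ to stay above $\pi/8$) and Case~B (which needs $p$ monotone so that $\int p\,d\tau \leq T\,p(T)$) collapse. Any genuine proof must exploit integrality in a way that survives this adaptive correlation --- the pointwise bound $y^2 \geq y$ is not enough.
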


\subsection{Kernel Rounding}\label{subsection-kernel-rounding}

    \begin{figure}[H]
    \centering
    \fbox{
    \begin{minipage}{0.95\textwidth}
    \begin{center}
    Algorithm $\kernelround$
    \end{center}
\textbf{Input}: $\matA \sim \Norm(0,1)^{m \times n}$ where $m < n$.
\begin{enumerate}
    \item Sample a random $\vecx$ such that $\matA\vecx = \zero$ according to the Haar measure. 
    \item Scale $\vecx$ to have length $\sqrt{\chi^2_n} \cdot \frac{B}{\sqrt{4K \ln^+ B}}$.
    \item Define the rounded vector $\vecz = \round{\vecx}_B$.
\end{enumerate}
\textbf{Output}: The vector $\vecz \in ([-B, B] \cap \Z)^n$.
    \end{minipage}
    }
    \caption{Kernel Rounding Algorithm for $\problem$, as analyzed in \Cref{thm:roundkernel}.}
    \label{fig:kernel-round}
\end{figure}

The rounding function is applied entrywise as
\[ \lceil x \rfloor_B = \begin{cases}
\lceil x \rfloor, &\text{if $\lvert x\rvert \leq B$}, \\
B \sign{x}, &\text{otherwise,}
\end{cases}
\]
where $\round{x}$ denotes rounding $x \in \R$ to the nearest integer (and tie-breaking arbitrarily). We will let
$\ln^+$ denote the function $\max\{\ln, 1\}$. $K \geq 2$ is a (constant) parameter that controls the tradeoff between approximation quality and failure probability in the regime $B = \Omega(n^{1/4})$.

\begin{theorem}
\label{thm:roundkernel}
For $\matA \sim \Norm(0,1)^{m \times n}$, the algorithm $\kernelround$ outputs a vector $\vecz \in ([-B, B] \cap \Z)^n$ such that
\[ \frac{\norm{\matA \vecz}_2}{\norm{\vecz}_2} = O\left( \frac{\sqrt{m K \ln^+ B}}{B} \right),\]
except with probability $2^{-\Omega(m)} + \min\left\{ O\left(2^{-\Omega\left(n(K \ln B)^2/B^4\right)}\right), 2nB^{-K} \right\}$.
\end{theorem}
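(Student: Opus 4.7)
The plan is to bound the numerator $\norm{\matA \vecz}_2$ and the denominator $\norm{\vecz}_2$ separately, using the observation that $\matA \vecx = \zero$ implies $\matA \vecz = \matA(\vecz - \vecx) = \matA \vece$ where $\vece := \vecz - \vecx$ is the entrywise rounding error. So the task reduces to controlling $\norm{\matA \vece}_2$ and $\norm{\vece}_2$. I will first establish that unconditionally, $\vecx \sim \Norm(0, \sigma^2 \matI_n)$ with $\sigma := B/\sqrt{4K \ln^+ B}$. Since $\matA$ has i.i.d.\ Gaussian entries, $\ker(\matA)$ is a uniformly random $(n-m)$-dimensional subspace of $\R^n$, so by rotational invariance a Haar-random unit vector in $\ker(\matA)$ is marginally uniform on $\mathbb{S}^{n-1}$. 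Multiplying by the independent scalar $\sqrt{\chi^2_n} \cdot \sigma$ (distributed as the norm of $\sigma$ times a standard Gaussian in $\R^n$) recovers exactly an isotropic Gaussian, so each coordinate $x_i$ is marginally $\Norm(0, \sigma^2)$.

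With this in hand, $\Pr[|x_i| > B] \leq 2\exp(-B^2/(2\sigma^2)) = 2B^{-2K}$, so a union bound over coordinates shows that no clamping occurs except with probability $O(nB^{-\Omega(K)})$; in that case each $|e_i| \leq 1/2$ and hence $\norm{\vece}_2 \leq \sqrt{n}/2$. To bound $\norm{\matA \vece}_2$, I condition on $\vecx$: since $\vece$ depends only on $\vecx$, it suffices to understand the conditional law of $\matA$. Gaussian conditioning on the linear constraint $\matA \vecx = \zero$ yields that each row of $\matA$ is independently $\Norm(0, \matI_n - \vecx\vecx^\top/\norm{\vecx}_2^2)$, so conditioned on $\vecx$ the entries of $\matA \vece$ are i.i.d.\ Gaussian with variance at most $\norm{\vece}_2^2$. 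Thus $\norm{\matA \vece}_2^2$ is stochastically dominated by $\norm{\vece}_2^2 \cdot \chi^2_m$, and standard chi-squared concentration gives $\norm{\matA \vece}_2 \leq O(\sqrt{m})\norm{\vece}_2 \leq O(\sqrt{mn})$ except with probability $2^{-\Omega(m)}$.

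For the denominator, $\norm{\vecx}_2 = \sigma\sqrt{\chi^2_n}$ concentrates around $\sigma\sqrt{n}$, which dominates $\norm{\vece}_2$ whenever $B \gg \sqrt{K\ln^+ B}$; hence $\norm{\vecz}_2 \geq \norm{\vecx}_2 - \norm{\vece}_2 = \Omega(\sigma\sqrt{n}) = \Omega(\sqrt{n} B/\sqrt{K\ln^+ B})$. Taking the ratio yields $\norm{\matA\vecz}_2/\norm{\vecz}_2 \leq O(\sqrt{mK\ln^+ B}/B)$, as claimed. The main technical obstacle is the alternative (tighter) failure probability $2^{-\Omega(n(K\ln B)^2/B^4)}$, which is needed in the small-$B$ regime where $nB^{-K}$ is not useful. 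Here the naive union bound over clamping events fails, and instead one must concentrate $\norm{\vece}_2^2 = \sum_i (z_i - x_i)^2$ directly via a Bernstein-style argument: each summand has mean and variance controlled by the truncated Gaussian moment $\E[(|x_i|-B)_+^2]$, so the sum concentrates around its mean even when individual clamping events are not rare. Matching the stated exponent requires careful accounting of these truncated moments, but the overall strategy is standard.
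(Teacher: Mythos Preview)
Your proposal follows essentially the same route as the paper: both arguments establish that $\vecx$ is marginally $\Norm(0,\sigma^2 I_n)$, condition on $\vecx$ to view each row of $\matA$ as Gaussian with covariance $I - \vecx\vecx^\top/\norm{\vecx}^2$ (this is the paper's Claim~\ref{claim:kernel}), and then control $\norm{\vece}_2^2 = \norm{\{\vecx\}_B}^2$ in two ways---a union bound over clamping events for the $2nB^{-K}$ term, and a Bernstein-type concentration (the paper's Claim~\ref{claim:roundstats}) for the $2^{-\Omega(n(K\ln B)^2/B^4)}$ term.

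The one place your argument diverges, and where it has a small gap, is the denominator bound. Your triangle inequality $\norm{\vecz}_2 \geq \norm{\vecx}_2 - \norm{\vece}_2$ requires $\sigma = B/\sqrt{4K\ln^+ B}$ to exceed $1/2$; otherwise $\norm{\vecx}_2 \approx \sigma\sqrt{n}$ is smaller than $\norm{\vece}_2 \leq \sqrt{n}/2$ and the bound is vacuous. You flag this restriction (``whenever $B \gg \sqrt{K\ln^+ B}$''), but the theorem is stated for all $B \geq 1$. The paper avoids the issue by arguing directly that each entry of $\lceil\vecx\rfloor_B$ has magnitude at least $\sigma/2$ with probability bounded below by a constant depending only on $K$, so by a large-deviation bound $\norm{\vecz}_2 \geq \Omega(\sigma\sqrt{n})$ regardless of whether $\sigma$ exceeds $1/2$. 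Swapping in this argument patches your proof completely.
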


As the space spanned by the rows of $\matA$ is a random $m$-dimensional subspace of $\R^n$, the marginal distribution of $\vecx$ is identically $\Norm(0,B^2/(4 K \ln^+ B))^n$.  We analyze the conditional distribution of $\langle \veca, \vecx\rangle$ for a single row $\veca$ of $\matA$.

\begin{claim}
\label{claim:kernel}
Suppose $\veca \sim \Norm(0,1)^n$. Conditioned on $\vecx$ and $\langle \veca, \vecx\rangle = 0$, the random variable $\langle \veca, \lceil \vecx \rfloor_B \rangle$ is a centered normal of variance at most $\norm{\{\vecx\}_B}^2$, where $\{x\}_B = x - \lceil x \rfloor_B$ is applied coordinate-wise.
\end{claim}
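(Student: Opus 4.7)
The plan is to use the identity $\lceil \vecx\rfloor_B = \vecx - \{\vecx\}_B$ together with the conditioning $\langle \veca,\vecx\rangle = 0$ to reduce the problem to computing the conditional law of $\langle \veca, \{\vecx\}_B\rangle$. Explicitly, under the conditioning,
\[ \langle \veca, \lceil \vecx\rfloor_B\rangle \;=\; \langle \veca,\vecx\rangle - \langle \veca,\{\vecx\}_B\rangle \;=\; -\langle \veca, \{\vecx\}_B\rangle, \]
so it suffices to analyze the conditional distribution of a single linear functional of $\veca$.

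Next I would invoke the standard Gaussian conditioning fact: since $\veca \sim \Norm(0,I_n)$ is a spherical Gaussian, conditioning on the linear constraint $\langle \veca,\vecx\rangle = 0$ leaves $\veca$ Gaussian on the hyperplane $\vecx^\perp$, and any linear functional $\vecu \mapsto \langle \veca,\vecu\rangle$ remains a centered normal with variance equal to the squared length of the orthogonal projection of $\vecu$ onto $\vecx^\perp$. Concretely,
\[ \Var\bigl[\langle \veca,\vecu\rangle \,\big|\, \langle \veca,\vecx\rangle = 0 \bigr] \;=\; \|\vecu\|^2 - \frac{\langle \vecu,\vecx\rangle^2}{\|\vecx\|^2}. \]
Applying this with $\vecu = \{\vecx\}_B$ and dropping the nonnegative subtracted term yields the bound $\Var \leq \|\{\vecx\}_B\|^2$, which is exactly the claim.

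The only step that requires any care is the Gaussian conditioning identity itself, but this is completely standard (one can derive it, for instance, by decomposing $\veca = \frac{\langle \veca,\vecx\rangle}{\|\vecx\|^2}\vecx + \veca^\perp$ with $\veca^\perp$ independent of the scalar coefficient). There is no real obstacle here — the claim is essentially a one-line consequence of the fact that the rejection density of $\{\vecx\}_B$ is the orthogonal piece of $\vecx$ relative to itself after projection, and the rest is bookkeeping.
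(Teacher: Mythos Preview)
Your proposal is correct and follows essentially the same approach as the paper: both reduce to $\langle \veca, \{\vecx\}_B\rangle$ via the identity $\lceil \vecx\rfloor_B = \vecx - \{\vecx\}_B$, then use the orthogonal decomposition $\veca = \frac{\langle \veca,\vecx\rangle}{\|\vecx\|^2}\vecx + \veca^\perp$ to obtain the conditional variance formula $\|\{\vecx\}_B\|^2 - \langle \vecx,\{\vecx\}_B\rangle^2/\|\vecx\|^2$ and drop the subtracted term. The paper writes out the entrywise covariance computation explicitly whereas you invoke the projection identity directly, but the content is identical.
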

We emphasize that $\{x\}_B$ can be arbitrarily large when considering $|x| > B$.

\begin{claim}
\label{claim:roundstats}
If $N \sim \Norm(0,1)$, then the random variable
\[\left\{ \frac{B}{\sqrt{4K \ln^+ B}} \cdot  N\right\}_B^2\] 
has mean at most $1/4 + O(B^{-K+2})$ and subexponential norm $O(B^2/(K\ln^+ B))$.
\end{claim}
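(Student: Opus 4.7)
Let $\sigma = B / \sqrt{4K \ln^+ B}$ and write $Y = \{\sigma N\}_B^2$. My plan is to split on the event $|\sigma N| \leq B$, equivalently $|N| \leq \sqrt{4K \ln^+ B}$. In the ``small-$N$'' regime, $\lceil \sigma N \rfloor_B$ is just nearest-integer rounding, so $|\{\sigma N\}_B| \leq 1/2$ and hence $Y \leq 1/4$. In the ``large-$N$'' regime, $\lceil \sigma N \rfloor_B = B \sign(N)$, so $\{\sigma N\}_B = \sigma N - B \sign(N)$ has magnitude $|\sigma N| - B \leq |\sigma N|$, giving $Y \leq \sigma^2 N^2$. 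In particular the uniform bound $Y \leq 1/4 + \sigma^2 N^2$ holds always, and this single inequality will drive both parts of the claim.

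For the mean, the small-$N$ contribution is at most $1/4$, while the large-$N$ contribution is controlled by the classical truncated second-moment identity
\[ \E\bigl[N^2 \one[|N| > t]\bigr] = 2 t \phi(t) + 2(1 - \Phi(t)), \]
where $\phi, \Phi$ denote the standard normal density and CDF. Substituting $t = \sqrt{4K \ln^+ B}$ yields $\phi(t) = (2\pi)^{-1/2} B^{-2K}$ and $1 - \Phi(t) = O(B^{-2K})$, so the tail contribution to $\E[Y]$ is at most $\sigma^2 \cdot O\bigl(\sqrt{K \ln^+ B} \cdot B^{-2K}\bigr) = O(B^{2 - 2K})$, which is comfortably within the claimed $O(B^{-K+2})$ slack.

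For the subexponential norm, the uniform bound $Y \leq 1/4 + \sigma^2 N^2$ combined with a standard Gaussian tail bound gives, for $t \geq 1/4$,
\[ \Pr[Y \geq t] \leq \Pr\bigl[\sigma^2 N^2 \geq t - 1/4\bigr] \leq 2\exp\bigl(-(t - 1/4)/(2\sigma^2)\bigr), \]
while $\Pr[Y \geq t] \leq 1$ trivially for $t < 1/4$. Since $1/(2\sigma^2) = 2K \ln^+ B / B^2$, the tail decays exponentially at rate $\Omega(K \ln^+ B / B^2)$, giving subexponential norm $O(B^2 / (K \ln^+ B))$, as required. The whole argument is a clean two-case split followed by standard Gaussian tail estimates, so I do not expect any substantive obstacle; the only care needed is to verify that the loose large-$N$ bound $|\{\sigma N\}_B| \leq |\sigma N|$ is sharp enough for the mean bookkeeping, which the tail calculation above confirms.
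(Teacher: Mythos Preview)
Your proof is correct and uses the same decomposition as the paper: split on whether $|\sigma N| \leq B$ (where $|\{\sigma N\}_B| \leq 1/2$) or $|\sigma N| > B$ (where $\{\sigma N\}_B = \sigma N - B\sign N$), then apply Gaussian tail estimates. The tactical details differ. For the mean, the paper bounds the large-$N$ contribution via Cauchy--Schwarz, $\E\bigl[(LN-B)^2\one[LN>B]\bigr] \leq \sqrt{\E[(LN-B)^4]}\sqrt{\Pr[LN>B]} = O(B^{2-K})$, whereas your explicit truncated second-moment identity $\E[N^2\one[|N|>t]] = 2t\phi(t) + 2(1-\Phi(t))$ yields the sharper $O(B^{2-2K})$. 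For the tail, the paper uses the exact equality $\Pr[Y \geq t] = \Pr\bigl[|LN| \geq B + \sqrt{t}\bigr]$ valid for $t > 1/4$, while you use the cruder but sufficient uniform domination $Y \leq 1/4 + \sigma^2 N^2$; both give subexponential norm $O(\sigma^2)$. Your single pointwise inequality driving both parts is arguably the cleaner packaging, and nothing is lost relative to the paper's bounds.
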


\begin{proof}[Proof of~\Cref{thm:roundkernel}]
The quantity $\norm{\{\vecx\}_B}^2$ is a sum of $n$ independent
\[\left\{ \frac{B}{\sqrt{4K \ln^+ B}} \cdot  N\right\}_B^2\] 
random variables where $N \sim \Norm(0,1)$. By~\Cref{claim:roundstats} and Bernstein's inequality~\cite[Theorem~2.8.1]{vershynin-book}, $\norm{\{\vecx\}_B}^2$ is $O(n)$ except with probability $O(2^{-\Omega(n(K \ln B)^2/B^4)})$, which is meaningful whenever $B = O(n^{1/4})$. When $B$ is large relative to $n$, by a Gaussian tail bound and a union bound, none of the $n$ entries of $\vecx$ exceed $B$ in absolute value and so $\norm{\{\vecx\}_B}^2 \leq n/4$ except with probability $2n \cdot B^{-K}$.

By~\Cref{claim:kernel}, conditioned on $\vecx$, the entries of $\matA \lceil \vecx \rfloor_B$ are $m$ independent normals of variance $O(n)$ with the same exceptional probability. By Hoeffding's inequality $\norm{\matA \round{\vecx}_B}^2$ itself is bounded by $O(mn)$ except with additional probability $2^{-\Omega(m)}$.

Lastly, we lower bound $\norm{\round{\vecx}_B}$. Each entry of $\lceil \vecx \rfloor_B$ is of magnitude at least $B/\left(4\sqrt{K \ln^+ B}\right)$ with constant probability.  By a large deviation bound,
\[\norm{\lceil \vecx \rfloor_B} \geq \Omega\left( \frac{B \sqrt{n} }{\sqrt{K \ln^+ B} } \right),\]
except with probability $2^{-\Omega(n)}$. By a union bound, the target ratio becomes
\[ \frac{\norm{\matA \round{\vecx}_B}}{\norm{\round{\vecx}_B}} \leq O\left(\frac{\sqrt{mn}}{B \sqrt{n} / \left(\sqrt{K \ln^+ B} \right) }\right) = O\left( \frac{\sqrt{m K \ln^+ B}}{B} \right). \]
\end{proof}

\begin{proof}[Proof of~\Cref{claim:kernel}]
$\langle \veca, \lceil \vecx \rfloor_B\rangle = \langle \veca, \vecx\rangle -\langle \veca, \{\vecx\}_B\rangle$.  Conditioned on $\vecx$ and $\langle \veca, \vecx\rangle = 0$,  $\veca$ is jointly normal and centered (but not independent), so $\langle \veca, \{\vecx\}_B\rangle$ is also centered normal. Conditioning on $\inner{\veca, \vecx} = 0$ implies that $\inner{\veca, \round{\vecx}_B}$ is centered normal too.

To derive its variance, we first calculate the conditional covariances of the entries of $\veca$ given $\vecx$.  Unconditionally, the entries $a_i$ of $\veca$ are independent standard normal.  They decompose as 
\[ a_i = \frac{x_i}{\norm{\vecx}^2} \langle \veca, \vecx\rangle + a_i^\perp, \]
where $a_i^\perp$ is a centered normal independent of $\langle \veca, \vecx\rangle$ given $\vecx$. By independence and expanding out the expression,
\begin{align*}
\Cov(a_i, a_j \mid \vecx, \langle \veca, \vecx \rangle = 0) &= \Cov\left(a_i^\perp, a_j^\perp \mid \vecx, \langle \veca, \vecx \rangle = 0\right) \\
&= \Cov\left(a_i^\perp, a_j^\perp \mid \vecx\right) \\
&= \E\left[ \left(a_i - \frac{x_i}{\norm{\vecx}^2}\langle \veca, \vecx \rangle \right) \left(a_j - \frac{x_j}{\norm{\vecx}^2}\langle \veca, \vecx \rangle \right)\ \Big|\ \vecx \right] \\
&= \one[i = j] -\frac{x_ix_j}{\norm{\vecx}^2}.
\end{align*}
By the variance of sum formula,
\begin{align*}
\Var\left(\langle \veca, \{\vecx\}_B\rangle \mid \vecx, \langle \veca, \vecx \rangle = 0 \right)
&= \sum_i \{x_i\}_B^2 - \sum_{i, j} \{x_i\}_B\{x_j\}_B \cdot \frac{x_ix_j}{\norm{\vecx}^2} \\
&= \norm{\{\vecx\}_B}^2 - \biggl(\frac{\langle \vecx, \{\vecx\}_B\rangle}{\norm{\vecx}}\biggr)^2 \\
&\leq \norm{\{\vecx\}_B}^2.
\end{align*}
Once again, conditioned on $\vecx$ and $\inner{\veca, \vecx} = 0$, this implies that the variance of $\inner{\veca, \round{\vecx}_B}$ is also at most $\norm{\{\vecx\}_B}^2$. 
\hfill\qedhere
\end{proof}

\begin{proof}[Proof of \Cref{claim:roundstats}]
Let $L = B/\sqrt{4K\ln^+ B}$. Then by Cauchy-Schwarz and standard tail bounds,
\begin{align*} 
\E\left[ \{LN\}_B^2 \right]  &\leq \E\left[ \left(LN - \lceil LN \rfloor \right)^2 \cdot \one[LN \leq B] \right] + \E \left[ \left(LN - B \right)^2 \cdot \one[LN > B] \right] \\ 
&\leq \frac14 + \sqrt{\E \left[ (LN - B)^4 \right] } \sqrt{\Pr\left[ LN > B \right]} \\
&\leq \frac14 + O\left(B^2 \exp\left(-B^2/4L^2\right)\right)
\\&\leq \frac{1}{4} + O\left(B^{2-K}\right).
\end{align*}
For every $t > 1/2$,
\[ \Pr\left[ \{LN\}_B^2 \geq t \right] = \Pr\left[ \lvert LN\rvert \geq B + \sqrt{t} \right]
\leq 2 \exp\left( - \frac{(B + \sqrt{t})^2}{2L^2} \right)
= 2 \exp \left( - \left(\frac{B/\sqrt{t} + 1}{L \sqrt{2} }\right)^2 \cdot t \right).
\]
The parenthesized expression squared is at least $1/(2L^2)$, so $\{LN\}_B^2$ is subexponential of norm $O(L^2)$. 
\end{proof}

\section{Hardness}
\subsection{Overlap Gap Property}
\label{sec:hardness}

We study the typical structure of the solution space of~\eqref{eq:sbp} under the parametrization $m = \alpha n$ for fixed $\alpha, \kappa, B$ and large $n$.  Gamarnik et al.~\cite{DBLP:conf/focs/GamarnikK0X22} showed that when the solution $\vecx$ is restricted to the Boolean cube $\{\pm 1\}^n$ the multi-overlap gap property holds in the regime $\alpha \gg \kappa^2 \log 1/\kappa$.  We establish the following extension.

\begin{theorem}
\label{thm:ogp}
For all $\alpha, \kappa, B$ with 
$\kappa \ll 1/B$ and $\alpha \gg (B\kappa)^2 \log 1/B\kappa$ there exists $\beta$ and $r$ such that for all sufficiently large $n$, for all replicas $\vecx_1, \dots, \vecx_r$ such that 
\begin{equation}
\label{eq:ogp}
1 - \beta \leq \frac{\langle\/\vecx_i, \vecx_j\rangle}{\norm{\vecx_i} \norm{\vecx_j}} \leq 1 - \beta + \beta/2r \qquad\text{for all $i \neq j$},    
\end{equation}
at least one $\vecx_i$ fails to satisfy~\eqref{eq:sbp} except with probability $\exp -\Omega(\alpha r n)$.
\end{theorem}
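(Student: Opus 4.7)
The plan is to apply the first-moment method. Let $Z$ count the number of $r$-tuples $(\vecx_1, \dots, \vecx_r) \in (([-B,B] \cap \mathbb{Z})^n)^r$ whose pairwise normalized inner products all fall in the forbidden band $[1-\beta, 1-\beta+\beta/(2r)]$ and such that each $\vecx_i$ satisfies the contraction bound~\eqref{eq:sbp}. By Markov's inequality, it suffices to prove $\mathbb{E}[Z] \leq \exp(-\Omega(\alpha r n))$. I will write $\mathbb{E}[Z]$ as a sum over admissible tuples of the joint probability that all $\vecx_i$ are solutions, and bound the entropy (tuple count) and the energy (joint probability) separately, following the template of Gamarnik et al.~\cite{DBLP:conf/focs/GamarnikK0X22} for the Boolean cube and adapting the geometric estimates to the hypergrid.

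For the energy bound, I fix a tuple and decompose each $\vecx_i$ orthogonally with respect to $\mathrm{span}(\vecx_1, \dots, \vecx_{i-1})$ as $\vecx_i = \vecp_i + \vecu_i$. The overlap constraints force the Gram matrix of $\vecx_1/\norm{\vecx_1}, \dots, \vecx_r/\norm{\vecx_r}$ to be a small perturbation of $(1-\beta) J + \beta I$, whose spectrum $\{1+(r-1)(1-\beta), \beta, \dots, \beta\}$ gives $\norm{\vecu_i}_2^2 = \Theta(\beta \norm{\vecx_i}_2^2)$ for $i \geq 2$. Conditioned on $\matA \vecx_1, \dots, \matA\vecx_{i-1}$, the vector $\matA\vecu_i$ is a fresh centered $m$-dimensional Gaussian with per-coordinate variance $\norm{\vecu_i}_2^2$; applying the triangle inequality $\norm{\matA\vecu_i} \leq \norm{\matA\vecx_i} + \norm{\matA\vecp_i}$ together with the $\chi^2_m$ small-ball estimate $\Pr[\chi^2_m \leq \tau m] \leq (e\tau)^{m/2}$ chains into
\[
\Pr\bigl[\text{all $\vecx_i$ satisfy \eqref{eq:sbp}}\bigr] \leq \left(\frac{C\kappa^2}{\beta}\right)^{rm/2}
\]
for an absolute constant $C$, up to low-order factors handling the first vector (which only gets $\kappa^m$).

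For the entropy bound, I count tuples in stages. There are at most $(2B+1)^n$ choices for $\vecx_1$. For each subsequent $\vecx_i$, the requirement of prescribed normalized inner product within a window of width $\beta/(2r)$ around $1-\beta$ with every earlier vector confines $\vecx_i$ to a thin cone around $\mathrm{span}(\vecx_1, \dots, \vecx_{i-1})$. A local-CLT or volumetric estimate for hypergrid points in such a cone bounds the number of valid $\vecx_i$'s by $(2B+1)^n \cdot (C\beta)^{n/2}/r$ up to polynomial factors, mirroring the spherical-cap measure $\beta^{n/2}$. This yields a total tuple count at most $(2B+1)^{rn} \cdot \beta^{(r-1)n/2} \cdot \mathrm{poly}(n,r)$.

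Combining, the log of the expected count is at most
\[
rn\ln(2B+1) - \tfrac{(r-1)n}{2}\ln(1/\beta) - \tfrac{rm}{2}\ln\!\left(\tfrac{\beta}{C\kappa^2}\right) + O_r(\log n).
\]
Choosing $\beta$ a small multiple of $(B\kappa)^2 \log(1/(B\kappa))$ and $r$ a sufficiently large constant, the hypothesis $\kappa \ll 1/B$ makes $\beta \ll 1$ and makes $\beta/\kappa^2$ large, so the energy term $\tfrac{rm}{2}\ln(\beta/(C\kappa^2))$ dominates the entropy terms; the hypothesis $\alpha \gg (B\kappa)^2 \log(1/(B\kappa))$ ensures the residual slack is $\Omega(\alpha r n)$. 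The main obstacle is the hypergrid cap-counting estimate: unlike $\{\pm 1\}^n$, where Hamming-ball cardinalities give a clean binomial bound, the hypergrid requires combining lattice-point anticoncentration for the inner product with an efficient net argument over the (now variable) norms $\norm{\vecx_i}$, and it is here that the $(C\beta)^{n/2}$ factor and the logarithmic overhead in $1/(B\kappa)$ are traded against each other.
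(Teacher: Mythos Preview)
Your first-moment skeleton matches the paper, and your energy bound via Gram--Schmidt plus Anderson's inequality is a legitimate alternative to the paper's determinant computation (Claim~\ref{claim:determinant}); both give $(C\kappa^2/\beta)^{\Theta(rm)}$.

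The gap is your entropy bound. The claim that the number of hypergrid points in a cone of half-angle $\Theta(\sqrt\beta)$ around $\vecx_1$ is at most $(2B+1)^n\cdot (C\beta)^{n/2}$ is false. The spherical-cap analogy breaks down over lattices: for $\beta B^2\ll 1$ your bound is below $1$, yet the cone always contains $\vecx_1$ itself (and in fact $\exp\Theta(n\beta B^2\ln(1/\beta B^2))$ points). Even for $B=1$ on $\{\pm 1\}^n$ the count at overlap $1-\beta$ is $\binom{n}{\Theta(\beta n)}\approx\exp(\Theta(n\beta\ln(1/\beta)))$, not $2^n\beta^{n/2}$. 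The paper handles this with a covering argument (Claims~\ref{claim:angle}--\ref{claim:cover}): cover the cone by $O((\log B)/\sqrt\beta)$ balls of radius $\Theta(B\sqrt{\beta n})$ along the axis and count integer points per ball via Claim~\ref{claim:ball}, giving $\exp O(n\beta B^2\ln(1/\beta B^2))$ per replica after the first.

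This wrong entropy then propagates. With your $\beta\sim(B\kappa)^2\log(1/B\kappa)$, your displayed balance needs $(r-1)\ln(1/\beta)\gtrsim r\ln B$, i.e.\ $\kappa\lesssim 1/B^2$, which is strictly stronger than the hypothesis $\kappa\ll 1/B$. The paper instead sets $\beta=C\kappa^2$; the correct entropy is then $\exp O\bigl(n\ln B+n(r-1)(B\kappa)^2\ln(1/B\kappa)\bigr)$, and the energy term $\exp(-\Omega(\alpha n r))$ dominates precisely when $\alpha\gg(B\kappa)^2\ln(1/B\kappa)$ and $r$ is taken large enough to absorb the $n\ln B$ term.
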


This theorem can be interpreted as evidence for $\problem$ being hard in the regime $\alpha \gg B^2 \kappa^2 \log 1/\kappa$. 

In the algorithmic study of random disordered systems, algorithmic efficiency is predicted by ``replica symmetry'' of the solution space.  Under the replica symmetric model, the uniform distribution over solutions of $\norm{\matA \vecx} \leq \kappa m$ is approximated by a product distribution $P$ over $\{\pm 1\}^n$ with different biases across coordinates.  

Assuming $P$ has sufficient entropy, the overlap $\langle\/\vecx, \vecx'\rangle / \norm{\vecx} \norm{\vecx'}$ between two random solutions ought to be bounded away from 1.  Let $B(t)$ be the ``bouquet'' of vectors $\vecx_1, \dots, \vecx_r$ in which the first $t$ coordinates are sampled identically from $P$ and the rest are sampled independently from $P$.  By the law of large numbers all pairwise overlaps in $B(t)$ will typically concentrate around some value $1 - \beta(t)$.  As $\beta(0)$ is bounded away from zero, $\beta(n)$ is zero, and $\beta$ is $O(1/n)$-Lipschitz, by the intermediate value theorem $\beta(t)$ is bound to hit the interval $(\beta - \beta/2r, \beta)$ for some $t$, contradicting Theorem~\ref{thm:ogp}.

While is difficult to justify the accuracy of the replica symmetric model, and algorithmic easiness may persist even under replica symmetry breaking, the underlying logic can be used to rigorously rule out certain natural classes of algorithms.  In \Cref{sec:onlinehard} we prove that a natural extension of \Cref{thm:ogp}, known as the ensemble (multi) overlap gap property, rules out online algorithms under a certain stability restriction in the claimed regime.  Before stating this extension it is instructive to see the proof of \Cref{thm:ogp}.

\subsection{Proof of the Overlap Gap Property}
\label{sec:ogpproof}

The proof goes by a first moment (annealed) estimate.  Claim~\ref{claim:cover} is used to count the number of forbidden configurations $\vecx_1, \dots, \vecx_r$.  \Cref{claim:determinant}, together with the explicit formula for the multivariate Gaussian PDF, is used to bound the probability of any such configuration simultaneously solving~\eqref{eq:sbp}.  For the parameter regime of interest the expected number of configuration, which is the product of these two numbers, is close to zero.

\begin{claim}
\label{claim:angle}
Assuming $t \leq 1/12B$
 the number of nonzero $B$-bounded points  $\vecx \in \Z^n$ that are within angle $\arctan t$ of some fixed $\vecx_0$ is at most $O((1 + \ln B)/t) \cdot \exp O(n t^2 B^2 \ln 1/t B)$.
\end{claim}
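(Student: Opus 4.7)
The plan is to count integer points in the cone by a dyadic decomposition on $\norm{\vecx}_\infty$ combined with a transverse lattice-counting argument at each scale. Write $\vecv = \vecx_0/\norm{\vecx_0}$ and parametrize each candidate as $\vecx = s\vecv + \vecu$ with $\vecu \perp \vecv$, so the angular condition becomes $\norm{\vecu} \leq t\lvert s\rvert$. By sign symmetry I restrict to $s > 0$ at the cost of a factor $2$. Partition the candidate integer points by $\norm{\vecx}_\infty \in (L/2, L]$ for dyadic $L \in \{1, 2, 4, \dots, 2^{\lceil \log_2 B\rceil}\}$; this introduces $O(1+\ln B)$ levels, which is the source of the $(1+\ln B)$ factor in the claim. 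At level $L$, I have $\norm{\vecx}_2 \leq L\sqrt{n}$, so $s \leq L\sqrt{n}$ and $\norm{\vecu} \leq tL\sqrt{n}$.

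Within each level I further subdivide $s$ dyadically into ranges $[s_0, 2s_0]$ and count integer $\vecx$ with $\inner{\vecx, \vecv} \in [s_0, 2s_0]$ and $\norm{\vecu} \leq 2ts_0$. Such an $\vecx$ lies in a ``thin tube'' around $\vecx_0 \mathbb{R}$; its projection into the hyperplane $\vecx_0^\perp$ is a lattice point of the rank-$(n-1)$ sublattice $\Lambda_0 = \{\vecy \in \Z^n : \inner{\vecy, \vecx_0} = 0\}$ (using an integer approximation of $\vecx_0$ when needed), whose covolume is $\sim \norm{\vecx_0}$. Applying a Minkowski-type bound, the number of lattice points in an $(n-1)$-dimensional ball of radius $r$ in $\Lambda_0$ is at most $\max(1, O(V_{n-1} r^{n-1}/\norm{\vecx_0}))$, with constants uniform in $\vecx_0$. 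Integrating this over the discrete admissible values of $s$ gives a per-level bound of the form $O(1/t) \cdot \exp(O(n t^2 L^2 \ln(1/tL)))$: the $1/t$ factor arises from the step size of the $s$-discretization (combined with the angular slack $t$), while the exponential encodes the ``effective combinatorial volume'' of the transverse lattice cap at the largest sub-scale $s_0 \approx L\sqrt{n}$.

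Summing the per-level bounds over the $O(1+\ln B)$ levels, the top level $L = B$ dominates through the exponential factor, yielding the overall bound $O((1+\ln B)/t) \cdot \exp(O(n t^2 B^2 \ln(1/tB)))$ as claimed.

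The main obstacle is achieving the exponent $n(tL)^2 \ln(1/tL)$ in the per-level count, which is significantly tighter than the naive ``volume of tube'' estimate $n \ln(tL)$. Obtaining it requires exploiting the discrete structure of the transverse lattice: in the regime $t \leq 1/(12B)$ enforced by the hypothesis, most sub-scales have a transverse ball smaller than the shortest vector of $\Lambda_0$, so each corresponding slice contributes $0$ or $1$ lattice points rather than the full ``volume/covolume'' count, and only the top sub-scales contribute a nontrivial volume factor. Carefully balancing the ``sparse'' and ``dense'' sub-slice regimes uniformly over all choices of $\vecx_0 \in \R^n$ is the crux of the argument.
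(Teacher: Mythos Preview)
Your outline has a genuine gap precisely at the step you yourself flag as the crux. The transverse-lattice device does not go through as stated: for generic $\vecx_0 \in \R^n$ the set $\Lambda_0 = \{\vecy \in \Z^n : \langle\vecy,\vecx_0\rangle = 0\}$ has rank zero, and the ``integer approximation of $\vecx_0$'' introduces a new scale you never control. Even when $\vecx_0$ is a primitive integer vector, the orthogonal projection of an integer point onto $\vecx_0^\perp$ lands in the projected lattice (of covolume $1/\norm{\vecx_0}$), not in $\Lambda_0$ (of covolume $\norm{\vecx_0}$); equivalently, each admissible value of $s$ indexes a \emph{coset} of $\Lambda_0$, and the per-slice count depends on the successive minima of $\Lambda_0$, which vary with $\vecx_0$. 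Most importantly, a Minkowski-type estimate $\max(1,\text{volume}/\text{covolume})$ cannot by itself produce the exponent $n(tL)^2\ln(1/tL)$: under the hypothesis $tL \leq tB \leq 1/12$ the transverse Euclidean volume is below $1$ at every sub-scale, so the ``dense sub-slice'' regime you invoke is empty and the sparse/dense balancing you describe never takes place. You have named the obstacle but not supplied the mechanism that overcomes it.

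The paper sidesteps all dependence on the direction of $\vecx_0$ by working directly in $\R^n$. It covers the cone by a single geometric sequence of Euclidean balls with centers $(1+t)^i\,\vecx_0/\norm{\vecx_0}$ and radii $\sqrt{5}\,t(1+t)^i$ (Claim~\ref{claim:cover}); after discarding balls too small to contain a nonzero integer point or too large to meet the $B$-bounded hypergrid, $O((1+\ln B)/t)$ balls remain, the largest of radius $O(tB\sqrt{n})$. The integer-point count per ball is then supplied by Claim~\ref{claim:ball} (the Mazo--Odlyzko theta-function bound): any Euclidean ball of radius $\sqrt{\rho n}$, centered anywhere, contains at most $\exp O(n\rho\ln(1/\rho))$ integer points. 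Taking $\rho = O((tB)^2)$ yields exactly the exponent in the claim. That lemma---not a Minkowski bound---is the ingredient your outline is missing, and once it is in hand the outer dyadic decomposition on $\norm{\vecx}_\infty$ is unnecessary.
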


\begin{claim}
\label{claim:ball}
For $\rho \leq 1/2$ an $n$-dimensional ball of radius $\sqrt{\rho n}$ contains at most $\exp O(n \rho \log 1/\rho)$ integer points.
\end{claim}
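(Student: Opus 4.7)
The plan is to bound the count by conditioning on the support size $s = |\mathrm{supp}(\vecx)|$ of an integer point $\vecx \in \Z^n$ with $\norm{\vecx}_2^2 \leq \rho n$. Since each nonzero integer coordinate contributes at least $1$ to the squared norm, one has $s \leq \rho n$, so the integer points are forced to be sparse. This sparsity is the source of the exponential savings over the naive cube-packing bound $V_n \cdot (\sqrt{\rho n} + \sqrt{n}/2)^n = \exp(O(n))$, which is too weak.

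For each fixed $S \subseteq [n]$ with $|S| = s$, I would bound the number of integer vectors supported on $S$ whose squared norm is at most $\rho n$ by the standard translation argument: the translates $\vecx + [-\tfrac12, \tfrac12]^s$ of such integer points are disjoint and contained in the ball of radius $\sqrt{\rho n} + \sqrt{s}/2 \leq (3/2)\sqrt{\rho n}$ in $\R^s$ (using $s \leq \rho n$). Combining with the Stirling estimate $V_s \leq (2\pi e/s)^{s/2}$ for the volume of the unit $s$-ball yields at most $(C\rho n/s)^{s/2}$ integer points for an absolute constant $C$. Summing over supports gives
\[
\sum_{s=0}^{\lfloor \rho n \rfloor} \binom{n}{s} (C\rho n/s)^{s/2} \;\leq\; \sum_{s=0}^{\lfloor \rho n \rfloor} (en/s)^{s} (C\rho n/s)^{s/2}.
\]

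Parametrizing $s = un$ with $u \in [0, \rho]$, each summand equals $\exp(n f(u))$ with
\[
f(u) \;=\; u\Bigl[\tfrac{3}{2}\log(1/u) + \tfrac{1}{2}\log(C\rho) + 1\Bigr].
\]
A short derivative calculation locates the unconstrained critical point at $u^{\ast} = (C\rho/e)^{1/3}$, which exceeds $\rho$ for all sufficiently small $\rho$ (and in particular for $\rho \leq 1/2$ after tuning constants; the remaining bounded range of $\rho$ is trivial since $n\rho \log(1/\rho)$ is bounded below). Hence $f$ is increasing on $[0,\rho]$, and its maximum $f(\rho) = \rho[\log(1/\rho) + O(1)] = O(\rho \log 1/\rho)$ is attained at the endpoint.

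The only ``obstacle'' is bookkeeping: verifying that summing $\lfloor \rho n\rfloor + 1 \leq n+1$ terms, each bounded by $\exp(O(n\rho \log 1/\rho))$, does not inflate the exponent. The factor $n+1$ is absorbed because $n\rho \log(1/\rho) \gtrsim \log n$ in any regime where the bound is non-trivial (and otherwise $\rho n < 1$ forces $\vecx = \zero$ and the claim is immediate). This yields the desired $\exp(O(n\rho \log 1/\rho))$ bound.
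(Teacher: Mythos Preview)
Your proposal is correct but takes a genuinely different route from the paper. The paper's proof is a two-line application of the Mazo--Odlyzko Gaussian-measure bound: for any center $\vecy$ and any $s>0$, the count is at most $e^{s\rho n}\prod_i\sum_k e^{-s(k-y_i)^2}$, and the choice $s=\ln(1/\rho)$ collapses this to $e^{n\rho\ln(1/\rho)}(1+O(\rho))^n$. Your argument is combinatorial: exploit that any integer point in the ball has at most $\rho n$ nonzero coordinates, stratify by the support, and volume-pack within each $s$-dimensional stratum. Both yield the same exponent; yours is more elementary (no theta-function input) and makes the source of the $\rho\log(1/\rho)$ rate---the entropy of a $\rho$-sparse support---visible, while the paper's proof is shorter and parameter-free in the sense that it never needs to locate the optimizing stratum.

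One caveat: as written, your argument treats only the origin-centered ball ($\norm{\vecx}_2^2\le\rho n$), whereas the claim---and its application in the proof of Claim~\ref{claim:angle}, where the balls are centered at the non-lattice points $(1+t)^i\vecx_0$---requires an arbitrary center. This is easily patched: translate so that the center $\vecy$ lies in $[-\tfrac12,\tfrac12)^n$; then any integer $\vecx$ in the ball satisfies $(x_i-y_i)^2\ge 1/4$ whenever $x_i\neq 0$, so the support size is still at most $4\rho n$ and your stratification goes through with adjusted constants.
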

\begin{proof}
The number of integer points within distance $\sqrt{\rho n}$ of $\vecy$ is at most~\cite[Lemma 1]{mazo-odlyzko}
\[ \exp(s \rho n) \prod_i \sum_k \exp(-s(k - y_i)^2) \]
for every $s > 0$. The summation is minimized at $y_i = 0$ giving an upper bound of
\[ \exp(s \rho n) \biggl( \sum_k \exp(-sk^2)\biggr)^n \leq \exp(s \rho n) \biggl(1 + \frac{2e^{-s}}{1 - e^{-s}}\biggr)^n \]
Setting $s = \ln 1/\rho$ produces the desired bound.  
\end{proof}

\begin{claim}
\label{claim:cover}
Assume $t \leq 1$ and $\norm{\vecx_0} = 1$.  The union of balls with centers $(1 + t)^i \vecx_0$ and radii $\sqrt{5} t(1 + t)^i$ where $i$ ranges over the integers covers all nonzero $\vecx$ within angle  $\arctan t$ of $\vecx_0$.
\end{claim}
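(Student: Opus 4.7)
The plan is to carry out a direct geometric covering argument by decomposing any candidate $\vecx$ into its component along $\vecx_0$ and its component perpendicular to $\vecx_0$, and then choosing the integer $i$ according to how large the parallel component is. Concretely, write $\vecx = c \vecx_0 + \vecx^\perp$ with $\vecx^\perp \perp \vecx_0$. Since the angle between $\vecx$ and $\vecx_0$ is at most $\arctan t \leq \pi/4$, the parallel component $c$ must be positive, and the angle condition translates into $\norm{\vecx^\perp} \leq c t$.

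Next I would choose the unique integer $i$ with $(1 + t)^i \leq c < (1 + t)^{i+1}$ and show that the candidate ball centered at $(1+t)^i \vecx_0$ with radius $\sqrt{5}\, t (1+t)^i$ contains $\vecx$. The Pythagorean decomposition gives
\[
\norm{\vecx - (1+t)^i \vecx_0}^2 = \bigl( c - (1+t)^i \bigr)^2 + \norm{\vecx^\perp}^2.
\]
The first term is at most $\bigl((1+t)^{i+1} - (1+t)^i\bigr)^2 = t^2 (1+t)^{2i}$ by the choice of $i$, and the second is at most $c^2 t^2 \leq t^2 (1+t)^{2i+2}$.

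Adding them and factoring out $t^2(1+t)^{2i}$ gives an upper bound of $t^2(1+t)^{2i}\bigl(1 + (1+t)^2\bigr)$. The assumption $t \leq 1$ forces $(1+t)^2 \leq 4$, hence $1 + (1+t)^2 \leq 5$, and thus $\norm{\vecx - (1+t)^i \vecx_0} \leq \sqrt{5}\, t (1+t)^i$, as desired.

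There is no real obstacle here; the only subtlety is making sure that the angle bound $\arctan t$ legitimately forces $c > 0$ so that the integer $i$ is well-defined (including for $c$ arbitrarily small, where $i$ is simply a large negative integer), and choosing the splitting so that the parallel error from rounding $c$ to the grid $(1+t)^i$ and the transverse error $\norm{\vecx^\perp}$ combine to exactly the advertised radius. The constant $\sqrt{5}$ is what falls out of the worst case $t = 1$; for smaller $t$ the bound is strictly better, but stating it uniformly with $\sqrt{5}$ is what is needed for the subsequent counting in \Cref{claim:angle}.
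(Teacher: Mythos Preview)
Your proof is correct and is essentially identical to the paper's own argument: decompose $\vecx$ into its component $c\vecx_0$ along $\vecx_0$ (the paper calls this $r\vecx_0$) and its orthogonal part, pick the largest $i$ with $(1+t)^i \leq c$, bound the parallel error by $t(1+t)^i$ and the perpendicular error by $ct \leq t(1+t)^{i+1}$, and combine via Pythagoras to get $t(1+t)^i\sqrt{1+(1+t)^2} \leq \sqrt{5}\,t(1+t)^i$.
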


\begin{center}
\begin{tikzpicture}[nstyle/.style={shape=circle, inner sep=2pt,draw, fill=white}]

\node (O) [nstyle] at (0, 0) {};

\foreach \i in {.111, .179, .286, .457, .732, 1.172, 1.875} 
{
  \fill[black!10] (\i, 0) -- (\i, .9 * \i) arc (90:30:.9 * \i) -- cycle;
}

\draw [thin] (3, .1) -- (3, -.1);
\draw [thin] (4.8, .1) -- (4.8, -.1);

\fill[black!20] (3, 0) -- (3, .9 * 3) arc (90:30:.9 * 3) -- cycle;

\draw [thin] (1.3, 0) -- (5.5, 0);
\draw [thick, blue, ->] (O) edge node [below] {$\vecx_0$} (1.25, 0);
\draw [thin] (O) -- (5, 2);

\node (x) [nstyle, label={$\vecx$}] at (4, 1.6) {};
\coordinate (x0) at (4, 0);
\draw [thick] (x) edge node [right] {$rt$} (x0);
  
\node at (3, -.3) {$(1 + t)^i$};
\node at (4, -.3) {$r$};
\node at (5.15, -.3) {$(1 + t)^{i+1}$};

\draw [thin, ->] (1, 0) arc (0:45:1);
\node at (-.1, .8) {$\arctan t$};
\end{tikzpicture}
\end{center}

\begin{proof}
Let $r\vecx_0$, $r > 0$ be the projection of $\vecx$ in the direction of $\vecx_0$.  Take the largest $i$ such that $(1 + t)^i \leq r$.  The distance between $\vecx$ and its projection $r\vecx_0$ is at most $r t$.  By Pythagoras' theorem the distance from $\vecx$ to the center of ball $i$ is at most
\begin{align*} 
\sqrt{r^2 t^2 + \bigl((1 + t)^{i + 1} - (1 + t)^i\bigr)^2} 
&\leq \sqrt{r^2 t^2 + (1 + t)^{2i} t^2} \\
&< (1 + t)^i t \sqrt{(1 + t)^2 + 1}
\end{align*}
because $r < (1 + t)^{i+1}$.  As $t \leq 1$ the square root is at most $5$.
\end{proof}

\begin{proof}[Proof of Claim~\ref{claim:angle}]
Those points are covered by the balls in Claim~\ref{claim:cover}.  We can discard the balls indexed by (negative) $i$ such that $(1 +  \sqrt{5})(1 + t)^i < 1$
as they fit into the unit ball and do not cover any integer points.  We can also discard the balls indexed by $i$ with $(1 + t)^i > (1 + \sqrt{5})B \sqrt{n}$ as they only cover points of magnitude larger than $B \sqrt{n}$.  The largest of the remaining balls has radius at most $(5 + \sqrt{5}) Bt / \sqrt{n} \leq \sqrt{n/2}$ so by Claim~\ref{claim:ball} it contains at most $\exp O(n (B t)^2 \log 1/Bt)$ points.  The number of such balls is $O((1 + \ln B)/t)$.
\end{proof}

\begin{claim}
\label{claim:determinant}
An $r$ by $r$ matrix with diagonal $1$ and off-diagonal entries between $1 - \beta$ and $1 - \beta + \beta/2r$ is positive semidefinite and has determinant at least 
$(\beta/2)^{r-1} (\beta/2 + (1 - \beta)r)$.
\end{claim}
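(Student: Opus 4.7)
The plan is to decompose the matrix $M$ as $M = M_0 + E$, where $M_0$ is the ``baseline'' matrix with diagonal $1$ and all off-diagonal entries exactly $1-\beta$, and $E$ captures the perturbation. The point of this decomposition is that $M_0$ has a fully explicit spectrum: writing $M_0 = \beta I + (1-\beta)J$ with $J$ the all-ones $r \times r$ matrix, the vector $\mathbf{1}$ is an eigenvector with eigenvalue $\beta + (1-\beta)r$, and its $(r-1)$-dimensional orthogonal complement consists entirely of eigenvectors with eigenvalue $\beta$.

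Next I would control the perturbation $E = M - M_0$. By hypothesis $E$ is symmetric with zero diagonal and off-diagonal entries in $[0, \beta/(2r)]$. Each row of $E$ has $r-1$ such entries, so Gershgorin's circle theorem applied to the symmetric matrix $E$ gives $\|E\|_2 \leq (r-1) \cdot \beta/(2r) < \beta/2$.

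The final step is to combine these ingredients via Weyl's inequality for eigenvalues of sums of symmetric matrices: ordering eigenvalues in decreasing order,
\[ \lambda_i(M) \;\geq\; \lambda_i(M_0) - \|E\|_2. \]
Plugging in the spectrum of $M_0$ yields $\lambda_1(M) \geq \beta/2 + (1-\beta)r$ and $\lambda_i(M) \geq \beta/2$ for $i = 2, \dots, r$. Since all eigenvalues are strictly positive, $M$ is positive (semi)definite, and multiplying them gives the claimed determinant lower bound $(\beta/2)^{r-1}\bigl(\beta/2 + (1-\beta)r\bigr)$.

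I do not anticipate a significant obstacle. The only subtle point worth flagging is that although $E$ has nonnegative off-diagonal entries, this alone does not force $E \succeq 0$, so one cannot reach the conclusion by a simple monotonicity argument on $M_0 \preceq M$; the Weyl perturbation bound is what does the real work. The parameter $\beta/(2r)$ in the hypothesis is tuned precisely so that the Gershgorin estimate on $\|E\|_2$ beats the smallest eigenvalue $\beta$ of $M_0$ by exactly a factor of two, which is exactly what the statement of the claim demands.
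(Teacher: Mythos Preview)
Your proposal is correct and matches the paper's approach essentially step for step: the paper writes the matrix as $\beta I + (1-\beta)J$ plus a perturbation of spectral norm at most $\beta/2$, reads off the eigenvalues $\beta$ (multiplicity $r-1$) and $\beta + (1-\beta)r$ of the base matrix, and then invokes Weyl to shift each by at most $\beta/2$. The only cosmetic difference is that the paper bounds the perturbation's spectral norm via the entrywise infinity norm rather than Gershgorin, giving $\leq \beta/2$ instead of your slightly sharper $<(r-1)\beta/(2r)$.
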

\begin{proof}
This matrix has the form $(1 - \beta)I + \beta J + (\beta / 2r) E$ for some $E$ of infinity-norm at most $1$ so spectral norm at most $r$.  Here $I$ and $J$ are the identity and the all-ones matrices.  The eigenvalues of $(1 - \beta)I + \beta J$ are $\beta$ of multiplicity $r - 1$ and $\beta + (1 - \beta)r$.  
The $(\beta / 2r) E$ term cannot change them by more than $\beta / 2$ each giving the desired bounds.
\end{proof}

\begin{proof}[Proof of Theorem~\ref{thm:ogp}]
Let $C$ be a sufficiently large absolute constant.  For fixed $\vecx_1$, the number of $\vecx_j$ with the required correlation is at most $O_n(1) \cdot \exp O(n \beta B^2 \ln 1/\beta B^2)$ by Claim~\ref{claim:angle}, provided $\beta < 1/C B^2$.   Thus the number of requisite configurations $\vecx_1, \dots, \vecx_r$ is at most $\exp O(n \ln B + n(r - 1) \beta B^2 \ln(1 / \beta B^2))$.  

If~\eqref{eq:sbp} holds for all $\vecx_i$ then all but a $1/4$ fraction of the normalized dot products $\langle \veca, \vecx_i\rangle/\norm{\veca}\norm{\vecx_i}$ are bounded by $4\kappa$ as $a$ ranges over the rows of $\matA$. 
We bound the probability that some fixed subset $S$ of $3rm/4$ correlations are simultaneously at most $4\kappa$.
At least half the $\veca$s then have normalized dot product at most $4\kappa$ with some $r/2$ of the $\vecx_i$s.
The normalized dot products between $\veca$ and $\vecx_1, \dots, \vecx_r$ are 
jointly normal with pairwise correlations between $1 - \beta$ and $1 - \beta + \beta/2r$.  By a union bound the probability that all $r' = r/2$ of them are at most $4\kappa$ is at most
\begin{equation}
\label{eq:det}
\frac{1}{(2\pi)^{r'/2} D^{1/2}} (8\kappa)^{r'} 
\end{equation}
where $D$ is the determinant of the correlation matrix restricted to these $r'$ entries.  By Claim~\ref{claim:determinant} $D$ is at least $(\beta/2)^{r'}$ which gives an overall probability of at most $O(\kappa^2 / \beta)^{r/4}$ for a given $\veca$.  As the rows of $\matA$ are independent the overall probability is at most $O(\kappa^2 / \beta)^{mr/8}$.  By a union bound, the probability that such an $S$ exists is of the same order as there are at most $2^{mr}$ choices for $S$.

The expected number of configurations that solve~\eqref{eq:sbp} is therefore at most
\[ Z = \exp O(n \ln B + n(r - 1) \beta B^2 \ln(1 / \beta B^2)) - (\alpha nr/8) \ln (\beta/\kappa^2) + O(\alpha nr), 
\]
where we replaced $m$ by $\alpha n$.
Setting $\beta = C\kappa^2$, $\log_n Z$ becomes negative in the claimed regime $\alpha \geq C^2 (B\kappa)^2 \log C/B\kappa$ when $r$ is sufficiently large.  By Markov's inequality $\vecx_1, \dots, \vecx_r$ cannot simultaneously solve~\eqref{eq:sbp} with high probability in the limit $n \to \infty$.
\end{proof}

\subsection{Hardness for Online Algorithms}
\label{sec:onlinehard}

The ensemble OGP refers to a correlated ensemble of instances $\matA_1, \dots, \matA_r$.  The condition posits that among all $\vecx_1, \dots, \vecx_r$ satisfying~\eqref{eq:ogp}, at least one pair $\vecx_i$ fails to solve instance $\matA_i$ of~\eqref{eq:sbp}.

The proof of \Cref{thm:ogp} extends readily to establish ensemble OGP for any collection of instances $\mathcal{A}(t) = (\matA_1, \dots, \matA_r)$ in which the first $t$ columns are sampled identically and the other $n - t$ are sampled independently across instances, for any $t$.  The reason is that the value of the determinant $D$ in~\eqref{eq:det} may only increase and all other quantities are preserved.

In contrast, we argue that any ``norm-concentrated'' online algorithm  admits a choice of $t$ for which the resulting solutions do satisfy~\eqref{eq:ogp}.  An algorithm is $(\epsilon, \delta)$-norm concentrated if the norm of the output $\vecx$ is within some $\epsilon \sqrt{n}$-interval of the median except with probability at most $\delta$.  Our algorithm $\cool$ is $(1, 0)$ concentrated.

The analysis relies on another mild assumption that can be enforced.  We say the algorithm is nice if the (unconditional) distribution of the entries of $\vecx$ is symmetric and the entries have independent signs. 

\begin{claim}
\label{claim:nice}
Every online algorithm that on input $\matA$ produces a solution $\vecx$ to~\eqref{eq:sbp} can be made nice without affecting its success probability and the norm of the output.
\end{claim}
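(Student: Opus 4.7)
\textbf{Proof plan for Claim~\ref{claim:nice}.}
The plan is to randomize the given online algorithm $\mathcal{A}$ by independently flipping the sign of every coordinate both on input and on output. Formally, I would sample signs $\epsilon_1, \dots, \epsilon_n \in \{\pm 1\}$ uniformly and independently (either in advance, or each $\epsilon_t$ at time $t$); on receiving the $t$th column $\veca_t$, feed $\epsilon_t \veca_t$ to $\mathcal{A}$, read off its output $x'_t$, and output $x_t = \epsilon_t x'_t$.  The modified algorithm remains online because $x_t$ depends only on $\veca_1, \dots, \veca_t$ and the signs $\epsilon_1, \dots, \epsilon_t$.

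Next I would verify that success probability and output norm are preserved. Because each column $\veca_t \sim \Norm(0,1)^m$ is sign-symmetric, the matrix $\matA' = (\epsilon_1 \veca_1 \mid \cdots \mid \epsilon_n \veca_n)$ fed to $\mathcal{A}$ has the same distribution as $\matA$, so $\mathcal{A}$ succeeds with its original probability on $\matA'$. By linearity,
\[
\matA \vecx = \sum_t (\epsilon_t x'_t)\, \veca_t = \sum_t x'_t \,(\epsilon_t \veca_t) = \matA' \vecx',
\]
and $\lvert x_t\rvert = \lvert x'_t\rvert$ coordinatewise, hence $\norm{\matA \vecx} = \norm{\matA' \vecx'}$ and $\norm{\vecx} = \norm{\vecx'}$. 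Thus $\vecx$ satisfies~\eqref{eq:sbp} for $\matA$ if and only if $\vecx'$ does for $\matA'$, and the marginal distribution of $\norm{\vecx}$ is identical to the original.

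The only point requiring some care is niceness. I would first observe that the map $(\veca_t, \epsilon_t) \mapsto (\epsilon_t \veca_t, \epsilon_t)$ preserves the product distribution of the $\veca_t$'s and $\epsilon_t$'s, so $\matA'$ and $\vec{\epsilon}$ are independent and $\matA'$ has the law of $\matA$. Since $\vecx'$ is a function of $\matA'$ and the (independently sampled) internal randomness of $\mathcal{A}$, it is independent of $\vec{\epsilon}$. Therefore $\vecx = (\epsilon_1 x'_1, \dots, \epsilon_n x'_n)$ with $\vec{\epsilon}$ uniform on $\{\pm 1\}^n$ and independent of $\vecx'$. Conditioning on $\vecx'$ and any $\vec{s} \in \{\pm 1\}^n$,
\[
\Pr[\sign(x_t) = s_t \text{ for all } t \mid \vecx'] = \prod_t \Pr[\epsilon_t = s_t \sign(x'_t) \mid \vecx'] = 2^{-n},
\]
so the sign vector of $\vecx$ is uniform on $\{\pm 1\}^n$ and, after marginalizing, each entry of $\vecx$ is symmetric about the origin. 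This is the desired niceness, and no step affects the online structure, success probability, or norm of the output.
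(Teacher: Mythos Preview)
Your proof is correct and follows essentially the same approach as the paper: sample sign flips in $\{\pm 1\}^n$, feed the column-flipped matrix to the original algorithm, and undo the flips on the output. The paper's one-line proof writes $\vecr \in \{\pm 1\}^m$, which appears to be a typo for $\{\pm 1\}^n$; your version is the intended construction, and your explicit verification that $\matA'$ and $\vec{\epsilon}$ are independent (hence $\vecx'$ is independent of $\vec{\epsilon}$) is more careful than what the paper spells out.
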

\begin{proof}
Choose a random $\vecr \in \{\pm 1\}^m$. 
Multiply the columns of $\matA$ pointwise by $\vecr$, run the algorithm then multiply the solution $\vecx$ pointwise by $\vecr$.  The pointwise multiplications cancel out, preserving the solution.  The solution is symmetrized and its entries become sign-independent.
\end{proof}

\begin{claim}
\label{claim:bouquet}
Assume $\epsilon \leq \min\{ \beta \rho / 16r, \beta \rho^{3/2} / 16B^2r \}$.  For every nice online algorithm that is $(\epsilon, \delta)$-concentrated around norm at least $\sqrt{\rho n}$, there exists a (random) $t$ such that its outputs on inputs $\mathcal{A}(t)$  satisfy~\eqref{eq:ogp} except with probability $rn\delta + r^2n2^{-\Omega(\epsilon^2 n / B^2)}$ for sufficiently large $n$.
\end{claim}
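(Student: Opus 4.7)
\emph{Proof plan.} My plan is to implement the intermediate value heuristic sketched right after~\Cref{thm:ogp}: identify a Lipschitz, monotone function of $t$ that captures the pairwise overlap of the bouquet, and show that it must cross the target window $[1-\beta,\,1-\beta+\beta/2r]$. To render the function monotone, I would first couple the ensembles by sampling a single ``global'' matrix $\matA^{*}\sim\Norm(0,1)^{m\times n}$ together with $r$ independent matrices $\matA'_1,\dots,\matA'_r$, and setting $\mathcal{A}(t)$ to be the ensemble whose $i$th instance has its first $t$ columns equal to those of $\matA^{*}$ and its last $n-t$ columns equal to those of $\matA'_i$. By~\Cref{claim:nice} I may assume the algorithm is nice. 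Because the algorithm is online, under this coupling all $r$ outputs on $\mathcal{A}(t)$ will share a common prefix $\vecx^{\parallel}(t)\in ([-B,B]\cap\Z)^{t}$, and $\vecx^{\parallel}(t+1)$ will extend $\vecx^{\parallel}(t)$ by exactly one coordinate.

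Next I would exploit the decomposition
\[
\langle \vecx_i(t),\vecx_j(t)\rangle \;=\; \|\vecx^{\parallel}(t)\|^2 \;+\; \langle \vecx_i^{\perp}(t),\vecx_j^{\perp}(t)\rangle,
\]
where $\vecx_i^{\perp}(t)$ denotes the suffix. Niceness should imply that, conditional on all coordinate magnitudes, the signs of the entries of each suffix are independent uniform $\pm 1$ bits, and these sign variables are also mutually independent across $i$ (each suffix depends only on the independent matrix $\matA'_i$). Therefore the cross-term will be a sum of at most $n$ independent zero-mean random variables of magnitude $\leq B^{2}$, and Hoeffding plus a union bound over $t$ and pairs should yield
\[
\bigl|\langle \vecx_i^{\perp}(t),\vecx_j^{\perp}(t)\rangle\bigr| \;\leq\; \epsilon n B \qquad \text{for all $t,\; i\neq j$,}
\]
except with probability $r^{2}n\cdot 2^{-\Omega(\epsilon^{2}n/B^{2})}$. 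In parallel, applying $(\epsilon,\delta)$-concentration and a union bound over the $rn$ coupled executions should give $\|\vecx_i(t)\|=\sqrt{\rho n}\,(1\pm O(\epsilon/\sqrt{\rho}))$ uniformly except with probability $rn\delta$.

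I would then set $h(t)=\|\vecx^{\parallel}(t)\|^{2}/(\rho n)$, which is non-decreasing, starts at $h(0)=0$, reaches $1-O(\epsilon/\sqrt{\rho})$ at $t=n$, and has per-step jumps bounded by $B^{2}/(\rho n)$. Taking $t^{*}$ to be the smallest $t$ with $h(t^{*})\geq 1-\beta+\beta/(4r)$ (which exists for $n$ large enough), the bouquet overlap decomposes as
\[
\rho_{ij}(t^{*}) \;=\; h(t^{*})\cdot\frac{\rho n}{\|\vecx_i\|\,\|\vecx_j\|} \;+\; \frac{\langle \vecx_i^{\perp},\vecx_j^{\perp}\rangle}{\|\vecx_i\|\,\|\vecx_j\|},
\]
and I would invoke the hypothesis $\epsilon\leq\beta\rho/(16r)$ to bound the normalization perturbation by $\beta/(8r)$, and the hypothesis $\epsilon\leq\beta\rho^{3/2}/(16 B^{2} r)$ to bound the Hoeffding noise term by $\beta/(8r)$. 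Combined with the Lipschitz jump $\leq B^{2}/(\rho n)\leq \beta/(4r)$ controlling how far past $1-\beta$ the function $h$ can overshoot at its first crossing, every $\rho_{ij}(t^{*})$ will then lie in $[1-\beta,\,1-\beta+\beta/(2r)]$.

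The hard part will be the simultaneous bookkeeping of the three error sources — the discrete jump of $h$, the Hoeffding noise on the suffix cross-terms, and the relative error from norm concentration — inside the narrow window of width $\beta/(2r)$. The two $\epsilon$-hypotheses in the claim appear calibrated precisely to close these three budgets, leaving little slack. A secondary subtlety will be justifying that after conditioning on the shared prefix and on coordinate magnitudes the suffix signs remain independent across instances $i$; this should follow by applying the column re-randomization of~\Cref{claim:nice} separately to each $\matA'_i$, so that the sign randomness in distinct suffixes is decoupled.
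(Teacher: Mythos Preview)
Your plan coincides with the paper's: couple all ensembles through a common stem, decompose each $\langle\vecx_i,\vecx_j\rangle$ into the shared-prefix norm plus a suffix cross-term, control the cross-term by sign concentration (the paper phrases this as Azuma on the partial sums; your Hoeffding-after-conditioning-on-magnitudes is the same computation), union-bound norm concentration over all $rn$ runs, and pick $t$ by intermediate value on the monotone prefix-norm function.

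There is one slip that would break the argument as written. You normalize by $\rho n$ and assert $\|\vecx_i(t)\|=\sqrt{\rho n}\,(1\pm O(\epsilon/\sqrt{\rho}))$, but the hypothesis only says the algorithm is $(\epsilon,\delta)$-concentrated around some median $\mu\geq\sqrt{\rho n}$, not around $\sqrt{\rho n}$ itself. If, say, $\mu^2=2\rho n$, then at your crossing time $h(t^\ast)\approx 1-\beta$ you have $\|\vecx^\parallel(t^\ast)\|^2\approx(1-\beta)\rho n$ while $\|\vecx_i\|\,\|\vecx_j\|\approx\mu^2=2\rho n$, so the normalized overlap lands near $(1-\beta)/2$ and misses the window entirely. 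The fix is immediate and is exactly what the paper does: normalize by the stem output's norm, setting $f(t)=\|\vecx^{\leq t}\|^2/\|\vecx\|^2$, which runs from $0$ to exactly $1$; the lower bound $\|\vecx\|^2\geq(\mu-\epsilon\sqrt{n})^2\gtrsim\rho n$ is then used only to control the Lipschitz step $B^2/\|\vecx\|^2$ and the denominators in your two error budgets.
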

\begin{proof}
$\mathcal{A}(t)$ can be sampled from a common ``stem'' $\matA$ by picking the first $t$ columns in $\matA_i$ as in $\matA$ and the rest independently.  Let $\vecx$ be the output of the algorithm on input $\matA$ and $\vecx^{\leq t}$ (resp., $\matA^{\leq t}$) be its first $t$ entries (resp., columns).  In particular, $\vecx_1(n) = \cdots = \vecx_n(n) = \vecx$.

The sequence of random variables $\langle \vecx_i(t), \vecx_j(t)\rangle - \norm{\vecx^{\leq t}}^2$ is a martingale with respect to the filtration $\matA^{\leq t}$.  The martingale property is a consequence of niceness.  As its increments are $2B^2$-bounded, by Azuma's inequality $\langle \vecx_i(t), \vecx_j(t)\rangle$ is within $\epsilon n$ of $\norm{\vecx^{\leq t}}^2$ given $\matA^{\leq t}$ except with probability $2\exp -\Omega(\epsilon^2 n / B^2)$.  By a union bound, this holds simultaneously for all times and all replica pairs except with $\binom{r}{2} n$ times this probability.

By another union bound, the 2-norms of all replicas $\vecx_i(t)$ at all times $t$ are $\epsilon \sqrt{n}$-concentrated around at least $\sqrt{\rho n}$ except with probability $rn \delta$.  

Assume that the exceptional events do not occur.  Let $f(t) = \norm{\vecx^{\leq t}}^2/\norm{\vecx}^2$.  As $f(0) = 0$, $f(1) = 1$, and $f$ is $B^2/(4\epsilon^2 n)$-Lipschitz, by the intermediate value theorem there must be a (random) time $t$ at which it must be within 
$B^2/(4\epsilon^2 n)$ of $\beta - \beta/4r$.

We argue that $\vecx_1(t), \dots, \vecx_r(t)$ must satisfy~\eqref{eq:ogp}.  For every pair $i \neq j$, 
\begin{align*}
\biggl| f(t) - \frac{\langle\/\vecx_i(t), \vecx_j(t)\rangle}{\norm{\vecx_i(t)} \norm{\vecx_j(t)}} \biggr| 
&\leq 
\norm{\vecx^{\leq t}}^2 \biggl| \frac{1}{\norm{\vecx_i(t)} \norm{\vecx_j(t)}} - \frac{1}{\norm{\vecx}^2}\biggr| + \frac{|\norm{\vecx^{\leq t}}^2 - \langle\/\vecx_i(t), \vecx_j(t)\rangle|}{\norm{\vecx}^2} \\
&\leq B^2n \cdot \frac{2\epsilon}{\rho^{3/2} n} + \frac{\epsilon n}{\rho n} \\
&= \frac{2 \epsilon B^2}{\rho^{3/2}} + \frac{\epsilon}{\rho}.
\end{align*}
By the assumption on $\epsilon$ this is at most $\beta/6r$, so for sufficiently large $n$ all overlaps are within $\beta/4r$ of $\beta - \beta/4r$ as desired.
\end{proof}

\begin{theorem}
\label{thm:onlinehard}
For all $\kappa, \alpha, B$ such that $\kappa \ll 1/B$ and $\alpha \gg (B\kappa)^2 \log 1/B\kappa$ there exists an $\epsilon$ such that
an online $(\epsilon, o(1/n))$-norm concentrated algorithm cannot solve~\eqref{eq:sbp} with probability $1 - o(1/n)$. 
\end{theorem}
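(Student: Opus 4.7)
The plan is to argue by contradiction. Suppose an online $(\epsilon, o(1/n))$-norm concentrated algorithm $\alg$ solves $\problem$ with probability $1 - o(1/n)$; I will derive a contradiction from the interplay of the ensemble overlap gap property (discussed at the start of \Cref{sec:onlinehard}) with \Cref{claim:bouquet}. By \Cref{claim:nice} we may assume $\alg$ is nice, since niceness preserves both the success probability and the output norm.

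Before invoking \Cref{claim:bouquet} I need to verify that $\alg$'s median output norm is at least $\sqrt{\rho n}$ for some constant $\rho > 0$. Intuitively, a $B$-bounded integer $\vecx$ with too few nonzero coordinates cannot achieve $\kappa \ll 1/B$ with probability close to $1$ over random Gaussian $\matA$: a first-moment estimate shows that the $\vecx$ of norm $o(\sqrt{n})$ are too few to cover the $1 - o(1/n)$ success mass that $\alg$ must achieve. Fix $r$ sufficiently large, set $\beta = C\kappa^2$ as in the proof of \Cref{thm:ogp}, and choose $\epsilon$ small enough that \Cref{claim:bouquet}'s hypothesis $\epsilon \leq \min\{\beta\rho/(16r),\, \beta\rho^{3/2}/(16 B^2 r)\}$ holds.

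Sample the correlated ensemble $\mathcal{A}(t) = (\matA_1, \dots, \matA_r)$ that shares the first $t$ columns across replicas. By \Cref{claim:bouquet}, at some (random) time $t$ the outputs $\vecx_1, \dots, \vecx_r$ of $\alg$ on $\matA_1, \dots, \matA_r$ satisfy the overlap condition~\eqref{eq:ogp} except with probability $rn \cdot o(1/n) + r^2 n \cdot 2^{-\Omega(\epsilon^2 n / B^2)} = o(1)$; meanwhile, a union bound across the $r$ replicas shows that all $\vecx_i$ solve~\eqref{eq:sbp} simultaneously except with probability $r \cdot o(1/n) = o(1)$. Intersecting, the tuple $(\vecx_1, \dots, \vecx_r)$ is a forbidden ensemble configuration with probability $1 - o(1)$.

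This contradicts the ensemble extension of \Cref{thm:ogp}, which I invoke as the key structural fact. The extension is immediate from the proof of \Cref{thm:ogp}: sharing a common prefix only increases the pairwise correlations of the Gaussians $\langle \veca, \vecx_i\rangle/(\norm{\veca}\,\norm{\vecx_i})$ that feed into~\eqref{eq:det}, so by the perturbation analysis of \Cref{claim:determinant} the determinant $D$ can only grow, and the first-moment bound remains $\exp(-\Omega(\alpha r n))$ in the regime $\alpha \gg (B\kappa)^2 \log(1/(B\kappa))$. The main delicate step is justifying the norm lower bound that allows \Cref{claim:bouquet} to be applied; once that is in place, balancing $\rho, \beta, r, \epsilon$ against the OGP regime is routine bookkeeping.
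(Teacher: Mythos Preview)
Your outline follows the paper's approach closely: niceness via \Cref{claim:nice}, a first-moment norm lower bound, \Cref{claim:bouquet} to land in the forbidden overlap window, and the ensemble OGP extension for the contradiction. Two points need tightening.

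First, the union bound ``across the $r$ replicas'' giving failure probability $r\cdot o(1/n)$ is not justified as written. The time $t$ produced by \Cref{claim:bouquet} is a \emph{random} function of the stem $\matA$, so the instances $\matA_1(t),\dots,\matA_r(t)$ are not independent of the event that selected $t$; you cannot simply apply the per-instance $o(1/n)$ failure bound at that data-dependent time. The paper instead union-bounds over all $rn$ instances $\mathcal{A}(1)\cup\cdots\cup\mathcal{A}(n)$ (each marginally standard Gaussian), obtaining $rn\cdot o(1/n)=o(1)$. This is exactly why the theorem hypothesis demands failure probability $o(1/n)$ rather than $o(1)$; your accounting would make the stronger hypothesis unnecessary.

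Second, your one-line justification of the ensemble OGP has the direction reversed. Passing from a single instance to the ensemble (where columns beyond index $t$ are independent across replicas) can only \emph{decrease} the pairwise correlations of the normalized Gaussians $\langle \veca_i,\vecx_i\rangle/(\norm{\veca_i}\norm{\vecx_i})$, not increase them; it is this decrease that drives the determinant $D$ in~\eqref{eq:det} up and preserves the first-moment bound. The paper states this as ``the value of the determinant $D$ in~\eqref{eq:det} may only increase and all other quantities are preserved.''
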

\begin{proof}
Assume it does.  By \Cref{claim:nice} the algorithm can be assumed nice.  Let $r$ and $\beta$ be the as promised by \Cref{thm:ogp}.  Let $\rho = (c \alpha \log 1/\kappa) / \log (\alpha \log 1/\kappa)^{-1}$ for a sufficiently small constant $c$.  By \Cref{claim:ball}, \Cref{chi-squared-lower-tail}, and a union bound, no solutions to~\eqref{eq:sbp} of norm at most $\sqrt{\rho n}$ exist except with probability $O(\kappa)^{\alpha n} = o(1/n)$.  By a union bound, all $rn$ instances $\mathcal{A}(1) \cup \cdots \cup \mathcal{A}(n)$ are solved by the algorithm and their solutions have norm at least $\sqrt{\rho n}$ with at least constant probability.  

By \Cref{claim:bouquet}, the outputs of the algorithm on input $\mathcal{A}(t)$ satisfy~\eqref{eq:ogp} and~\eqref{eq:sbp}.
Thus~\eqref{eq:ogp} and~\eqref{eq:sbp} simultaneously hold with constant probability.  This contradicts the ensemble OGP extension of \Cref{thm:ogp}.
\end{proof}

Can the norm concentration assumption be removed from \Cref{thm:onlinehard}?  We believe that some stability restriction on the output is needed for the outputs to satisfy a condition like~\eqref{eq:ogp}.  A general online algorithm may be unstable in the sense that changing even a single input can induce an arbitrarily large change in the norm of its output.

\subsection{Reduction from Continuous Learning with Errors}\label{sec:lattice-lb}

Here, we adapt the proof of \cite{DBLP:vv25} to give a computational lower bound assuming the worst-case hardness of approximate lattice problems. To make the proof simpler (albeit less direct), we reduce from an intermediate problem called Continuous Learning With Errors (CLWE)~\cite{DBLP:conf/stoc/Bruna0ST21}, which is an average-case problem that is known to be as hard as worst-case approximate lattice problems~\cite{DBLP:conf/stoc/Bruna0ST21,DBLP:conf/focs/GupteVV22}.

Let $\sphere^{m-1}$ denote the unit sphere (according to the usual $\ell_2$ metric) in $\R^m$. For this section, we let the $\pmod{1}$ notation denote taking fractional representatives in $[-1/2, 1/2)$ coordinate-wise. We choose these representatives to ensure that $|x \pmod{1}| \leq |x|$ for all $x \in \R$.

\begin{assumption}[Continuous Learning With Errors (CLWE)]\label{clwe}
    For all choices of parameters $n(m) \leq \poly(m)$, $\beta(m) \geq 1/\poly(m)$, and $\gamma(m) = m^{\Omega(1)}$, the following holds. For all p.p.t. adversaries, the following two distributions cannot be distinguished with advantage $\Omega(1)$:
    \[ \left(\matA, \vecs^\top \matA + \vece^\top \pmod{1} \right), \;\;\;  \left(\matA, \vecb^\top \right),\]
    where $\matA \sim \Norm(0,1)^{m \times n}$, $\vecs \sim \gamma \cdot \sphere^{m-1}$, $\vece \sim \Norm(0, \beta^2)^n$, and $\vecb \sim [-1/2,1/2)^n$. 
\end{assumption}

Previous works \cite{DBLP:journals/jacm/Regev09,DBLP:conf/stoc/BrakerskiLPRS13,DBLP:conf/stoc/Bruna0ST21,DBLP:conf/focs/GupteVV22} show the following facts:
\begin{itemize}
    \item \Cref{clwe} holds under the \emph{quantum} worst-case polynomial hardness of the approximate shortest independent vectors problem on lattices (\textsf{SIVP}) and \cite[Corollary 2]{DBLP:conf/focs/GupteVV22}. For $\gamma \geq 2 \sqrt{m}$, this follows from \cite[Corollary 3.2]{DBLP:conf/stoc/Bruna0ST21}, and to get the full range of parameters, from combining \cite[Theorem 1.1]{DBLP:journals/jacm/Regev09} and \cite[Corollary 2]{DBLP:conf/focs/GupteVV22}.
    \item \Cref{clwe} holds under the \emph{classical} worst-case polynomial hardness of the gap shortest vector problem on lattices (\textsf{GapSVP}). This follows from combining \cite[Theorem 1.1]{DBLP:conf/stoc/BrakerskiLPRS13} and \cite[Corollary 2]{DBLP:conf/focs/GupteVV22}.
\end{itemize}
In both cases, \Cref{clwe} holds under the worst-case polynomial hardness of approximately solving lattice problems.

\begin{theorem}\label{lattice-based-lower-bound}
    Under \Cref{clwe}, for all $\varepsilon > 0$ and $B, n \leq \poly(m)$, there does not exist a p.p.t. algorithm for $\problem$ that succeeds with probability at least $2/3$ for
    \[ \kappa = O\left( \frac{1}{B n^{1/2 + \varepsilon}} \right).\]
\end{theorem}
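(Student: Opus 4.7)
The plan is to prove \Cref{lattice-based-lower-bound} by a polynomial-time reduction from CLWE (\Cref{clwe}). Suppose for contradiction that a p.p.t.\ algorithm $\alg$ solves $\problem$ with probability at least $2/3$ for the stated parameters. Given a CLWE sample $(\matA, \vecy^\top)$, the distinguisher runs $\alg(\matA)$ to obtain a candidate $\vecx \in ([-B,B]\cap\Z)^n$, computes $z := \vecy^\top \vecx \pmod{1} \in [-1/2, 1/2)$, and outputs ``CLWE'' if $|z| < 1/4$ and ``uniform'' otherwise. Since $\matA \sim \Norm(0,1)^{m \times n}$ has the same distribution in both problems, $\alg$ returns a valid (and in particular nonzero) solution with probability at least $2/3$.

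In the uniform case, for any nonzero $\vecx \in \Z^n$ the random variable $\vecy^\top \vecx \pmod{1}$ is uniform on $[-1/2, 1/2)$ -- a single nonzero coordinate $x_i$ already makes $x_i y_i \pmod{1}$ uniform -- so $\Pr[|z| < 1/4] = 1/2$. In the CLWE case, writing $\vecy = \vecs^\top \matA + \vece^\top \pmod{1}$, integrality of $\vecx$ kills the integer offset and yields
\[
z \equiv \vecs^\top \matA \vecx + \vece^\top \vecx \pmod{1}.
\]
Cauchy-Schwarz with $\|\matA \vecx\|_2 \leq \kappa \|\vecx\|_2 \sqrt{m}$ and $\|\vecx\|_2 \leq B\sqrt{n}$ gives $|\vecs^\top \matA \vecx| \leq \gamma \kappa B \sqrt{nm} = O(\gamma\sqrt{m}/n^{\varepsilon})$, and conditional on $\vecx$ the second term is distributed as $\Norm(0, \beta^2 \|\vecx\|_2^2)$, hence of magnitude at most $O(\beta B \sqrt{n})$ with constant probability. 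Choosing $\gamma = m^{\delta}$ for a sufficiently small $\delta > 0$ (permitted since $m^\delta = m^{\Omega(1)}$) and $\beta = 1/(mB\sqrt{n})$ (which is $\geq 1/\poly(m)$ since $B, n \leq \poly(m)$), both terms are $o(1)$ when $n$ is a sufficiently large polynomial in $m$. Then $|z| < 1/4$ holds with probability at least $0.6$ in the CLWE case, yielding an $\Omega(1)$ distinguishing advantage contradicting \Cref{clwe}.

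The main technical subtlety is parameter balancing: making $\gamma \sqrt{m}/n^{\varepsilon} = o(1)$ forces $n^\varepsilon \gg m^{\delta + 1/2}$, which requires $n$ to be a sufficiently large polynomial in $m$ (depending on $\varepsilon$). Since $\delta$ can be taken arbitrarily small, any fixed $\varepsilon > 0$ is tolerated. In the complementary regime where $n$ is too small relative to $m$ and $\varepsilon$, $\kappa = O(1/(B n^{1/2+\varepsilon}))$ already falls below $\kappa_{\mathsf{stat}}$, so no valid solution to $\problem$ exists with non-negligible probability and the theorem holds vacuously in that range. Beyond this, the remaining work is routine: verifying that the chosen triple $(\gamma, \beta, n)$ lies in the parameter range covered by \Cref{clwe} and assembling the probabilities above into a clean $\Omega(1)$ advantage bound.
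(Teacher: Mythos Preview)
Your high-level strategy---run the $\problem$ solver, then test whether $\vecy^\top\vecx \pmod 1$ is small---is exactly the paper's approach, and the uniform-case analysis is fine. The gap is in how you bound $|\vecs^\top \matA\vecx|$ in the planted case. Cauchy--Schwarz gives $|\vecs^\top \matA\vecx| \leq \gamma\,\|\matA\vecx\|_2 \leq \gamma\kappa B\sqrt{nm}$, which carries an extra $\sqrt{m}$ factor. This forces $m^{\delta+1/2}/n^\varepsilon = o(1)$, i.e.\ $n \gg m^{1/(2\varepsilon)}$, a constraint that is \emph{not} implied by $n > m$ and $n \leq \poly(m)$ when $\varepsilon$ is small.

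Your patch for this regime (``$\kappa$ falls below $\kappa_{\mathsf{stat}}$'') is incorrect. Take for instance $\varepsilon = 0.1$, $B = 1$, and $n = m^2$: your reduction needs $n \gg m^5$, so this lies in the ``complementary regime''. But here $\kappa = \Theta(m^{-1.2})$ while $\kappa_{\mathsf{stat}} = \Theta(3^{-n/m}) = \Theta(3^{-m})$, so $\kappa \gg \kappa_{\mathsf{stat}}$ and solutions exist abundantly; the theorem is not vacuous and your argument does not cover it. The fix is to exploit that $\vecs$ is uniform on $\gamma\cdot\mathbb{S}^{m-1}$ \emph{independently} of $\matA$: writing $\vecs = \gamma\,\vecs'/\|\vecs'\|_2$ for $\vecs'\sim\Norm(0,1)^m$, the inner product $(\vecs')^\top(\matA\vecx)$ is $\Norm(0,\|\matA\vecx\|_2^2)$ and $\|\vecs'\|_2 = \Theta(\sqrt{m})$, so $|\vecs^\top\matA\vecx| = O(\gamma\|\matA\vecx\|_2/\sqrt{m}) \leq O(\gamma\kappa B\sqrt{n})$ with high probability. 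This saves the $\sqrt{m}$ and lets you take $\gamma = n^{\varepsilon/2}$ (which is $m^{\Omega(1)}$ since $n > m$), making the bound $O(n^{-\varepsilon/2}) = o(1)$ uniformly over all admissible $n$.
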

\begin{proof}
    Suppose for contradiction that there exists $\varepsilon > 0$ and $B, n \leq \poly(m)$ with a p.p.t. algorithm $\alg$ succeeding with probability at least $2/3$. We then violate \Cref{clwe} with $\beta = 1/(B \cdot n)$ and $\gamma = n^{\varepsilon/2}$ by the algorithm $\alg'$ described as follows. On input $(\matA, \vecb^\top)$, $\alg'$ runs $\vecx \gets \alg(\matA)$, ensures that $\vecx \in ([-B, B] \cap \Z)^n$ and $\norm{\matA \vecx}_2 < \kappa \norm{\vecx}_2 \sqrt{m}$, and then outputs $1$ if and only if
    \begin{equation}\label{eqn:mod-1-distinguisher} \left|\vecb^\top \vecx \pmod{1} \right| < \frac{1}{4}.
    \end{equation}
    If $\vecx$ is not a solution to $\problem$, then the algorithm outputs $0$.

    We now analyze the performance of $\alg'$. For the ``null'' case of CLWE, where $\vecb \sim [-1/2, 1/2)^n$, since $\vecx \neq 0$ and $\vecx \in \Z^n$, it follows that $\vecb^\top \vecx \pmod{1}$ is distributed uniformly randomly in $[-1/2, 1/2)$. Therefore, \eqref{eqn:mod-1-distinguisher} holds with probability $1/2$ (conditioned on $\vecx$ being a valid solution).

    For the planted case, we know $\vecb^\top = \vecs^\top \matA + \vece^\top \pmod{1}$ for $\vecs \sim \gamma \cdot \sphere^{m-1}$, $\vece \sim \Norm(0, \beta^2)^n$. By spherical symmetry of the Gaussian, we can write $\vecs$ as $\gamma \cdot \vecs'/\norm{\vecs'}_2$ for $\vecs' \sim \Norm(0,1)^n$. Since $\vecx \in \Z^n$, we then have
    \begin{align*} \left| \vecb^\top \vecx \pmod{1} \right| &= \left| \vecs^\top \matA \vecx + \vece^\top \vecx \pmod{1} \right|
    \\&\leq \left| \vecs^\top \matA \vecx \right| + \left| \vece^\top \vecx \right|
    \\&= \frac{\gamma}{\norm{\vecs'}_2} \left| (\vecs')^\top \matA \vecx \right| + \left|\vece^\top \vecx \right|
        \\&= \frac{\gamma}{\norm{\vecs'}_2} \left| v_1 \right| + \left| v_2 \right|,
    \end{align*}
    where $v_1 \sim \Norm(0, \norm{\matA \vecx}_2^2)$ and $v_2 \sim \Norm(0, \beta^2 \norm{\vecx}_2^2)$, by Gaussianity and independence of $\vecs'$ and $\vece$ from $\matA$ and $\vecx$. By standard tail bounds on the (univariate) Gaussian distribution, with probability at least $99/100$, we know that $|v_1| \leq O\left(\norm{\matA \vecx}_2 \right)$ and $|v_2| \leq O(\beta \norm{\vecx}_2)$. Also, by \Cref{chi-squared-lower-tail}, we know that $\norm{\vecs'}_2 \geq \Omega(\sqrt{m})$ with probability $1 - o(1)$. Putting these all together and continuing the inequality, with probability at least $99/100 - o(1)$ (conditioned on $\vecx$ being a valid solution), we have
    \begin{align*} \left| \vecb^\top \vecx \pmod{1} \right| &\leq \frac{\gamma}{\norm{\vecs'}_2} \left| v_1 \right| + \left| v_2 \right|
    \\&\leq O\left( \frac{\gamma \norm{\matA \vecx}_2}{\sqrt{m}}  + \beta \norm{\vecx}_2 \right)
    \\&\leq O\left( \frac{\gamma \kappa \norm{\vecx}_2 \sqrt{m}}{\sqrt{m}}  + \beta \norm{\vecx}_2 \right)
    \\&= O\left(\gamma \kappa + \beta \right) \cdot \norm{\vecx}_2.
    \\&\leq O\left(\gamma \kappa + \beta \right) \cdot B \sqrt{n}.
    \end{align*}
    For $\alg'$ to get advantage $\Omega(1)$, it suffices to show that $(\gamma \kappa + \beta) B \sqrt{n} = o(1)$. We directly have $\beta = o(1/(B \sqrt{n}))$ by the way we have set $\beta$. For the last remaining term, we can plug in our settings of $\gamma$ and $\kappa$ to see that
    \[ \gamma \kappa B \sqrt{n} = n^{\varepsilon/2} \cdot O\left( \frac{1}{B n^{1/2 + \varepsilon}} \right) \cdot B \sqrt{n} = O\left( \frac{1}{n^{\varepsilon/2}} \right) = o(1),\]
    as desired.
\end{proof}

\section{Robust Locality Sensitive Hash Functions}\label{sec:hashing}

\subsection{Definition}
Here, we define robust locality-sensitive hash functions, specialized to the Euclidean norm, following \cite{DBLP:conf/innovations/BoyleLV19}.

\begin{definition}[Robust Locality Sensitive Hash Functions]\label{def-of-our-hash-primitive} For natural numbers $n$ and $m = m(n) < n$, let $\cX_n \subseteq \R^n$ and $\cY_n \subseteq \R^m$ be finite sets. A \emph{robust locality sensitive hash function} with approximation factors $\alpha = \alpha(n), \beta = \beta(n)$ consists of p.p.t. algorithms $(\keygen, \hash)$ with the following syntax:
\begin{itemize}
    \item $\keygen(1^n) \to \key$. This algorithm is randomized and outputs some key $\key \in \{0,1\}^{\poly(n)}$.
    \item $\hash : \{0,1\}^{\poly(n)} \times \cX_n \to \cY_n $. This algorithm is deterministic. As shorthand, we will write $\hash_{\key} : \cX_n \to \cY_n$ to denote the hash function $\hash(\key, -)$ for fixed key $\key \in \{0,1\}^{\poly(n)}$.
\end{itemize}
Moreover, we require the following three properties:
\begin{enumerate}
    \item \textbf{Compression}\label{item:compression}: We have $|\cY_n| \leq \frac{1}{2} |\cX_n|$. That is, the function $\hash_{\key} : \cX_n \to \cY_n$ is compressing by at least a factor of $2$ (typically, significantly more).
    \item \textbf{Statistical Non-Expansion}\label{item:statistically-not-expanding}:
    \[ \Pr_{\key \sim \keygen(1^n)}\left[\exists \vecx_1, \vecx_2 \in \cX_n : \norm{\hash_{\key}(\vecx_1) - \hash_{\key}(\vecx_2)}_2 > \alpha \cdot \norm{\vecx_1 - \vecx_2}_2 \right] = \negl(n). \]
    \item \textbf{Computational Non-Contraction}\label{item:computationally-not-contracting}: For all p.p.t. adversaries $\cA$, 
    \[ \Pr_{\key \sim \keygen(1^n)}\left[(\vecx_1, \vecx_2) \gets \cA\left(1^n, \key \right) : \begin{array}{c}\vecx_1, \vecx_2 \in \cX_n\; \land
    \\\norm{\hash_{\key}(\vecx_1) - \hash_{\key}(\vecx_2)}_2 < \beta \norm{\vecx_1 - \vecx_2}_2
    \end{array}\right] = \negl(n), \]
    where the probability is also taken over the internal randomness of $\cA$.
\end{enumerate}
We refer to the quantity $\xi = \alpha/\beta$ as the \emph{distortion} of the hash function.
\end{definition}

We note that this definition is, in particular, stronger than a collision-resistant hash function.\footnote{The only syntactic difference is that the codomain here is not expressed as $\{0,1\}^\ell$ for some $\ell$, but rather some efficiently recognizable finite set $\cY_n$. By considering a direct binary encoding of $([-r, r] \cap \gamma \Z)^m \supseteq \cY_n$, one can make the codomain $\{0,1\}^\ell$ with a slight loss in parameters.} To see this, note that if $\vecx_1, \vecx_2 \in \cX_n$ form a collision, then $\hash_{\key}(\vecx_1) = \hash_{\key}(\vecx_2)$ while $\norm{\vecx_1 - \vecx_2}_2 > 0$, which violates computational non-contraction. Therefore, computational assumptions are necessary to achieve the definition.

\subsection{Preliminaries}

Let $\ball_m(r, \vecy) \subseteq \R^m$ denote the ball of radius $r$ (according to the usual $\ell_2$ norm) in dimension $m$ centered at $\vecy \in \R^m$. If $\vecy$ is ommitted, it is taken to be the all $0$s vector.
\begin{lemma}[E.g., Lemma 16 in \cite{DBLP:conf/crypto/KirchnerF15}]\label{integer-point-ell-2-bound}
For all $m \in \N$ and $r \in \R_{>0}$, we have the bound
\[ \left| \ball_m(r) \cap \Z^m \right| \leq  \vol\left(\ball_m\left(r + \frac{\sqrt{m}}{2} \right) \right)\]
\end{lemma}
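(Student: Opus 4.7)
The plan is to use a standard volume-packing argument. For each integer point $\vecp \in \ball_m(r) \cap \Z^m$, I will associate the axis-aligned unit cube $C_{\vecp} = \vecp + [-1/2, 1/2]^m$ centered at $\vecp$. Since these cubes (for distinct integer centers) have pairwise disjoint interiors and each has volume exactly $1$, the total volume of their union equals the number of integer points being counted.

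Next I will show that every such cube sits inside the slightly larger ball $\ball_m(r + \sqrt{m}/2)$. The key observation is that for any $\vecq \in C_{\vecp}$, one has $\norm{\vecq - \vecp}_\infty \leq 1/2$, hence $\norm{\vecq - \vecp}_2 \leq \sqrt{m}/2$ (the diameter of the unit cube, measured in $\ell_2$). Combined with $\norm{\vecp}_2 \leq r$ and the triangle inequality, this yields $\norm{\vecq}_2 \leq r + \sqrt{m}/2$, so $C_{\vecp} \subseteq \ball_m(r + \sqrt{m}/2)$.

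Putting these together, the disjoint-interior union $\bigcup_{\vecp} C_{\vecp}$ is contained in $\ball_m(r + \sqrt{m}/2)$, and monotonicity of Lebesgue measure gives
\[ \left| \ball_m(r) \cap \Z^m \right| \;=\; \sum_{\vecp \in \ball_m(r) \cap \Z^m} \vol(C_{\vecp}) \;=\; \vol\!\left(\bigcup_{\vecp} C_{\vecp}\right) \;\leq\; \vol\!\left(\ball_m\!\left(r + \tfrac{\sqrt{m}}{2}\right)\right), \]
which is the claimed bound. There is no real obstacle here; the only thing to be careful about is the tight $\ell_\infty$-to-$\ell_2$ conversion $\sqrt{m}/2$ for the cube's circumradius, which is exactly what produces the additive term in the statement.
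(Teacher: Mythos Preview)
Your proposal is correct and follows essentially the same volume-packing argument as the paper: associate a unit cube to each integer point, note disjointness, and use the triangle inequality (with the cube's circumradius $\sqrt{m}/2$) to fit the union inside $\ball_m(r+\sqrt{m}/2)$. The only cosmetic difference is that the paper uses open cubes and routes the containment through $\ball_m(\sqrt{m}/2,\vecy)$ before taking the union, whereas you go directly; neither changes the substance.
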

\begin{proof}
    For all $\vecy \in  \ball_m(r) \cap \Z^m$, consider the (open) cube $\vecy + (-1/2, 1/2)^m$. Note that since $\vecy \in \Z^m$, all such cubes are disjoint. Since all cubes have volume $1$, we have
    \begin{align*} \left|\ball_m(r) \cap \Z^m \right| &= \vol\left( \bigsqcup_{\vecy \in \ball_m(r) \cap \Z^m} 
    \left( \vecy + (-1/2, 1/2) \right)^m \right)
    \\&\leq \vol\left( \bigcup_{\vecy \in \ball_m(r) \cap \Z^m} \ball_m\left( \frac{\sqrt{m}}{2}, \vecy \right) \right)
    \\&\leq \vol\left( \ball_m\left(r + \frac{\sqrt{m}}{2} \right) \right),
    \end{align*}
as desired.
\end{proof}

\begin{corollary}\label{exp-num-of-integer-points-in-ball}
    For all $m \in \N$ and all $\gamma, r \in \R_{>0}$, we have the bound
\[ \left| \ball_m(r) \cap \gamma \Z^m \right| \leq  \left( \frac{r \sqrt{2\pi e}}{\gamma \sqrt{m}}  + \frac{\sqrt{2 \pi e}}{2} \right)^{m}. \]
\end{corollary}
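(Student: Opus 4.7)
The plan is to reduce to the already-established integer-lattice bound (\Cref{integer-point-ell-2-bound}) by rescaling, and then to apply the standard closed-form volume of a Euclidean ball together with a Stirling-type lower bound on the Gamma function.

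First I would observe the bijection $\vecy \mapsto \vecy/\gamma$ between $\ball_m(r) \cap \gamma\Z^m$ and $\ball_m(r/\gamma) \cap \Z^m$, which gives
\[ \left| \ball_m(r) \cap \gamma\Z^m \right| = \left| \ball_m(r/\gamma) \cap \Z^m \right|. \]
Applying \Cref{integer-point-ell-2-bound} with radius $r/\gamma$ then yields
\[ \left| \ball_m(r/\gamma) \cap \Z^m \right| \leq \vol\left( \ball_m\left( \frac{r}{\gamma} + \frac{\sqrt{m}}{2} \right) \right). \]

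Next I would invoke the standard formula $\vol(\ball_m(R)) = \pi^{m/2} R^m / \Gamma(m/2 + 1)$ combined with the Stirling-type inequality $\Gamma(k + 1) \geq (k/e)^k$, applied at $k = m/2$, to obtain
\[ \vol(\ball_m(R)) \leq \frac{\pi^{m/2} R^m}{(m/(2e))^{m/2}} = \left( \frac{R \sqrt{2\pi e}}{\sqrt{m}} \right)^{m}. \]
Substituting $R = r/\gamma + \sqrt{m}/2$ and distributing the $\sqrt{2\pi e}/\sqrt{m}$ factor gives exactly
\[ \left( \frac{r \sqrt{2\pi e}}{\gamma \sqrt{m}} + \frac{\sqrt{2\pi e}}{2} \right)^{m}, \]
which is the claimed bound.

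The proof is essentially a routine calculation rather than a conceptual argument, so there is no real obstacle. The only care needed is to pick a form of Stirling that makes the $1/\sqrt{m}$ factor inside the parenthesized expression come out cleanly; the inequality $\Gamma(k+1) \geq (k/e)^k$ (which is elementary and valid for all $k \geq 0$, and does not require $m$ to be even) is the right choice because it exactly cancels the $\pi^{m/2}$ into $(2\pi e/m)^{m/2}$ with no stray constants left over.
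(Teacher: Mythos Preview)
Your proof is correct and follows essentially the same approach as the paper: rescale to reduce to \Cref{integer-point-ell-2-bound}, then apply the standard volume bound $\vol(\ball_m(R)) \leq (\sqrt{2\pi e/m})^m R^m$ with $R = r/\gamma + \sqrt{m}/2$. The only difference is that you explicitly derive this volume bound via the Gamma-function formula and the Stirling-type inequality $\Gamma(k+1) \geq (k/e)^k$, whereas the paper simply cites it as standard.
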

\begin{proof}
    By a scaling argument (by $1/\gamma$), we know
    \[  \left| \ball_m(r) \cap \gamma \Z^m \right| =  \left| \ball_m(r/\gamma) \cap \Z^m \right|.\]
    Plugging in \Cref{integer-point-ell-2-bound},
     \[  \left| \ball_m(r) \cap \gamma \Z^m \right| \leq \vol\left(\ball_m\left(\frac{r}{\gamma} + \frac{\sqrt{m}}{2} \right) \right).\]
    By using a standard bound that
    \[ \vol(\ball_m(R)) \leq \left( \sqrt{\frac{2 \pi e}{ m}} \right)^m R^m,\]
    we get
    \[ \left| \ball_m(r) \cap \gamma \Z^m \right| \leq \left( \sqrt{\frac{2 \pi e}{ m}} \right)^m \cdot \left( \frac{r}{\gamma} + \frac{\sqrt{m}}{2} \right)^m =  \left( \frac{r \sqrt{2 \pi e}}{\gamma \sqrt{m}} + \frac{\sqrt{2 \pi e}}{2} \right)^m,\]
    as desired.
\end{proof}

For a matrix $\matA$, we let $\norm{\matA}_2$ denote the standard spectral norm of $\matA$, i.e., the largest singular value of $\matA$.

\begin{lemma}[As in \cite{rudelson2010non}]\label{lemma:gaussian-singular-value}
For all $t > 0$, we have
\[ \Pr_{\matA \sim \Norm(0,1)^{m \times n}}\left[\norm{\matA}_2 \leq \sqrt{m} + \sqrt{n} + t \right] \geq 1 - 2 e^{-t^2/2}.\]
In particular, for $n > m$ and setting $t = \sqrt{n}$, we have
\[ \Pr_{\matA \sim \Norm(0,1)^{m \times n}}\left[\norm{\matA}_2 \leq 3\sqrt{n} \right] \geq 1 - 2 e^{-n/2}.\]
\end{lemma}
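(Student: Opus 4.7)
The plan is to combine two classical ingredients, namely a Slepian/Gordon-type bound on the expected spectral norm of a Gaussian matrix and Gaussian Lipschitz concentration for the deviation around that expectation.

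First, I would show that $\E\left[\|\matA\|_2\right] \leq \sqrt{m} + \sqrt{n}$. Starting from the variational expression $\|\matA\|_2 = \sup_{\vecu \in \sphere^{m-1}, \vecv \in \sphere^{n-1}} \vecu^\top \matA \vecv$, one applies Gordon's Gaussian comparison inequality to the two mean-zero Gaussian processes $X_{\vecu, \vecv} = \vecu^\top \matA \vecv$ and $Y_{\vecu, \vecv} = \vecg^\top \vecu + \vech^\top \vecv$ indexed by $(\vecu, \vecv) \in \sphere^{m-1} \times \sphere^{n-1}$, where $\vecg \sim \Norm(0, 1)^m$ and $\vech \sim \Norm(0,1)^n$ are independent. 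A direct computation shows $\E[(X_{\vecu,\vecv} - X_{\vecu',\vecv'})^2] \leq \E[(Y_{\vecu,\vecv} - Y_{\vecu',\vecv'})^2]$ with equality on the diagonal, so Gordon's theorem yields $\E[\sup_{\vecu, \vecv} X_{\vecu, \vecv}] \leq \E[\sup_{\vecu, \vecv} Y_{\vecu, \vecv}] = \E[\|\vecg\|_2] + \E[\|\vech\|_2] \leq \sqrt{m} + \sqrt{n}$, using Jensen for each term.

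Second, I would invoke the Borell--Tsirelson--Sudakov--Ibragimov Gaussian concentration inequality. The map $\matA \mapsto \|\matA\|_2$ is $1$-Lipschitz with respect to the Frobenius norm (the standard Euclidean norm on the entries), since $\bigl|\|\matA\|_2 - \|\matB\|_2\bigr| \leq \|\matA - \matB\|_2 \leq \|\matA - \matB\|_F$ for any $\matA, \matB$. Because the entries of $\matA$ are i.i.d.\ standard Gaussians, Gaussian concentration gives $\Pr[\|\matA\|_2 \geq \E\|\matA\|_2 + t] \leq e^{-t^2/2}$ for every $t > 0$; the statement of the lemma uses the two-sided form with a factor of $2$. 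Combining with the first step yields $\Pr[\|\matA\|_2 \geq \sqrt{m} + \sqrt{n} + t] \leq 2 e^{-t^2/2}$, which is the main bound.

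For the ``in particular'' clause, when $n > m$, we plug in $t = \sqrt{n}$ to get $\sqrt{m} + \sqrt{n} + \sqrt{n} \leq 3 \sqrt{n}$ and failure probability at most $2 e^{-n/2}$. The only step that requires genuine care is the Gordon comparison, namely verifying the pairwise variance inequality between $X$ and $Y$ over the product of spheres; once that is in place the concentration step is essentially automatic from the Lipschitz property. Since the full argument is standard and matches Rudelson--Vershynin's exposition, I would cite \cite{rudelson2010non} for the details rather than reproducing the Gordon comparison explicitly.
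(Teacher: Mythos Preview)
Your proposal is correct and follows exactly the standard argument from \cite{rudelson2010non}: Sudakov--Fernique/Gordon comparison to bound $\E\|\matA\|_2 \leq \sqrt{m}+\sqrt{n}$, then Gaussian Lipschitz concentration for the $1$-Lipschitz map $\matA \mapsto \|\matA\|_2$. The paper does not supply its own proof of this lemma---it simply cites Rudelson--Vershynin---so your sketch is precisely what the citation points to. (A minor remark: only the one-sided upper tail is needed here, which already gives $e^{-t^2/2}$; the factor of $2$ in the stated bound is slack.)
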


\subsection{Construction}
\begin{theorem}\label{main-construction}
    Suppose that $\problem$ is hard for parameters $n,m,B, \kappa$. Then, for $r = 4Bn/\sqrt{m}$ and $\gamma = \kappa/(2\sqrt{m})$, there exists a universal constant $C$ and a robust locality sensitive hash function for $\cX_n = ([0, B] \cap \Z)^n$ and $\cY_n = \ball_m(r) \cap \gamma \Z^m$ with parameters
    \begin{align*}
        \alpha = 4 \sqrt{\frac{n}{m}}\;\;\;\text{and}\;\;\; \beta = \frac{\kappa}{2},
    \end{align*}
    as long as
       \[ (B+1)^n > \left( \frac{C Bn}{\kappa \sqrt{m}} \right)^m. \]
    In particular, the distortion is
    \[ \xi = \frac{\alpha}{\beta} = \frac{8}{\kappa} \sqrt{\frac{n}{m}}.\]
\end{theorem}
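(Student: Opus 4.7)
My plan is to implement the rounded Johnson-Lindenstrauss scheme suggested in the technical overview: $\keygen(1^n)$ samples $\matA \sim \Norm(0,1)^{m \times n}$ (resampling if $\norm{\matA}_2 > 3\sqrt{n}$, which by \Cref{lemma:gaussian-singular-value} succeeds except with exponentially small probability per attempt), and $\hash_{\matA}(\vecx) := \lceil (1/\sqrt{m})\matA\vecx \rfloor_{\gamma\Z^m}$ is the coordinate-wise rounding of $(1/\sqrt{m})\matA\vecx$ to the lattice $\gamma\Z^m$. Once conditioned on $\norm{\matA}_2 \leq 3\sqrt{n}$, every $\vecx \in \cX_n$ satisfies $\norm{(1/\sqrt{m})\matA\vecx}_2 \leq 3Bn/\sqrt{m}$, and the coordinate-wise rounding contributes at most $\gamma\sqrt{m}/2 = \kappa/4$ in $\ell_2$ norm, so the output lies inside $\ball_m(4Bn/\sqrt{m}) = \ball_m(r) = \cY_n$ for every domain element. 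The guiding observation for the remainder of the analysis is that for $\vecx_1, \vecx_2 \in \cX_n$ the difference $\vecx_1 - \vecx_2$ lies in $([-B,B] \cap \Z)^n$, so pairwise guarantees for the hash reduce to norm statements about $\matA(\vecx_1 - \vecx_2)$.

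\textbf{Compression} is a direct counting application of \Cref{exp-num-of-integer-points-in-ball}: substituting $r = 4Bn/\sqrt{m}$ and $\gamma = \kappa/(2\sqrt{m})$ gives $r/(\gamma\sqrt{m}) = 8Bn/(\kappa m)$, yielding $|\cY_n| \leq (O(Bn/(\kappa\sqrt{m})))^m$, and the hypothesis $(B+1)^n > (CBn/(\kappa\sqrt{m}))^m$ then directly gives $|\cY_n| \leq |\cX_n|/2$ for a sufficiently large absolute constant $C$ chosen to dominate the $\sqrt{2\pi e}$ factors in the corollary.

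\textbf{Statistical non-expansion} combines the spectral norm bound with two triangle inequalities: $\norm{\hash_{\matA}(\vecx_1) - \hash_{\matA}(\vecx_2)}_2 \leq (1/\sqrt{m})\norm{\matA(\vecx_1 - \vecx_2)}_2 + \gamma\sqrt{m} \leq 3\sqrt{n/m}\,\norm{\vecx_1 - \vecx_2}_2 + \kappa/2$, and since $\norm{\vecx_1 - \vecx_2}_2 \geq 1$ for distinct integer points and $\kappa/2 \leq \sqrt{n/m}$ in any nontrivial regime, the total is bounded by $4\sqrt{n/m}\,\norm{\vecx_1 - \vecx_2}_2$. This step is deterministic once the spectral event from $\keygen$ holds, so no union bound over pairs is needed.

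\textbf{Computational non-contraction} is a one-shot reduction from $\problem$. Given an adversarial pair $\vecx_1, \vecx_2 \in \cX_n$ satisfying the contraction inequality with $\beta = \kappa/2$, set $\vecx := \vecx_1 - \vecx_2 \in ([-B,B] \cap \Z)^n$; the triangle inequality applied in the opposite direction gives $(1/\sqrt{m})\norm{\matA\vecx}_2 \leq \norm{\hash_{\matA}(\vecx_1) - \hash_{\matA}(\vecx_2)}_2 + \gamma\sqrt{m} < (\kappa/2)\norm{\vecx}_2 + \kappa/2 \leq \kappa\norm{\vecx}_2$, once more using $\norm{\vecx}_2 \geq 1$, so $\vecx$ solves $\problem$ and contradicts hardness. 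The main obstacle I anticipate is purely notational: keeping the various additive rounding slacks of size $O(\gamma\sqrt{m}) = O(\kappa)$ absorbed cleanly into both $\alpha = 4\sqrt{n/m}$ and $\beta = \kappa/2$, and choosing the constant $C$ in the compression hypothesis large enough — all of which follow from the regime conditions ($\kappa \ll 1$, $n > m$) inherited from $\problem$ being hard.
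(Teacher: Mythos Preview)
Your proposal is correct and follows essentially the same approach as the paper: both define $\hash$ as the rounded-to-$\gamma\Z^m$ version of $(1/\sqrt m)\matA\vecx$, invoke \Cref{exp-num-of-integer-points-in-ball} for compression, the spectral bound from \Cref{lemma:gaussian-singular-value} plus a triangle inequality for non-expansion, and the direct reduction $\vecx:=\vecx_1-\vecx_2\in([-B,B]\cap\Z)^n$ for non-contraction. The only differences are cosmetic: the paper samples $\matA\sim\Norm(0,1/m)^{m\times n}$ without the explicit $1/\sqrt m$ scaling, uses coordinate-wise floor rather than round-to-nearest, and handles the large-output event by outputting $\zero$ instead of your rejection sampling in $\keygen$ (which slightly perturbs the distribution of $\matA$ but only by $O(e^{-n/2})$ in statistical distance, so the $\problem$ reduction is unaffected); also, your intermediate value $r/(\gamma\sqrt m)=8Bn/(\kappa m)$ has a typo and should read $8Bn/(\kappa\sqrt m)$, though your next line is correct.
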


We will use the function $\floor{\cdot}_{\gamma} : \R^m \to \gamma \Z^m$ to denote coordinate-wise rounding down to the nearest multiple of $\gamma$. 

\begin{proof}[Proof of \Cref{main-construction}]
    \sloppy{We construct a robust locality sensitive hash function as described~in~\Cref{fig:construction}.}

    \begin{figure}[H]
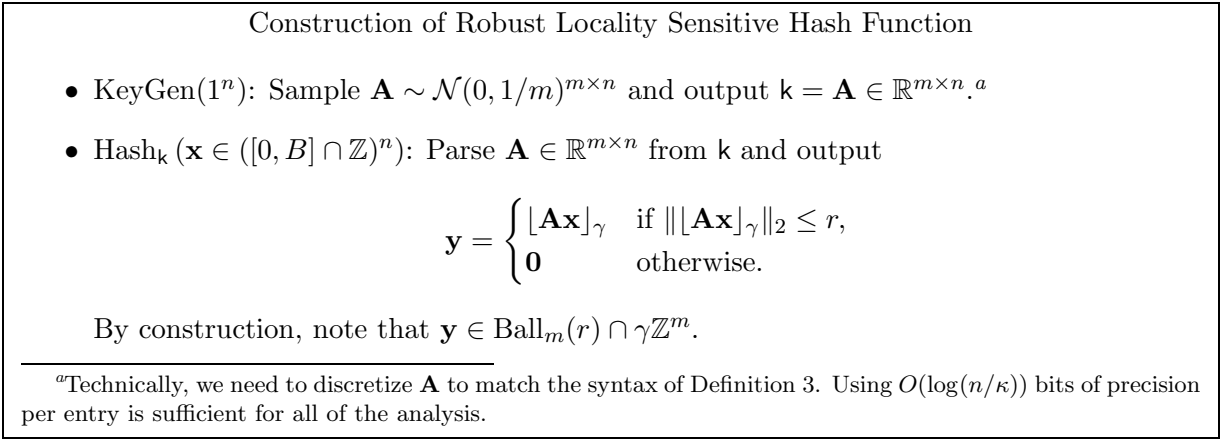

    \centering
    \fbox{
    \begin{minipage}{0.95\textwidth}
    \begin{center}
    Construction of Robust Locality Sensitive Hash Function
    \end{center}
    \begin{itemize}
        \item $\keygen(1^n)$: Sample $\matA \sim \Norm(0, 1/m)^{m \times n}$ and output $\key = \matA \in \R^{m \times n}$.\footnote{Technically, we need to discretize $\matA$ to match the syntax of \Cref{def-of-our-hash-primitive}. Using $O(\log(n/\kappa))$ bits of precision per entry is sufficient for all of the analysis.}
        \item $\hash_{\key}\left(\vecx \in ([0, B] \cap \Z)^n\right)$: Parse $\matA \in \R^{m \times n}$ from $\key$ and output
        \[ \vecy = \begin{cases} \floor{\matA \vecx}_{\gamma} & \text{if } \norm{\floor{\matA \vecx}_{\gamma}}_2 \leq r,\\
        \zero & \text{otherwise.}\end{cases}\]
        By construction, note that $\vecy \in \ball_m(r) \cap \gamma \Z^m$.
    \end{itemize}
    
    \end{minipage}
    }
    \caption{The construction of the robust locality sensitive hash function, as used in \Cref{main-construction}. See \Cref{def-of-our-hash-primitive} for the syntax of robust locality sensitive hash functions.}
    \label{fig:construction}
\end{figure}

For simplicity, we assume for now that it always holds that $\norm{\floor{\matA \vecx}_{\gamma}}_2 \leq r$, i.e., that $\hash_{\key}(\vecx) = \floor{\matA \vecx}_{\gamma}$ for all $\vecx \in \cX_n$. Later in \Cref{output-norm-bound}, we show that with $1 - \negl(n)$ probability over $\matA$, this indeed holds, allowing us to add a $\negl(n)$ term back into the proofs of \Cref{item:computationally-not-contracting,item:statistically-not-expanding,item:compression} in \Cref{def-of-our-hash-primitive}. We prove that \Cref{item:computationally-not-contracting}, \Cref{item:statistically-not-expanding}, and \Cref{item:compression} hold, in that order.

We begin by showing \Cref{item:computationally-not-contracting}.
\begin{claim}[Computational Non-Contraction]
    \Cref{item:computationally-not-contracting} in \Cref{def-of-our-hash-primitive} holds with $\beta = \kappa/2$. That is, assuming the hardness of $\problem$, no p.p.t. algorithm can output $\vecy, \vecz \in \cX_n$ such that $ \norm{\hash_{\key}(\vecy) - \hash_{\key}(\vecz)}_{2} < \kappa/2 \cdot \norm{\vecy - \vecz}_2$ with non-negligible probability.
\end{claim}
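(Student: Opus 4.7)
The plan is to prove computational non-contraction by a direct reduction from $\problem$: any adversary breaking non-contraction with non-negligible probability would yield a p.p.t. solver for $\problem$ with parameters $n, m, B, \kappa$, contradicting the hardness assumption.

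The reduction takes as input a $\problem$ instance $\matA_{\mathsf{chv}} \sim \Norm(0,1)^{m \times n}$ and sets $\matA = \matA_{\mathsf{chv}}/\sqrt{m}$, which is distributed as $\Norm(0,1/m)^{m \times n}$, matching the key distribution produced by $\keygen(1^n)$. The reduction then hands $\key = \matA$ to the adversary, obtains $(\vecy, \vecz) \in \cX_n^2$, and outputs $\vecx = \vecy - \vecz \in ([-B,B] \cap \Z)^n$. We need to show that when the adversary succeeds, the output $\vecx$ is a valid $\problem$ solution.

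First, note that $\vecy \neq \vecz$ (otherwise the strict inequality $\|\hash_\key(\vecy) - \hash_\key(\vecz)\|_2 < (\kappa/2) \|\vecy - \vecz\|_2$ would read $0 < 0$), so $\vecx \in \Z^n \setminus \{\zero\}$ and hence $\|\vecx\|_2 \geq 1$. Next, write $\matA \vecy = \hash_\key(\vecy) + \veceps_1$ and $\matA \vecz = \hash_\key(\vecz) + \veceps_2$ where, under the assumption that the norm bound in \Cref{output-norm-bound} holds, $\veceps_i$ denotes the coordinatewise rounding residue and satisfies $\veceps_i \in [0, \gamma)^m$. Then $\veceps_1 - \veceps_2 \in (-\gamma, \gamma)^m$, so $\|\veceps_1 - \veceps_2\|_2 < \gamma \sqrt{m} = \kappa/2$. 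By linearity and the triangle inequality,
\[
\|\matA \vecx\|_2 \;\leq\; \|\hash_\key(\vecy) - \hash_\key(\vecz)\|_2 + \|\veceps_1 - \veceps_2\|_2 \;<\; \frac{\kappa}{2}\|\vecx\|_2 + \frac{\kappa}{2} \;\leq\; \kappa \|\vecx\|_2,
\]
where the last step uses $\|\vecx\|_2 \geq 1$. Multiplying through by $\sqrt{m}$ gives $\|\matA_{\mathsf{chv}} \vecx\|_2 < \kappa \|\vecx\|_2 \sqrt{m}$, which is exactly the $\problem$ condition.

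I expect no significant technical obstacle here: the argument is essentially a rounding-error triangle inequality calibrated so that $\gamma \sqrt{m} = \kappa/2$ absorbs into the target slack. The only bookkeeping subtlety is handling the ``otherwise $\zero$'' branch of $\hash_\key$, which is why the proof is written under the assumption that $\|\floor{\matA \vecx}_\gamma\|_2 \leq r$ holds for all $\vecx \in \cX_n$; adding the $\negl(n)$ probability of failure from \Cref{output-norm-bound} to the adversary's success probability preserves the contradiction with the hardness of $\problem$. The matching of the key distribution (Gaussian scaling by $1/\sqrt{m}$) and the observation that $\vecy - \vecz$ lies in $([-B,B] \cap \Z)^n$ whenever $\vecy, \vecz \in ([0,B] \cap \Z)^n$ are the only other points to verify.
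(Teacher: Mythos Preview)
Your proposal is correct and essentially identical to the paper's own proof: both set the hash key as the $\problem$ instance scaled by $1/\sqrt{m}$, output $\vecx = \vecy - \vecz$, and use the triangle inequality with the rounding residues bounded by $\gamma\sqrt{m} = \kappa/2$ together with $\norm{\vecx}_2 \geq 1$ to obtain the $\problem$ bound. The handling of the ``otherwise $\zero$'' branch via the output-norm assumption is also exactly how the paper proceeds.
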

\begin{proof}
    The reduction from $\problem$ is direct. For an instance of $\problem$ with matrix $\matA' \sim \Norm(0,1)^{m \times n}$, let $\key = \matA = \frac{1}{\sqrt{m}} \matA' \sim \Norm(0, 1/m)^{m \times n}$ be the key for the robust locality sensitive hash function.
    
    For any such violating pair $\vecy, \vecz$ (where it must be the case that $\vecy \neq \vecz)$, the reduction outputs $\vecx = \vecy - \vecz \in ([-B, B] \cap \Z)^n$. Let $\vece_1 = \matA \vecy - \floor{\matA \vecy}_{\gamma} \in [0, \gamma)^m$, $\vece_2 = \matA \vecz - \floor{\matA \vecz}_{\gamma} \in [0, \gamma)^m$.
    We have
        \begin{align*} \frac{1}{\sqrt{m}} \norm{\matA' \vecx}_2 = \norm{\matA \vecx}_{2} &= \norm{\matA \vecy - \matA \vecz}_{2}
        \\&= \norm{\floor{\matA \vecy}_{\gamma} + \vece_1 -(\floor{\matA \vecz}_{\gamma} + \vece_2)}_{2}
    \\&\leq \norm{\floor{\matA \vecy}_{\gamma} -\floor{\matA \vecz}_{\gamma}}_{2} +  \norm{\vece_1 - \vece_2}_{2}
    \\&= \norm{\hash_{\key}(\vecy) - \hash_{\key}(\vecz)}_{2} + \norm{\vece_1 - \vece_2}_{2}
    \\&\leq \norm{\hash_{\key}(\vecy) - \hash_{\key}(\vecz)}_{2} + \gamma \sqrt{m}
    \\&<  \frac{\kappa}{2} \cdot \norm{\vecy - \vecz}_{2} + \frac{\kappa}{2}
    \\&\leq \frac{\kappa}{2} \cdot \norm{\vecy - \vecz}_{2} + \frac{\kappa}{2} \cdot \norm{\vecy - \vecz}_{2}
    \\&= \kappa \cdot \norm{\vecy - \vecz}_{2},
    \end{align*}
    where the last inequality comes from the fact that $\vecy \neq \vecz$ and $\vecy, \vecz \in \Z^n$. Multiplying both sides by $\sqrt{m}$ gives
    \[ \norm{\matA' \vecx}_2 < \kappa \sqrt{m} \cdot \norm{\vecy - \vecz}_2 = \kappa \sqrt{m} \cdot \norm{\vecx}_2,\]
solving $\problem$ with instance $\matA'$.
\end{proof}

Next, we show \Cref{item:statistically-not-expanding}.
\begin{claim}[Statistical Non-Expansion]\label{stat-non-expansion-proof}
\Cref{item:statistically-not-expanding} in \Cref{def-of-our-hash-primitive} holds with
\[ \alpha = 4 \sqrt{\frac{n}{m}}.\]
More explicitly, with probability at least $1 - 2e^{-n/2}$, for all $\vecy, \vecz \in \cX_n$, 
\[ \norm{\hash_{\key}(\vecy) - \hash_{\key}(\vecz)}_{2} < 4 \sqrt{\frac{n}{m}} \cdot \norm{\vecy - \vecz}_2.\]
\end{claim}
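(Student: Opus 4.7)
The plan is to reduce the problem to a spectral norm bound on $\matA$ plus a small rounding-error correction, exactly mirroring the structure of the computational non-contraction proof just above but now using an upper bound in place of a lower bound on singular values.

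First, I would condition on the event $\cE$ that $\norm{\matA}_2 \leq 3\sqrt{n/m}$. Since $\matA = \frac{1}{\sqrt{m}} \matA'$ with $\matA' \sim \Norm(0,1)^{m \times n}$, \Cref{lemma:gaussian-singular-value} gives $\Pr[\cE] \geq 1 - 2e^{-n/2}$, which matches the failure probability in the claim statement. All remaining reasoning will be deterministic, conditioned on $\cE$.

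Next, for any fixed $\vecy, \vecz \in \cX_n$, I would write $\vece_1 = \matA \vecy - \floor{\matA \vecy}_{\gamma}$ and $\vece_2 = \matA \vecz - \floor{\matA \vecz}_{\gamma}$, both lying in $[0, \gamma)^m$, so that $\norm{\vece_1 - \vece_2}_2 \leq \gamma \sqrt{m} = \kappa/2$ (using $\gamma = \kappa/(2\sqrt{m})$). By the triangle inequality,
\[ \norm{\hash_{\key}(\vecy) - \hash_{\key}(\vecz)}_2 = \norm{\matA(\vecy - \vecz) - (\vece_1 - \vece_2)}_2 \leq \norm{\matA}_2 \cdot \norm{\vecy - \vecz}_2 + \tfrac{\kappa}{2}. \]
Under $\cE$ this gives $\norm{\hash_{\key}(\vecy) - \hash_{\key}(\vecz)}_2 \leq 3\sqrt{n/m}\cdot \norm{\vecy - \vecz}_2 + \kappa/2$.

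Finally, I would absorb the additive $\kappa/2$ term into the multiplicative bound. For $\vecy = \vecz$ both sides vanish, so assume $\vecy \neq \vecz$; then $\norm{\vecy - \vecz}_2 \geq 1$ because the coordinates are integers. Since the hardness regime of interest has $\kappa \leq 1 \leq \sqrt{n/m}$ (the problem is unsolvable for $\kappa \geq \Theta(1)$, and we always have $n > m$), we get $\kappa/2 \leq \sqrt{n/m} \cdot \norm{\vecy - \vecz}_2$, and the bound becomes $4\sqrt{n/m} \cdot \norm{\vecy - \vecz}_2$ as desired. The universal quantifier over $\vecy, \vecz$ comes for free since the singular value bound is a single event on $\matA$ that implies the inequality for every pair simultaneously.

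The main thing to watch out for is not the calculation itself but the edge cases: handling $\vecy = \vecz$ (trivial), and ensuring the proof covers the sub-event where $\norm{\floor{\matA \vecx}_\gamma}_2 > r$ causes $\hash_\key(\vecx)$ to be set to $\zero$ by fiat. The excerpt explicitly defers that second issue to the later \Cref{output-norm-bound}, so here I would just remark that the argument above proceeds under the simplifying assumption $\hash_\key(\vecx) = \floor{\matA \vecx}_\gamma$ already established in the proof of \Cref{main-construction}, with the $\negl(n)$ correction to be added back in at the end.
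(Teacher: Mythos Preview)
Your proposal is correct and essentially identical to the paper's proof: both condition on the spectral-norm event $\norm{\matA}_2 \leq 3\sqrt{n/m}$ via \Cref{lemma:gaussian-singular-value}, apply the triangle inequality with the rounding errors $\vece_1,\vece_2$, and absorb the additive $\kappa/2$ term using $\norm{\vecy-\vecz}_2 \geq 1$ together with $\kappa < 1$ and $m < n$. The only cosmetic difference is the order of presentation and that the paper bounds $\kappa/2 \leq 1/2 < \sqrt{n/m}$ directly rather than invoking the hardness regime to justify $\kappa \leq 1$.
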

\begin{proof}
    Up to rounding concerns, this is equivalent to upper bounding $\norm{\matA \vecx}_2/\norm{\vecx}_2$ over all $\vecx \in [-B, B]^n \cap \Z^n \setminus \{\zero\}$. We can bound this directly by the spectral norm of $\matA$:
    \[ \max_{\vecx \in ([-B, B] \cap \Z)^n \setminus \{\zero\}} \frac{\norm{\matA \vecx}_2}{\norm{\vecx}_2} \leq \sup_{\vecx \in \R^n \setminus \{\zero\}} \frac{\norm{\matA \vecx}_2}{\norm{\vecx}_2} = \norm{\matA}_2. \]
    Since $\sqrt{m} \cdot \matA \sim \Norm(0,1)^{m \times n}$, by \Cref{lemma:gaussian-singular-value}, it follows that with probability at least $1 - 2 e^{-n/2}$, 
\[ \max_{\vecx \in ([-B, B] \cap \Z)^n \setminus \{\zero\}} \frac{\norm{\matA \vecx}_2}{\norm{\vecx}_2} \leq \norm{\matA}_2 \leq 3 \sqrt{\frac{n}{m}}.\]

Finally, we incorporate the rounding. For $\vecy = \vecz$, the desired inequality is trivially true, so we assume $\vecy \neq \vecz$. Let $\vece_1 = \matA \vecy - \floor{\matA \vecy}_{\gamma} \in [0, \gamma)^m$ and $\vece_2 = \matA \vecz - \floor{\matA \vecz}_{\gamma} \in [0, \gamma)^m$. For $\vecx = \vecy - \vecz \neq \zero$, we have
\begin{align*}
    \norm{\hash_{\key}(\vecy) - \hash_{\key}(\vecz)}_{2} &= \norm{\floor{\matA \vecy}_{\gamma} - \floor{\matA \vecz}_{\gamma}}_2
    \\&= \norm{\matA \vecy - \vece_1 - (\matA \vecz - \vece_2)}_2
    \\&\leq \norm{\matA \vecy - \matA \vecz}_2 + \norm{\vece_1 - \vece_2}_2 
    \\&= \norm{\matA \vecx}_2 + \norm{\vece_1 - \vece_2}_2 
    \\&\leq 3 \sqrt{\frac{n}{m}} \cdot \norm{\vecx}_2  + \gamma \sqrt{m}
    \\&\leq 3 \sqrt{\frac{n}{m}} \cdot \norm{\vecx}_2  + \gamma \sqrt{m} \cdot \norm{\vecx}_2
    \\&= \left(3 \sqrt{\frac{n}{m}}  + \frac{\kappa}{2} \right) \cdot \norm{\vecx}_2,
\end{align*}
where the last inequality comes from the fact that $\vecx \neq 0$. Since $m < n$ and $\kappa < 1$, we can continue the above inequality to see that
\begin{align*}
\norm{\hash_{\key}(\vecy) - \hash_{\key}(\vecz)}_{2} &\leq \left( 3 \sqrt{\frac{n}{m}} + \frac{\kappa}{2} \right) \cdot \norm{\vecx}_2
\\&\leq \left( 3 \sqrt{\frac{n}{m}} + \frac{1}{2} \right) \cdot \norm{\vecx}_2
\\&< 4 \sqrt{\frac{n}{m}} \cdot \norm{\vecx}_2,
\end{align*}
as desired.
\end{proof}
As a consequence of \Cref{stat-non-expansion-proof}, we have the following:
\begin{claim}[Output Norm Bound]\label{output-norm-bound}
    With probability at least $1 - (2B+1)^{-n}$, for all $\vecy \in \cX_n$, 
\[ \norm{\floor{\matA \vecy}_{\gamma}}_{2} < \frac{4Bn}{\sqrt{m}}.\]
\end{claim}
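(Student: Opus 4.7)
The plan is to derive the output-norm bound essentially for free from the chain of inequalities already established in \Cref{stat-non-expansion-proof}, by plugging in the zero vector on one side. Since $\zero \in \cX_n$ and $\floor{\matA \cdot \zero}_\gamma = \zero$, for any $\vecy \in \cX_n$ we have $\norm{\floor{\matA\vecy}_\gamma}_2 = \norm{\floor{\matA\vecy}_\gamma - \floor{\matA\zero}_\gamma}_2$, so the same spectral-plus-rounding estimate used before yields $\norm{\floor{\matA\vecy}_\gamma}_2 \leq (3\sqrt{n/m} + \kappa/2)\norm{\vecy}_2 \leq (3\sqrt{n/m} + \kappa/2) \cdot B\sqrt{n} < 4Bn/\sqrt{m}$, where the last step uses $\kappa < 1$ and $Bn/\sqrt{m} \geq 1$.

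Concretely, I will carry out three short steps. First, I invoke \Cref{lemma:gaussian-singular-value} on $\sqrt{m}\matA \sim \Norm(0,1)^{m\times n}$ to get $\norm{\matA}_2 \leq 3\sqrt{n/m}$ simultaneously for all inputs. Second, I decompose $\floor{\matA\vecy}_\gamma = \matA\vecy - \vece$ with coordinate-wise rounding error $\vece \in [0,\gamma)^m$, so $\norm{\vece}_2 \leq \gamma\sqrt{m} = \kappa/2$, and apply the triangle inequality. Third, I use the uniform size bound $\norm{\vecy}_2 \leq B\sqrt{n}$ over $\cX_n$ to close out the calculation.

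The one nontrivial point is that the spectral-norm lemma delivers failure probability $2e^{-n/2}$, whereas the claim asserts the sharper $(2B+1)^{-n}$. To obtain the sharper bound I would abandon the uniform spectral event and instead apply a per-vector chi-squared tail bound: for fixed $\vecy$ of norm $v \leq B\sqrt{n}$, the random variable $\norm{\matA\vecy}_2^2$ is distributed as $(v^2/m)\chi_m^2$, so the event $\norm{\matA\vecy}_2 \geq 4Bn/\sqrt{m} - \kappa/2$ corresponds to $\chi_m^2$ exceeding roughly $16n$. Standard Laurent--Massart tail estimates make this probability $\exp(-\Omega(n))$, and a union bound over the $(B+1)^n$ elements of $\cX_n$ then absorbs the counting factor. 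The hard part is purely quantitative: ensuring the chi-squared exponent is large enough to dominate both the $(B+1)^n$ counting factor and the claimed $(2B+1)^n$ probability gap. This works precisely because $n > m$ forces $4\sqrt{n}$ to sit well above the typical value $\sqrt{m}$ of a chi random variable, yielding an exponent that scales with $n$ rather than merely with $m$.
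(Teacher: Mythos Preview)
Your primary approach—set $\vecz=\zero$ in the non-expansion estimate and use $\norm{\vecy}_2\le B\sqrt{n}$—is exactly what the paper does; its proof is the one-line ``This follows from \Cref{stat-non-expansion-proof} by setting $\vecz = \zero$ and using the bound $\norm{\vecy}_2 \leq B \sqrt{n}$.''

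You are right to flag the probability mismatch: the paper's own proof only delivers $1-2e^{-n/2}$, inherited verbatim from \Cref{stat-non-expansion-proof}, so the stated $1-(2B+1)^{-n}$ appears to be a slip in the paper (and the surrounding text only ever uses this claim as a ``$1-\negl(n)$'' statement). Your proposed fix via a per-vector chi-squared tail plus union bound, however, does not actually recover the stated bound for general $B$. In the worst case $\norm{\vecy}_2=B\sqrt{n}$, the event $\norm{\matA\vecy}_2\ge 4Bn/\sqrt{m}$ becomes $\chi^2_m\ge 16n$, whose Laurent--Massart tail is $e^{-cn}$ for an absolute constant $c$ \emph{independent of $B$}. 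After the union bound over $(B+1)^n$ points and the target $(2B+1)^{-n}$, you would need $(B+1)(2B+1)\le e^{c}$, which fails once $B$ exceeds a fixed constant. The issue is that the threshold $16n$ does not scale with $B$, so the exponent cannot dominate the $\Theta(n\log B)$ counting terms. Thus: your main argument matches the paper and is correct for the probability the paper's proof actually yields; the alternative route you sketch to reach the sharper stated probability has a genuine gap for large $B$, and in fact no argument with this value of $r$ can give $(2B+1)^{-n}$ uniformly in $B$.
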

\begin{proof}
    This follows from \Cref{stat-non-expansion-proof} by setting $\vecz = \zero$ and using the bound $\norm{\vecy}_2 \leq B \sqrt{n}$.
\end{proof}

Lastly, we show \Cref{item:compression}.

\begin{claim}[Compression]
    \Cref{item:compression} holds in \Cref{def-of-our-hash-primitive}. More specifically, the function $\hash_{\key}$ is compressing (by a factor of at least $2$) if
    \[(B+1)^n > 2 \left( \frac{Bn \sqrt{128 \pi e}}{\kappa \sqrt{m}}  + \frac{\sqrt{2 \pi e}}{2} \right)^{m}.\]
\end{claim}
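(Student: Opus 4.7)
The plan is to prove compression by directly bounding the cardinality of the codomain $\cY_n = \ball_m(r) \cap \gamma \Z^m$ using \Cref{exp-num-of-integer-points-in-ball} and then comparing it to $|\cX_n| = (B+1)^n$.

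First I would apply \Cref{exp-num-of-integer-points-in-ball} with the specific parameters $r = 4Bn/\sqrt{m}$ and $\gamma = \kappa/(2\sqrt{m})$ to obtain
\[ |\cY_n| \leq \left( \frac{r \sqrt{2\pi e}}{\gamma \sqrt{m}} + \frac{\sqrt{2\pi e}}{2} \right)^m. \]
The main computation is then to simplify the first term inside the parentheses. Substituting,
\[ \frac{r \sqrt{2\pi e}}{\gamma \sqrt{m}} = \frac{(4Bn/\sqrt{m}) \sqrt{2\pi e}}{(\kappa/(2\sqrt{m})) \sqrt{m}} = \frac{8 B n \sqrt{2\pi e}}{\kappa \sqrt{m}} = \frac{B n \sqrt{128 \pi e}}{\kappa \sqrt{m}}, \]
which yields exactly the upper bound on $|\cY_n|$ quoted in the claim.

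Finally, compression by a factor of at least $2$ is the statement $|\cY_n| \leq \tfrac{1}{2}|\cX_n| = \tfrac{1}{2}(B+1)^n$, so the claim follows directly from the hypothesis
\[ (B+1)^n > 2 \left( \frac{Bn \sqrt{128 \pi e}}{\kappa \sqrt{m}} + \frac{\sqrt{2\pi e}}{2} \right)^m. \]
There is no real obstacle here: the work was done in establishing \Cref{exp-num-of-integer-points-in-ball}, and this claim amounts to plugging in the chosen values of $r$ and $\gamma$ and recognizing the hypothesis as exactly the inequality $2|\cY_n| \leq |\cX_n|$. The only subtle point worth noting in passing is that the constant $C$ promised by \Cref{main-construction} can be read off from $\sqrt{128 \pi e}$ (absorbing the lower-order additive $\sqrt{2\pi e}/2$ term into the constant for large $n$), which justifies the informal statement of compression in the theorem.
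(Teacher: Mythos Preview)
Your proposal is correct and follows essentially the same approach as the paper: apply \Cref{exp-num-of-integer-points-in-ball} to bound $|\cY_n|$, substitute the chosen values $r = 4Bn/\sqrt{m}$ and $\gamma = \kappa/(2\sqrt{m})$, and read off the hypothesis as the inequality $2|\cY_n| < |\cX_n| = (B+1)^n$.
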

\begin{proof}
    Recall that $\cX_n = ([0, B] \cap \Z)^n$ and $\cY_n = \ball_m(r) \cap \gamma \Z^m$. The domain has cardinality
    \[ \left| \cX_n \right| =  \left|([0, B] \cap \Z)^n \right| = (B+1)^n.\]
    By \Cref{exp-num-of-integer-points-in-ball}, we know the codomain $\cY_n$ has cardinality
    \[ \left| \cY_n \right| = \left|\ball_m(r) \cap \gamma \Z^m \right| \leq \left( \frac{r \sqrt{2\pi e}}{\gamma \sqrt{m}}  + \frac{\sqrt{2 \pi e}}{2} \right)^{m}.\]
    Plugging in our value of $r$ from \Cref{output-norm-bound}, this becomes
    \[ \left| \cY_n \right| \leq \left( \frac{Bn \sqrt{32 \pi e}}{\gamma m}  + \frac{\sqrt{2 \pi e}}{2} \right)^{m}.\]
    Therefore, since $\gamma = \kappa/(2\sqrt{m})$, for the function to be compressing by a factor of at least $2$, it suffices that
    \[ (B+1)^n > 2 \left( \frac{Bn \sqrt{128 \pi e}}{\kappa \sqrt{m}}  + \frac{\sqrt{2 \pi e}}{2} \right)^{m}.\]
\end{proof}

\end{proof}

\section{Statistical Threshold}\label{sec:statistical-threshold}

In this section, we describe the statistical threshold for $\problem$ by classifying when the expected number of solutions $\vecx$ is at least $1$. For more precise bounds that take into account more than the first moment (for related variants of this problem), we direct the reader to the works by Aubin, Perkins and Zdeborov\'{a}~\cite{Aubin_2019}, Perkins and Xu~\cite{DBLP:conf/stoc/PerkinsX21}, and Abbe, Li and Sly~\cite{abbe2021proofcontiguityconjecturelognormal}, which confirm that the first moment bound accurately gives a sharp statistical threshold for binary symmetric perceptron models.

We begin by stating a lower tail bound on the $\chi^2_{m}$ distribution.
\begin{lemma}\label{chi-squared-lower-tail}
    There exist universal constants $C_1, C_2 > 0$ such that for all $\kappa \leq 1/2$ and $m \in \N$,
    \[ (C_1 \cdot \kappa)^m \leq \Pr_{Z \sim \chi^2_m}\left[Z \leq \kappa^2 m \right] \leq (C_2 \cdot \kappa)^m. \]
\end{lemma}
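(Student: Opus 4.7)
The plan is to handle the two inequalities by independent, standard techniques: a simple union-of-events lower bound that exploits the independence of the summands, and a Chernoff-style upper bound using the moment generating function of $\chi^2_m$.

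For the lower bound, I would write $Z = \sum_{i=1}^m Z_i^2$ for i.i.d.\ $Z_i \sim \Norm(0,1)$, and observe that the event $\{|Z_i| \leq \kappa \text{ for all } i\}$ is a subset of $\{Z \leq \kappa^2 m\}$. By independence, this gives
\[ \Pr[Z \leq \kappa^2 m] \geq \Pr[|N| \leq \kappa]^m, \qquad N \sim \Norm(0,1). \]
Lower bounding the Gaussian density on $[-\kappa,\kappa]$ by its value at $\kappa$, namely $\tfrac{1}{\sqrt{2\pi}} e^{-\kappa^2/2}$, we get $\Pr[|N| \leq \kappa] \geq \tfrac{2\kappa}{\sqrt{2\pi}} e^{-\kappa^2/2} \geq C_1 \kappa$ for some absolute constant $C_1 > 0$, using the assumption $\kappa \leq 1/2$ so that $e^{-\kappa^2/2} \geq e^{-1/8}$. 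Raising to the $m$th power yields the claimed lower bound.

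For the upper bound, I would apply Markov's inequality to $e^{-tZ}$ for $t > 0$, using the moment generating function identity $\E[e^{-tZ}] = (1 + 2t)^{-m/2}$, which gives
\[ \Pr[Z \leq \kappa^2 m] = \Pr\bigl[e^{-tZ} \geq e^{-t\kappa^2 m}\bigr] \leq (1+2t)^{-m/2}\, e^{t\kappa^2 m}. \]
Optimizing over $t$ by setting $1+2t = 1/\kappa^2$ (valid since $\kappa \leq 1/2$ ensures $t > 0$), the bound simplifies to
\[ \Pr[Z \leq \kappa^2 m] \leq \kappa^m \cdot e^{m(1 - \kappa^2)/2} \leq (\sqrt{e}\,\kappa)^m, \]
so we can take $C_2 = \sqrt{e}$.

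Neither step is delicate; the only mild subtlety is checking that the Chernoff optimum $t = (1/\kappa^2 - 1)/2$ lies in the domain where the MGF is finite, which holds for all $t > -1/2$ and is therefore always satisfied. Both inequalities are simultaneously tight up to constants: the lower bound captures the contribution of the ``all small coordinates'' event, while the Chernoff upper bound shows no other event contributes more than a constant-factor amount at the exponential scale.
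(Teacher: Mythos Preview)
Your proposal is correct and follows essentially the same approach as the paper: the lower bound via the event $\{|Z_i|\le\kappa\ \forall i\}$ and the upper bound via the Chernoff/Cram\'er method applied to the $\chi^2_m$ MGF, optimized at $1+2t=1/\kappa^2$, are exactly what the paper does (arriving at the identical expression $(\kappa\,e^{(1-\kappa^2)/2})^m$). The only cosmetic difference is that the paper invokes Cram\'er's theorem directly rather than writing out the Markov step, and lower-bounds the Gaussian density on all of $[-1/2,1/2]$ rather than at the endpoint $\kappa$.
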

\begin{proof}
    We first show the lower bound. Recall that we can characterize $Z \sim \chi^2_m$ by 
    \[ Z = \sum_{i \in [m]} X_i^2 \]
    for i.i.d. $X_i \sim \Norm(0, 1)$. If for all $i \in [m]$ it holds that $|X_i| \leq \kappa$, then
    \[ Z = \sum_{i \in [m]} X_i^2 \leq \kappa^2 m. \]
    Therefore,
    \[\Pr_{Z \sim \chi^2_m}\left[Z \leq \kappa^2 m \right] \geq  \Pr\left[ \forall i \in [m],\; |X_i| \leq \kappa \right] = \Pr_{X \sim \Norm(0,1)}\left[|X| \leq \kappa \right]^m \geq (C_1 \cdot \kappa)^m,\]
    for some universal constant $C_1$, where the right-hand-most inequality holds since the measure of $\Norm(0,1)$ is at least $e^{-1/8}/\sqrt{2\pi}$ on $[-1/2, 1/2]$ and $\kappa \leq 1/2$.

    We now show the upper bound. Using the moment-generating function of the $\chi^2_m$ distribution and the proof of Cram\'er's theorem, for all $\kappa < 1$, we have
    \begin{align*} \Pr_{Z \sim \chi^2_m}\left[Z \leq \kappa^2 m \right] &\leq \exp \left(m \cdot \frac{\ln(\kappa^2) - \kappa^2 + 1}{2} \right)
    \\&= \exp \left( m \cdot \left( \ln(\kappa) - \frac{\kappa^2}{2} + \frac{1}{2} \right) \right)
    \\&= \left( \kappa \cdot \exp \left(- \frac{\kappa^2}{2} + \frac{1}{2} \right) \right)^m
    \\&\leq (\kappa \cdot C_2)^m
    \end{align*}
    for some universal constant $C_2$, since the function $\exp(-\kappa^2/2 + 1/2)$ lies in the interval $[e^{3/8}, e^{1/2}]$ for all $\kappa \in [0, 1/2]$.
\end{proof}

We proceed to compute the statistical bound. For $\matA \sim \Norm(0,1)^{m \times n}$ and fixed $\vecx \in [-B, B]^n \cap \Z^n$, let $I_{\vecx}$ be the indicator random variable given by
\[ I_{\vecx} := \one\left[ \norm{\matA \vecx}_2 < \kappa \sqrt{m} \cdot \norm{\vecx}_2 \right] \in \{0, 1\}. \]
By linearity of expectation, we know that the expected number of solutions is given by
\begin{align*} \E \left[ \left| \left\{ \vecx \in [-B, B]^n \cap \Z^n : \norm{\matA \vecx}_2  < \kappa \sqrt{m} \cdot \norm{\vecx}_2\right\} \right| \right] &= \sum_{\vecx \in [-B, B]^n \cap \Z^n} \E \left[I_{\vecx} \right]
\\&= \sum_{\vecx \in [-B, B]^n \cap \Z^n} \Pr\left[ \norm{\matA \vecx}_2 < \kappa \sqrt{m} \cdot \norm{\vecx}_2 \right].
\end{align*}
For $\matA \sim \Norm(0,1)^{m \times n}$ and fixed $\vecx \in [-B, B]^n \cap \Z^n \setminus \{0^n\}$, by standard properties of the Gaussian distribution, the distribution of $\matA \vecx$ is $\Norm(0, \norm{\vecx}_2^2)^m$. Therefore, $\matA \vecx / \norm{\vecx}_2 \sim \Norm(0,1)^m$, so
\[ \frac{\norm{\matA \vecx}^2_2}{\norm{\vecx}_2^2} \sim \chi^2_m. \]
It follows that
\[ \Pr\left[ \norm{\matA \vecx}_2 < \kappa \sqrt{m} \cdot \norm{\vecx}_2 \right] = \Pr\left[ \frac{\norm{\matA \vecx}_2^2}{\norm{\vecx}_2^2} < \kappa^2 m \right] = \Pr_{Z \sim \chi^2_m}\left[Z < \kappa^2 m \right]. \]
For $\kappa \leq 1/2$, by \Cref{chi-squared-lower-tail}, we have
\[\Pr_{Z \sim \chi^2_m}\left[Z < \kappa^2 m \right] = \Theta \left( \kappa \right)^m. \]
Since $\vecx = 0^n$ is never a solution, it follows that
\begin{align*} \E \left[ \left| \left\{ \vecx \in [-B, B]^n \cap \Z^n : \norm{\matA \vecx}_2  < \kappa \sqrt{m} \cdot \norm{\vecx}_2\right\} \right| \right] &= \left( (2B+1)^n - 1\right) \cdot \Pr_{Z \sim \chi^2_m}\left[Z < \kappa^2 m \right]
\\&= \left( (2B+1)^n - 1\right) \cdot \Theta(\kappa)^m.
\end{align*}
Setting this to $1$ gives the statistical threshold
\[ \kappa = \Theta\left( (2B+1)^{-n/m} \right).\]

\ifnum\anon=0
\paragraph{Acknowledgements.} We thank Avi Wigderson for advice, and we are grateful to Aayush Jain, Thuy-Duong Vuong, and Justin Y. Chen for useful discussions.
\fi

\ifnum\lncs=1
\bibliographystyle{splncs04}
\else
    \ifnum\iacrcc=1
    \else
        \bibliographystyle{alpha}
    \fi
\fi
\bibliography{refs}

\end{document}